\documentclass[11pt,letterpaper]{article}
\usepackage[margin=1in]{geometry}
\usepackage{amsmath,amssymb,amsfonts,setspace,url,mathrsfs}
\usepackage{bm,enumerate}
\usepackage[boxed,linesnumbered,noend]{algorithm2e}
\DontPrintSemicolon
\let\oldnl\nl
\newcommand{\nonl}{\renewcommand{\nl}{\let\nl\oldnl}}
\usepackage{algpseudocode}
\usepackage{latexsym}
\usepackage{amsthm}
\usepackage{subfig}
\usepackage{float}
\usepackage{stmaryrd}
\usepackage{fancybox}
\usepackage{bigstrut,array,multirow}
\usepackage{here}
\usepackage{color}
\usepackage{url}
\usepackage{hyperref}
\usepackage[capitalise]{cleveref}
\usepackage{tikz}

\usepackage{mathpazo}

\newcommand{\abs}[1]{\vert #1 \vert}
\newcommand{\gen}[1]{\langle #1 \rangle}
\newcommand{\ket}[1]{\vert #1 \rangle}

\newcommand{\bul}{\ast}
\newcommand{\Sym}{\mathrm{Sym}}

\newcommand{\myK}{\Sigma}
\newcommand{\ns}{\unlhd}

\newcommand{\mybar}[1]{\lambda}

\newcommand{\group}[1]{\mathrm{gr}(#1)}
\newcommand{\bbgroup}{\mathcal{G}}
\newcommand{\bbgroupb}{\mathcal{H}}

\newcommand{\Pp}{\mathcal{R}}
\newcommand{\poly}{\mathrm{poly}}

\newcommand{\QMA}{\mathrm{QMA}}
\newcommand{\coQMA}{\mathrm{coQMA}}
\newcommand{\QCMA}{\mathrm{QCMA}}
\newcommand{\coQCMA}{\mathrm{coQCMA}}

\newcommand{\AM}{\mathrm{AM}}
\newcommand{\coAM}{\mathrm{coAM}}
\newcommand{\MA}{\mathrm{MA}}

\newcommand{\NP}{\mathrm{NP}}
\newcommand{\coNP}{\mathrm{coNP}}

\newcommand{\seed}{\lambda}
\newcommand{\valid}{\mathsf{Valid}(\seed)}
\newcommand{\good}{\mathsf{Good}}
\newcommand{\majproc}{\mathsf{Maj}}
\newcommand{\htest}{\mathsf{Iso Check}}

\newcommand{\polylog}{\mathrm{polylog}}

\newcommand{\soc}[1]{\mathrm{Soc^*}(#1)}
\newcommand{\sol}[1]{\mathrm{Sol}(#1)}
 
\newcommand{\pker}[1]{\mathrm{Pker}(#1)}
\newcommand{\tgroup}{\{e\}}

\newcommand{\Qq}{\mathsf{ReeMemb}}

\newcommand{\GL}{\mathrm{GL}}
\newcommand{\Ree}{\mathrm{R}(q)}
\newcommand{\oRee}[1]{^2\mathrm{G}_2(#1)}

\newcommand{\field}{\mathbb{F}}

\newcommand{\triv}{\{e\}}

\newtheorem{theorem}{Theorem}[section]
\newtheorem{corollary}{Corollary}[section]
\newtheorem{proposition}{Proposition}[section]
\newtheorem{definition}{Definition}[section]

\newtheorem{fact}{Fact}
\newtheorem{conjecture}{Conjecture}
\newtheorem{oproblem}{Open Problem} 

\newtheorem{claim}{Claim}
\newtheorem{lemma}{Lemma}[section]
\newenvironment{proof-sketch}{\trivlist\item[]\emph{Brief proof sketch}:}%
{\unskip\nobreak\hskip 1em plus 1fil\nobreak$\Box$
\parfillskip=0pt%
\endtrivlist}
\begin{document}

\title{Group Order is in $\QCMA$}
\author{
Fran{\c c}ois Le Gall\\
Graduate School of Mathematics\\
Nagoya University\\legall@math.nagoya-u.ac.jp
\and Harumichi Nishimura\\
Graduate School of Informatics\\
Nagoya University\\
hnishimura@i.nagoya-u.ac.jp
\and Dhara Thakkar\\
Graduate School of Mathematics\\
Nagoya University\\thakkar\_dhara@math.nagoya-u.ac.jp}
\date{}
\maketitle
\begin{abstract}
In this work, we show that verifying the order of a finite group given as a black-box is in the complexity class $\QCMA$. This solves an open problem asked by Watrous in 2000 in his seminal paper on quantum proofs and directly implies that the Group Non-Membership problem is also in the class $\QCMA$, which further proves a conjecture proposed by Aaronson and Kuperberg in 2006. Our techniques also give improved quantum upper bounds on the complexity of many other group-theoretical problems, such as group isomorphism in black-box groups.  
\end{abstract}
\section{Introduction}
\subsection{Background}
\paragraph{QMA and Group Non-Membership.}
The complexity class $\QMA$ (Quantum Merlin-Arthur) is one of the central complexity classes in quantum complexity theory. This class was first proposed by Knill \cite{knill1996} and Kitaev \cite{Kitaev99} as a natural quantum analogue of the classical class $\NP$ (or, more precisely, its randomized version called $\MA$), in which an all-powerful prover (named Merlin) sends a quantum proof to a verifier (named Arthur) who can perform bounded-error polynomial-time quantum computation. In 2000, Watrous \cite{WatrousFOCS00} established its power by showing that several group-theoretic problems are in $\QMA$ in the black-box setting. 

The concept of black-box group was first introduced (in the classical setting) by Babai and Szemerédi \cite{Babai+FOCS84} to describe group-theoretic algorithms in the most general way, without depending on how elements are concretely represented and how group operations are implemented. In a black-box group, each group element is represented by a binary string and each group operation (group multiplication and inversion) is implemented using an oracle. Any efficient algorithm in the black-box group model thus gives rise to an efficient concrete algorithm when oracle operations can be replaced by efficient procedures, which can be done for many natural group representations, including permutation groups and matrix groups. In the quantum setting introduced by Watrous~\cite{WatrousFOCS00} and further investigated in several further works \cite{Aaronson+ToC07,Ivanyos+03,LeGall+MFCS18,LeGallSTACS10,WatrousSTOC01}, the oracles should be able to handle quantum superpositions. Additionally, in the quantum setting, all these works assume that the group has unique encoding, i.e., each element should be encoded using a unique string (without unique encoding, even the most basic quantum primitives, such as computing the order of one element of the group, cannot be implemented).

The central problem considered in \cite{WatrousFOCS00} is the Group Non-Membership problem defined below (where, for any elements $g_1,\ldots,g_k$ of a group, we denote by $\gen{g_1,\ldots,g_k}$ the subgroup generated by $g_1,\ldots,g_k$): 

\begin{center}
\underline{Group Non-Membership}\\\vspace{-3mm}
\begin{flushleft}
\begin{tabular}{ll}
Instance: & Group elements $g_1,\ldots,g_k$ and $h$ in some finite group $\bbgroup$.\\
Question: & Is $h$ outside the group generated by $g_1,\ldots,g_k$ (i.e., is $h\notin G$ with $G=\gen{g_1,\ldots,g_k}$)\,?
\end{tabular}
\end{flushleft}
\end{center}

Group Non-Membership is significantly more challenging than its complement, Group Membership, which asks if $h\in\gen{g_1,\ldots,g_k}$: while Ref.~\cite{Babai+FOCS84} showed that Group Membership is in the class $\NP$, the best known classical upper bound for Group Non-Membership is $\AM$ (the class of problems that can be solved by a constant-round interactive proof system with public coins), by Babai \cite{Babai92}. We refer to Figure~\ref{fig} for an illustration of the relations between the complexity classes discussed in this paper.

\begin{figure}[!thbp]
\begin{center}
\begin{tikzpicture}[main/.style = {circle}, scale=1.0] 
\node[main](NP) at (0,0) {{\large $\NP$}};
\node[main](MA) at (0,1.5) {{\large$\MA$}};

\node[main](QCMA) at (4.2,2.8) {{\large$\QCMA\, (=\QCMA_1$)}};
\node[main](QCMAt) at (3,3.1) {};
\node[main](QCMAb) at (3,2.5) {};

\node[main](QMA1) at (3,4.3) {{\large$\QMA_1$}};
\node[main](QMA1t) at (3,4.6) {};
\node[main](QMA1b) at (3,4) {};

\node[main](QMA) at (3,6) {{\large$\QMA$}};
\node[main](QMAb) at (3,5.7) {};

\node[main](AM) at (0,7) {{\large$\AM$}};
\node[main](SIGMA) at (-3,4) {{\large$\Sigma_2^{\mathrm{P}}$}};

\draw[thick] (NP) -- (MA);
\draw[thick] (MA) -- (QCMAb);
\draw[thick] (QCMAt) -- (QMA1b);
\draw[thick] (QMA1t) -- (QMAb);
\draw[thick] (MA) -- (SIGMA);
\draw[thick] (MA) -- (AM);



\end{tikzpicture}
    \caption{Known relations between the main complexity classes discussed in this paper. The inclusion $\MA\subseteq\Sigma_2^{\mathrm{P}}$ was shown by Babai \cite{Babai85}. The equality $\QCMA=\QCMA_1$ was shown by Jordan, Kobayashi, Nagaj and Nishimura \cite{Jordan+12}. All the other relations follow directly from the definitions.}
    \label{fig}
\end{center}
\end{figure}
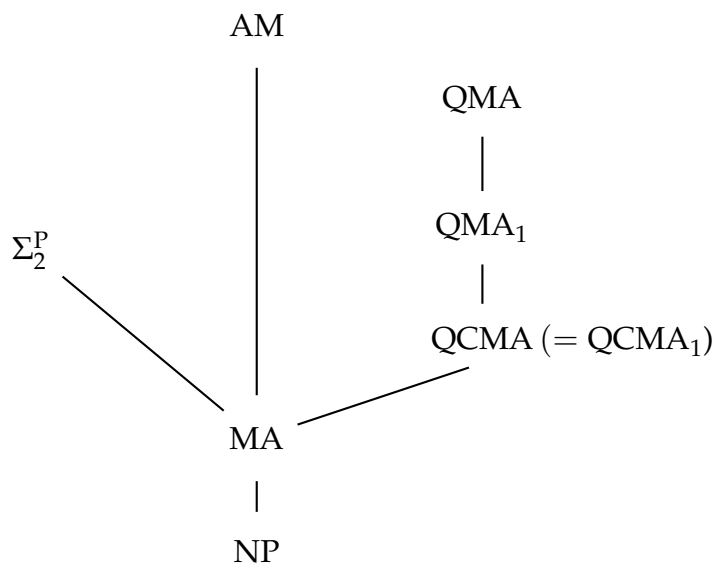

Watrous~\cite{WatrousFOCS00} showed that Group Non-Membership is in $\QMA$. 
To prove this result, the quantum proof received from Merlin is the quantum superposition of all the elements in the group $G=\gen{g_1,\ldots,g_k}$. Arthur checks that the quantum proof is valid (by checking that the quantum state is invariant under multiplication by $g_1,\ldots,g_k$) and then checks that this quantum state is mapped to an orthogonal state when each element in the superposition is multiplied by $h$ (which guarantees that $h\notin G$). The key feature of this protocol is that it uses a quantum proof. Indeed, Watrous \cite{WatrousFOCS00} also showed that there exist black-box groups for which Group Non-Membership is not in $\MA$, which shows that $\QMA$ is strictly more powerful than $\MA$ in the black-box setting (from a complexity-theoretic perspective, this can be interpreted as an oracle separation between $\QMA$ and $\MA$). Additionally, Watrous~\cite{WatrousFOCS00} showed that several additional group-theoretic problems (discussed later) are also contained in $\QMA$ via (fairly straightforward) reductions to Group Non-Membership. Besides Group Non-Membership being one of the most fundamental problems in $\QMA$, Watrous' protocol is often used in educational material to illustrate the power of quantum proofs (see, e.g., \cite{Aaronson16,ODonnel15,TCS-068,deWolf19}), due to its simplicity.
We also mention a later result by Grilo, Kerenidis and Sikora \cite{Grilo+16}, which showed that Group Non-Membership is actually in $\QMA_1$, the one-sided version of $\QMA$.

\paragraph{Group Non-Membership and QCMA.}
An important subclass of $\QMA$ is the class $\QCMA$ corresponding to problems where the proof is classical. One of the main open problems in quantum complexity theory, first posed by Aharonov and Naveh \cite{Aharonov+02}, is whether there exists a classical oracle separating $\QMA$ and $\QCMA$ (we refer to \cite{Aaronson+ToC07,Ben-David+24,Fefferman+18,li+ITCS24,Liu+24,Natarajan+24,Zhandry24} for partial progress).
In 2006, Aaronson and Kuperberg \cite{Aaronson+ToC07} showed that Group Non-Membership is actually in the class $\QCMA$ under some group-theoretic assumptions, which gives evidence that Group Non-Membership is not a good candidate for a separation between $\QMA$ and $\QCMA$. Aaronson and Kuperberg further conjectured that Group Non-Membership is actually in $\QCMA$ unconditionally:
\begin{conjecture}[\cite{Aaronson+ToC07}]\label{conj}
Group Non-Membership is in $\QCMA$.
\end{conjecture}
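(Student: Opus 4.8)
The plan is to reduce Group Non-Membership to the problem of \emph{verifying the order} of a black-box group, i.e.\ to the language $\{(g_1,\dots,g_k,N)\,:\,|\gen{g_1,\dots,g_k}|=N\}$, and then to place this latter ``Group Order'' problem in $\QCMA$ --- which is exactly the statement promised by the title. The reduction itself is immediate: for a Group Non-Membership instance $(g_1,\dots,g_k,h)$ Merlin sends two integers $N$ and $N'$, claimed to be $|\gen{g_1,\dots,g_k}|$ and $|\gen{g_1,\dots,g_k,h}|$, and Arthur runs the Group Order verification protocol on both claims, accepting iff both pass and $N'>N$. Since $\gen{g_1,\dots,g_k}\le\gen{g_1,\dots,g_k,h}$ always, with equality precisely when $h\in\gen{g_1,\dots,g_k}$, completeness and soundness follow from those of the Group Order protocol. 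So the entire difficulty is concentrated in showing \textbf{Group Order is in $\QCMA$}.

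For the order verification, the intended classical witness is a composition series $\{e\}=G_0\trianglelefteq G_1\trianglelefteq\cdots\trianglelefteq G_m=G$ (so $m=O(\log N)$), together with, for each step, generators of $G_i$, the isomorphism type $Q_i$ of the factor $G_i/G_{i-1}$ (a cyclic group of prime order, or a named finite simple group), lifts to $G_i$ of a bounded generating set of $Q_i$, and a bundle of straight-line programs. Exactly as in the Babai--Szemer\'edi framework, the straight-line programs let Arthur verify \emph{classically} all the ``easy'' facts: that the listed elements generate the claimed subgroups, that $G_{i-1}$ is normalized, that the relators of a \emph{short presentation} of $Q_i$ (which exists unconditionally for every finite simple group) hold modulo $G_{i-1}$, and hence that $[G_i:G_{i-1}]\le|Q_i|$ and $|G|\le N:=\prod_i|Q_i|$. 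Because $Q_i$ is simple, $G_i/G_{i-1}$ is then forced to be either trivial or isomorphic to $Q_i$, so the order equality $|G|=N$ is equivalent to the single assertion, for every $i$, that the step is \emph{non-degenerate}: $G_i\ne G_{i-1}$.

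The quantum part establishes this non-degeneracy at every level --- equivalently the lower bound $|G|\ge N$ --- and is the heart of the argument. Working bottom-up, Arthur maintains the uniform superposition $\ket{G_i}=|G_i|^{-1/2}\sum_{x\in G_i}\ket{x}$, and from $\ket{G_{i-1}}$ he checks $G_i\ne G_{i-1}$ by Watrous's own test: apply the extra generator $q$ and verify that $\ket{G_{i-1}}$ is sent to a nearly orthogonal state; if instead it is sent back to (nearly) itself then $q\in G_{i-1}$, which together with the straight-line-program facts forces $G_i=G_{i-1}$ and Arthur rejects. The state $\ket{G_i}$ is built from $\ket{G_{i-1}}$ coset-layer by coset-layer: create the uniform superposition over $Q_i$, multiply $\ket{G_{i-1}}$ on the right by coset representatives indexed by $Q_i$, and then \emph{uncompute} the $Q_i$-register, i.e.\ coherently send an element of $G_i$ to its coset modulo $G_{i-1}$. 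This ``sifting'' is exactly where the classical theory gets stuck on subgroup membership and where quantum computation is essential: for the cyclic/abelian parts it is done with Shor's order-finding and discrete-logarithm algorithms, and for the non-abelian simple factors with constructive recognition of simple black-box groups (whose number-theoretic subroutines become unconditional quantumly, and which rests on Babai-style nearly-uniform random generation), so that Arthur can compute inside each quotient $G_i/G_{i-1}$ once he can already sift in $G_{i-1}$. Having obtained $\ket{G}=\ket{G_m}$ the protocol is complete; for the Group Non-Membership application one may alternatively feed $\ket{G}$ and $\ket{\gen{g_1,\dots,g_k,h}}$ directly into Watrous's original test.

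I expect the main obstacle to be precisely this coherent sifting and its bookkeeping: making the coordinate-extraction unitary at each of the $m=O(\log N)$ levels genuinely efficient and (up to inverse-polynomial error) exact, controlling how these errors accumulate in $\ket{G_m}$, and --- on the soundness side --- arguing that whenever Merlin inflates some index (the only way to cheat, since the straight-line-program checks already certify $|G|\le N$) the corresponding non-degeneracy or recognition check fails with high probability. Packaging the constructive-recognition machinery for all finite simple groups into one clean subroutine usable coherently, and correctly handling the (non-split, non-solvable) extensions that glue the composition factors together, is the part I expect to demand the most care.
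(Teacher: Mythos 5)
Your outer reduction from Group Non-Membership to Group Order Verification is exactly the one the paper uses (and is essentially immediate), so the real question is your sketch for Group Order Verification, which diverges from the paper in two places that are in fact the whole difficulty.

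First, a factual error: you assert that a short (polylogarithmic-length) presentation ``exists unconditionally for every finite simple group.'' This is not known. The Ree groups of rank one, $\Ree$, are precisely the open case (\cref{th:p1}), and handling them is why the paper needs the entire machinery of \cref{sec:Ree} (a randomized homomorphism test combined with quantum constructive membership via the Babai--Beals--Seress matrix-group algorithm). If you restrict to the classical Babai--Szemer\'edi ``relators hold modulo $G_{i-1}$'' certificate, the Ree case simply falls through.

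Second, and more fundamentally, your quantum part hinges on ``coherent sifting'': uncomputing the $Q_i$-register requires a unitary that maps $x\in G_i$ to its coset $xG_{i-1}$, i.e.\ performing (constructive) membership testing in $G_{i-1}$. But $G_{i-1}$ here is an \emph{arbitrary} subgroup in an arbitrary black-box composition series, and no polynomial-time quantum algorithm is known for membership in general black-box groups --- that is exactly the problem being solved. You flag this as ``the part I expect to demand the most care,'' but it is not a bookkeeping issue: as stated, the induction is circular (sifting in $G_{i-1}$ is just as hard as the original problem). The paper's way around this is its central new idea: it does \emph{not} work with an arbitrary composition series. It uses the Babai--Beals filtration to produce a ``nice decomposition'' (\cref{def:nice}, \cref{th:group}) in which every isomorphism check is pushed down to a quotient of the form $\langle H_0,\beta_i,\gamma_i\rangle/H_0$ with $H_0$ a single \emph{solvable} group. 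All of the membership testing Arthur ever needs is membership in $H_0$, which Watrous' solvable-group algorithm (\cref{prop:Watrous}) handles. Soundness against an inflated chain is then controlled not by orthogonality of $\ket{G_i}$ and $\ket{G_{i-1}}$, but by \cref{th:gmain}, which shows that even a dishonest decomposition forces $|H_i/H_{i-1}|$ to divide $|\group{z_i}|$, plus a separate check (via composition factors of $G/\pker{G}$) that Merlin's $K$ really equals $\pker{G}$. Without the reduction to a fixed solvable base group, your sketch does not close.
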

No progress has been made on this conjecture since 2006.

\paragraph{Group Order Verification.}
As explained above, Group Non-Membership is a fundamental task in group theory. 
An even more powerful primitive is computing the order of a group. For a group~$G$, we write its order (i.e., the number of elements in $G$) as $\abs{G}$.
We introduce the decision version of this problem as follows:
\begin{center}
\underline{Group Order Verification}\\\vspace{-3mm}
\begin{flushleft}
\begin{tabular}{ll}
Instance: & Group elements $g_1,\ldots,g_k$ in some finite group $\bbgroup$, and a positive integer $m$.\\
Question: & 
Is the order of the group generated by $g_1,\ldots,g_k$ equal to $m$ (i.e., is $\abs{G}=m$\\
&\hspace{101mm}
with $G=\gen{g_1,\ldots,g_k}$)\,?
\end{tabular}
\end{flushleft}
\end{center}\vspace{2mm}

Group Non-Membership reduces to Group Order Verification since $h\notin G$ if and only if $\abs{G}\neq \abs{\gen{g_1,\ldots,g_k,h}}$.\footnote{Note that this reduction is nondeterministic: it assumes the existence of a prover who can ``guess'' the orders of the two groups $G$ and $\gen{g_1,\ldots,g_k,h}$, which can then be verified using a protocol for Group Order Verification. Such a nondeterministic reduction will be enough since in this paper we only consider complexity classes with a prover.}
The best known upper bound on Group Order Verification is $\AM\cap \coAM$, by Babai \cite{Babai92}.
Since Group Non-Membership, which belongs to $\QMA$, reduces to Group Order Verification,  
this leads to one of the main open problems proposed in \cite{WatrousFOCS00}:

\begin{oproblem}[\cite{WatrousFOCS00}]\label{oprob}
Is Group Order Verification in $\QMA$?
\end{oproblem}
No progress has been made on this problem since 2000.

\subsection{Our results}\label{subsec:results}
\paragraph{Statement of our results.}
In this paper, we prove \cref{conj} and solve \cref{oprob}. Here is our main result:

\begin{theorem}\label{th:order}
Group Order Verification is in $\QCMA$.
\end{theorem}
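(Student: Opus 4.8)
The plan is to reduce the verification of $\abs G=m$, for $G=\gen{g_1,\ldots,g_k}$, to certifying the order of every factor in a composition series of $G$. Merlin sends a chain $\triv=G_0\nor G_1\nor\cdots\nor G_t=G$ with $t=O(\log\abs G)$, each $G_i$ given by a generating set whose elements are presented as straight-line programs over $g_1,\ldots,g_k$; since Group Membership is in $\NP$~\cite{Babai+FOCS84}, Merlin can also supply straight-line-program witnesses that $G_i\le G_{i+1}$, that each generator of $G_{i+1}$ normalises $G_i$, and that each $g_j$ lies in $G_t$, all of which Arthur verifies by plain evaluation. Thus Arthur is convinced that this is a genuine chain of subgroups from $\triv$ up to $G$ with $G_i\nor G_{i+1}$, so that $\abs G=\prod_i[G_{i+1}:G_i]$. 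Merlin labels each quotient $G_{i+1}/G_i$ as either cyclic of a (given) prime order $p_i$ or as a named non-abelian simple group, and claims a value $n_i$ for its order; it remains for Arthur to check $n_i=[G_{i+1}:G_i]$ for every $i$ and the arithmetic identity $\prod_i n_i=m$. This ``exact factorisation'' gives $\abs G=m$ at once, rather than via separate arguments for $\abs G\le m$ and for $\abs G\ge m$ (the latter being the direction akin to Group Non-Membership).

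The factors are handled by putting quantum power on the solvable pieces and the classification of finite simple groups on the rest. Merlin arranges the chain so that it alternates stretches lying inside a solvable radical (refined into cyclic prime factors) with the non-abelian simple composition factors coming from the socles of the successive quotients --- the usual iterated solvable-radical/socle decomposition. For a cyclic factor $G_{i+1}/G_i=\gen{xG_i}$ of prime order $p_i$, Arthur checks that $p_i$ is prime, that $x^{p_i}\in G_i$, and that $x\notin G_i$, which forces $[G_{i+1}:G_i]=p_i$. For a non-abelian factor, Merlin names the finite simple group $T$ that $G_{i+1}/G_i$ is claimed to be, and supplies elements of $G_{i+1}$ whose images generate the factor and satisfy the relators of a \emph{short presentation} of $T$ of the type constructed by Guralnick, Kantor, Kassabov and Lubotzky; Arthur evaluates the polynomially many, polynomially long relators, checks that they lie in $G_i$, and checks that the images are non-trivial modulo $G_i$. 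Since $T$ is simple this forces $G_{i+1}/G_i\cong T$, so $[G_{i+1}:G_i]=\abs T$ equals the value of the order formula for $T$ at the parameters Merlin names. The families not covered by short presentations with a bounded number of relators --- notably the Ree groups~$\oRee{q}$ --- are given their own dedicated order/membership certificate. The number of levels is polynomial because the order strictly decreases at each one.

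The step I expect to be the main obstacle is that every check above must be carried out \emph{inside the quotient groups} $G_{i+1}/G_i$: Arthur must represent cosets of $G_i$ canonically and decide membership and non-membership in $G_i$, and in $\QCMA$ Merlin is restricted to one classical message and so cannot answer adaptive ``is $x\in G_i$?'' queries. My approach would be to make the certificate self-bootstrapping, so that the structure already certified for $G_i$ itself furnishes the needed procedures: for the part of $G_i$ inside its solvable radical this comes from the known quantum algorithms for solvable black-box groups with unique encoding~\cite{WatrousSTOC01,Ivanyos+03} (which compute orders, test membership, prepare the uniform superposition over a solvable subgroup, and allow computation in quotients by it), and for the non-abelian simple factors above it from their explicit realisations (e.g., as matrix groups over the relevant finite fields) coming with the names Merlin supplies --- so that each quotient in which Arthur computes is again a black-box group with unique encoding on which the same machinery applies. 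Making these ingredients interlock uniformly over every isomorphism type in the classification (with the Ree groups again needing a separate argument), and keeping the error probabilities of the quantum subroutines and of the near-uniform random sampling under control through the $O(\log\abs G)$ levels of the construction, is where the real work lies; the remaining ingredients --- straight-line programs, short presentations, and the final arithmetic --- are comparatively routine once this scaffolding is in place.
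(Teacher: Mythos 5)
Your proposal shares the paper's high-level skeleton --- Merlin sends a composition chain, cyclic factors are handled via solvable-group machinery, non-abelian simple factors via short presentations, and the Ree groups of rank one are the problem case needing special treatment --- but it is missing the single structural idea that makes the whole thing go through, and as a result the step you yourself flag as ``the main obstacle'' is not resolved by what you propose.

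The obstacle is exactly as you describe it: for each non-abelian factor $G_{i+1}/G_i$, Arthur must decide membership and non-membership in $G_i$ on elements that Merlin cannot have predicted (for example the random samples in a homomorphism test, or the evaluation of relators on freshly sampled elements), and $G_i$ is in general neither solvable nor of any other tractable shape. Your proposed fix --- a ``self-bootstrapping'' certificate in which each quotient $G_{i+1}/G_i$ becomes a black-box group with unique encoding by gluing Watrous' solvable-group algorithms for the low part with an explicit matrix realisation of the simple factor above it --- does not work: quotients of a black-box group do not inherit a unique encoding (cosets have many representative strings), and this is not merely a technicality. The paper explicitly points out (Remark~2 after \cref{prop:ub}) that even the \emph{solvable} quotient $K/H_s$ cannot be fed to \cref{prop:Watrous} for precisely this reason, and is forced to certify its order classically via a solvability certificate instead. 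Turning a non-abelian quotient into a unique-encoding black-box group would, roughly speaking, require already knowing an isomorphism to a concrete group, which is the thing being verified.

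What the paper does instead is to arrange (via the Babai--Beals filtration and \cref{th:group}) a decomposition $H_0\ns H_1\ns\cdots\ns H_s\ns\pker{G}$ in which $H_0$ is solvable and, crucially, each factor satisfies $H_i/H_{i-1}\cong\gen{H_0,\beta_i,\gamma_i}/H_0$ (Condition $(\star)$). This collapses every quotient computation onto a single fixed solvable subgroup $H_0$, for which Watrous' algorithm gives efficient membership testing with no help from Merlin. That is the lever that makes the homomorphism test for the Ree case (\cref{prop:Ree}) and the presentation checks (\cref{th:p2}) implementable. Your chain $G_0\nor\cdots\nor G_t=G$ has no analogous reduction, so Arthur is stuck testing membership in arbitrary $G_i$.

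Two further divergences compound this. First, your protocol aims to certify $\abs{G}=m$ in one shot, which forces non-membership tests (``$x\notin G_i$'', ``the images are non-trivial modulo $G_i$''). But Group Non-Membership in arbitrary subgroups is exactly what you do not yet have in $\QCMA$ --- the paper derives it as a \emph{corollary} of Group Order Verification, not the other way round. The paper sidesteps this by proving divisibility in each direction separately (\cref{prop:lb} and \cref{prop:ub}); the upper-bound direction remains sound even if a dishonest Merlin sends a chain whose factors are smaller than claimed (this is what \cref{th:gmain} controls), and the lower-bound direction uses Sylow subgroups, which are solvable and hence amenable to \cref{prop:Watrous}. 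Second, the paper does not iterate the solvable-radical/socle decomposition over the whole of $G$; it applies the filtration once, computes $\pker{G}$ and $\abs{G/\pker{G}}$ directly (using that $G/\pker{G}$ embeds in a small symmetric group), and only decomposes $\pker{G}$ with Merlin's help. Without this you also need a separate argument that the composition factors of $G/\pker{G}$ are ``easy''; in the paper this is \cref{th:smallk}.

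So: right ingredients (Babai--Szemer\'edi certificates, short presentations, Watrous for solvable groups, special handling for $\Ree$), right identification of the central difficulty, but no working mechanism for membership in the non-solvable $G_i$, and a direct-equality design that would require a non-membership primitive you do not yet have. The missing idea is Condition~$(\star)$ of \cref{th:group}, together with the decision to prove the two divisibility directions separately.
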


\cref{th:order} solves \cref{oprob}. Our result is actually significantly stronger: it shows that Group Order Verification is not only in $\QMA$, but also in $\QCMA$. As observed in \cite{Babai92}, an upper bound on the complexity of Group Order Verification leads to the same upper bound for the complement: in order to verify that $\abs{G}\neq m$, Merlin can send the true order of $G$ and then Arthur can use the protocol of \cref{th:order} for checking whether it is really the true order and differs from $m$.
We thus obtain the following stronger statement:

\begin{corollary}\label{cor:order}
Group Order Verification is in $\QCMA\cap\coQCMA$.
\end{corollary}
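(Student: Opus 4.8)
The first half, membership in $\QCMA$, is precisely \cref{th:order}, so there is nothing to do there; the only content of \cref{cor:order} is to show that the complement of Group Order Verification also lies in $\QCMA$. This complement is the problem of deciding, given $g_1,\ldots,g_k\in\bbgroup$ and a positive integer $m$, whether $\abs{\gen{g_1,\ldots,g_k}}\neq m$. The plan is the classical ``guess the true value'' reduction, used by Babai \cite{Babai92} already at the level of $\AM\cap\coAM$: it applies verbatim here because the $\QCMA$ protocol behind \cref{th:order} verifies the order of $G=\gen{g_1,\ldots,g_k}$ against an \emph{arbitrary} target supplied as part of the input, not against a single hard-wired number.

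Before composing, I would first apply the standard error reduction for $\QCMA$ to the protocol of \cref{th:order} --- run its verifier many times on independent copies of the (classical, hence freely reusable) witness and take the majority vote --- so that its completeness and soundness errors are, say, below $2^{-n}$. The protocol for the complement is then: on input $(g_1,\ldots,g_k,m)$, Arthur expects Merlin to send a positive integer $m'$, encoded with polynomially many bits since $\abs{G}\le 2^n$ for $n$-bit element encodings, together with the classical witness that the \cref{th:order} verifier would use on the instance $(g_1,\ldots,g_k,m')$. Arthur first checks deterministically that $m'\neq m$ (and that $m'$ is a legal value); if this fails he rejects. Otherwise he runs the amplified \cref{th:order} verifier on $(g_1,\ldots,g_k,m')$ with the supplied witness and accepts iff that verifier accepts.

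For completeness, if $\abs{G}\neq m$ then honest Merlin sets $m'=\abs{G}$ and supplies a correct witness for $\abs{G}=m'$; the check $m'\neq m$ passes and the amplified verifier accepts with probability at least $1-2^{-n}$. For soundness, if $\abs{G}=m$, then for every message $(m',w)$, either $m'=m$ and Arthur rejects on the deterministic check, or $m'\neq m=\abs{G}$, in which case $(g_1,\ldots,g_k,m')$ is a no-instance of Group Order Verification and the amplified verifier rejects $w$ with probability at least $1-2^{-n}$. Thus the complement is in $\QCMA$, so Group Order Verification is in $\coQCMA$, and together with \cref{th:order} this gives $\QCMA\cap\coQCMA$.

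The whole difficulty of \cref{cor:order} sits inside \cref{th:order}; this last step has no genuine obstacle. The only points that require a moment's care are (i) using the \cref{th:order} protocol as a true black box that checks $\abs{G}$ against the value $m'$ \emph{chosen by Merlin}, rather than a protocol tied to the original $m$, and (ii) amplifying its error before the composition so that the single deterministic comparison plus one protocol run stays comfortably below error $1/3$ --- both are entirely routine.
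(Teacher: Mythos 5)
Your proof is correct and takes essentially the same approach as the paper: Merlin sends the true order $m'$, Arthur checks $m'\neq m$ and then runs the protocol of \cref{th:order} on $(g_1,\ldots,g_k,m')$, exactly the Babai-style reduction the paper invokes. The extra care you take with amplification is standard and harmless.
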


Since Group Non-Membership reduces to Group Order Verification, 
as another immediate corollary, we obtain a proof of \cref{conj}:

\begin{corollary}\label{th:GNM}
Group Non-Membership is in $\QCMA$.
\end{corollary}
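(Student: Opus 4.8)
The plan is to reduce Group Non-Membership to two instances of Group Order Verification and invoke \cref{th:order}. Recall that in a black-box group $\bbgroup$ every element is encoded by a binary string of polynomial length, so $\abs{G}\le\abs{\bbgroup}$ is at most exponential in the input length, and all relevant group orders can be written down with polynomially many bits. The key group-theoretic fact is that $\gen{g_1,\ldots,g_k}\le\gen{g_1,\ldots,g_k,h}$ always, with equality if and only if $h\in\gen{g_1,\ldots,g_k}$; in particular, $h\notin\gen{g_1,\ldots,g_k}$ if and only if $\abs{\gen{g_1,\ldots,g_k}}\ne\abs{\gen{g_1,\ldots,g_k,h}}$.

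On a yes-instance (i.e., $h\notin G$ with $G=\gen{g_1,\ldots,g_k}$), Merlin sends Arthur two positive integers $m,m'$---intended to be $m=\abs{G}$ and $m'=\abs{\gen{g_1,\ldots,g_k,h}}$---together with the two classical witnesses guaranteed by the $\QCMA$ protocol of \cref{th:order} for the claims ``$\abs{\gen{g_1,\ldots,g_k}}=m$'' and ``$\abs{\gen{g_1,\ldots,g_k,h}}=m'$''. Arthur first checks that $m\ne m'$ (rejecting otherwise), and then runs the two Group Order Verification verifiers on these two claims using the supplied witnesses, accepting if and only if both accept. For completeness, an honest Merlin supplies the true orders, which are distinct by the fact above, together with the true witnesses, so both subverifiers accept with high probability. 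For soundness, suppose $h\in G$; then $\abs{\gen{g_1,\ldots,g_k}}=\abs{\gen{g_1,\ldots,g_k,h}}$, so for every choice of $m\ne m'$ at least one of the two claims presented to Arthur is false, and by the soundness of the $\QCMA$ protocol of \cref{th:order} the corresponding subverifier rejects with high probability no matter which witness Merlin supplies; hence Arthur rejects with high probability.

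The only point requiring (routine) care is the management of error probabilities: running two copies of a bounded-error $\QCMA$ verifier by itself gives only constant soundness and completeness error, which is then reduced to the standard $\QCMA$ thresholds by first amplifying each subprotocol via parallel repetition (equivalently, by invoking the amplified form of \cref{th:order}). Since the witness remains classical and Arthur's computation remains polynomial-time quantum, this yields a valid $\QCMA$ protocol for Group Non-Membership.
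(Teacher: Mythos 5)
Your proof is correct and takes essentially the same approach as the paper: Group Non-Membership is reduced nondeterministically to two instances of Group Order Verification via the equivalence $h\notin G \Leftrightarrow \abs{G}\ne\abs{\gen{g_1,\ldots,g_k,h}}$, with Merlin guessing both orders and supplying witnesses for both verification claims, and standard amplification handling the composition of the two bounded-error subverifiers.
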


\begin{table}[H]
\centering
\caption{This table compares our new upper bounds with the upper bounds from the literature. 
}\label{table:results}
\renewcommand{\arraystretch}{1.2}
\begin{tabular}{|l|l|l|l|}
    \hline
    \multirow{2}{*}{Problem} & \multicolumn{2}{c|}{Prior upper bounds} & \multicolumn{1}{c|}{New quantum} \\\cline{2-3}
     & \multicolumn{1}{c|}{Classical} & \multicolumn{1}{c|}{Quantum} &\multicolumn{1}{c|}{upper bound} \\\hline\hline
     \multirow{1}{*}{Group Order Verification}& \multirow{1}{*}{$\AM\cap\coAM$ \;\;\;\cite{Babai92}}& - & \multirow{1}{*}{$\QCMA\cap\coQCMA$ (Cor.~\ref{cor:order})}\\\hline
    \multirow{1}{*}{Group Non-Membership} & \multirow{1}{*}{$\AM\cap\coNP$ \cite{Babai92,Babai+FOCS84}}& $\QMA$\;\;\;\;\,\cite{WatrousFOCS00} & \multirow{1}{*}{$\QCMA$ \:\hspace{19mm} (Cor. \ref{th:GNM})}\\
    \hline
    Group Isomorphism & $\AM\cap\Sigma_2^{\mathrm{P}}$ \;\;\;\;\cite{Babai92,Babai+FOCS84} & - & $\QCMA$ \:\hspace{19mm} (Cor. \ref{cor:result1})\\\hline
    Homomorphism, & \multirow{6}{*}{$\AM\cap \coAM$ \;\;\;\cite{Babai92}}& \multirow{2}{*}{-} & \multirow{4}{*}{$\QCMA\cap\coQCMA$ (Cor. \ref{cor:result2})}\\
    Minimal Normal Subg.&&&\\\cline{1-1}\cline{3-3}
    Proper Subgroup && \multirow{1}{*}{$\QMA$\;\;\;\;\,\cite{WatrousFOCS00}} & \\
    \cline{1-1}\cline{3-3}
    Simple Group&  & \multirow{3}{*}{$\coQMA$\;\cite{WatrousFOCS00}} & \\\cline{1-1}\cline{4-4}
    Intersection, Centralizer,&  &  &\multirow{2}{*}{$\coQCMA$ \hspace{17mm}(Cor. \ref{cor:result3})} \\
    Maximal Normal Subg.&&&\\\hline
\end{tabular}\vspace{5mm}
\end{table}

Other than Group Order Verification and Group Non-Membership, we obtain new quantum upper bounds for the complexity of many group-theoretic problems: Group Isomorphism, Homomorphism, Minimal Normal Subgroup, Proper Subgroup, Simple Group, Intersection, Centralizer and Maximal Normal Subgroup (the formal definition of these problems is given in \cref{sec:other}). These eight problems have been considered in the classical setting in \cite{Babai92,Babai+FOCS84}. The last five problems have been considered in the quantum setting in \cite{WatrousFOCS00}. By combining \cref{cor:order} with the proof techniques from \cite{Babai92, WatrousFOCS00}, we easily obtain the following results:

\begin{corollary}\label{cor:result1}
Group Isomorphism is in $\QCMA$.   
\end{corollary}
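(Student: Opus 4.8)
The plan is to carry out the standard $\NP$-type certificate for Group Isomorphism of \cite{Babai+FOCS84,Babai92}, using the $\QCMA$ protocol of \cref{cor:order} in place of each order computation. Let the instance be two black-box groups $G=\gen{g_1,\ldots,g_k}\le\bbgroup$ and $H=\gen{h_1,\ldots,h_\ell}\le\bbgroupb$. Since an isomorphism is determined by the images of the generators, Merlin's classical message will consist of: a positive integer $m$ (the claimed common order $\abs{G}=\abs{H}$); strings $\tilde g_1,\ldots,\tilde g_k$ in $\bbgroupb$ (the claimed images $\phi(g_i)$); for each $i$, a straight-line program witnessing $\tilde g_i\in H$ (Group Membership is in $\NP$ by \cite{Babai+FOCS84}); and the certificates required by \cref{cor:order} for the order claims (b)--(e) below.

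First, Arthur amplifies every $\QCMA$ subroutine so that, after a union bound over the polynomially many invocations, the overall error is exponentially small. He then accepts iff all of the following hold: (a) each straight-line program certifies $\tilde g_i\in H$; (b) $\abs{G}=m$; (c) $\abs{H}=m$; (d) $\abs{\gen{(g_1,\tilde g_1),\ldots,(g_k,\tilde g_k)}}=m$, where the tuples are regarded as elements of the direct product $\bbgroup\times\bbgroupb$, whose oracle Arthur simulates componentwise and which inherits unique encoding; and (e) $\abs{\gen{\tilde g_1,\ldots,\tilde g_k}}=m$. Each of (b)--(e) is an instance of Group Order Verification and is decided by the protocol behind \cref{cor:order}. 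Completeness is clear: if $G\cong H$, fix an isomorphism $\phi$, set $\tilde g_i=\phi(g_i)$ and $m=\abs{G}$, and supply the corresponding sub-certificates; all checks then pass up to the subroutine error.

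For soundness, suppose all checks pass. The first projection $\pi_1$ maps $K:=\gen{(g_i,\tilde g_i):i}\le\bbgroup\times\bbgroupb$ onto $G$ (since the $g_i$ generate $G$), so by (b) and (d) it is a bijection; hence $K$ is the graph of a group homomorphism $\psi\colon G\to\bbgroupb$ with $\psi(g_i)=\tilde g_i$. By (a), $\mathrm{im}\,\psi=\gen{\tilde g_1,\ldots,\tilde g_k}\le H$; by (e) together with (b), $\abs{\mathrm{im}\,\psi}=m=\abs{G}$, so $\psi$ is injective; and by (c), $\abs{H}=m=\abs{\mathrm{im}\,\psi}$, whence $\mathrm{im}\,\psi=H$. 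Thus $\psi$ is an isomorphism and $G\cong H$. I expect no genuine obstacle once \cref{cor:order} is available; the only points needing care are the encoding of an isomorphism by generator images, the reduction of the condition ``$g_i\mapsto\tilde g_i$ extends to a homomorphism'' to an order computation in $\bbgroup\times\bbgroupb$, and the routine amplification-and-union-bound argument that lets several $\QCMA$ subroutines run inside a single $\QCMA$ verifier.
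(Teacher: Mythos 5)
Your proof is correct and follows essentially the same route as the paper's: Merlin sends the claimed generator images and a claimed common order $m$, and Arthur reduces to Group Order Verification via the order of the subgroup $K=\gen{(g_1,\tilde g_1),\ldots,(g_k,\tilde g_k)}$ of the direct product. The only cosmetic difference is that the paper has Merlin certify the equality $\gen{\tilde g_1,\ldots,\tilde g_k}=\gen{h_1,\ldots,h_\ell}$ directly by straight-line programs in both directions and then runs three order checks, whereas you certify only the inclusion $\tilde g_i\in H$ and recover the equality from the extra order check (e); both close the argument in the same way.
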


\begin{corollary}\label{th:homo-ker-MinimalNS}\label{cor:result2}
Homomorphism, Minimal Normal Subgroup, Proper Subgroup and Simple Group are in $\QCMA \cap\coQCMA $.   
\end{corollary}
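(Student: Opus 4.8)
The plan is to reduce each of the four problems to Group Order Verification and Group Non-Membership, which by \cref{cor:order} and \cref{th:GNM} lie in $\QCMA\cap\coQCMA$ and $\QCMA$ respectively, following the classical reductions of Babai~\cite{Babai92} and Watrous~\cite{WatrousFOCS00}. The uniform mechanism is: Merlin sends Arthur the orders of a constant number of black-box groups built from the input --- together with, where needed, polynomial-length straight-line programs (which exist by \cite{Babai+FOCS84}) certifying membership and normality relations; Arthur verifies each claimed order with the protocol of \cref{th:order}, evaluates the straight-line programs using the group oracle, and checks a polynomial-time predicate combining the results. Since $\QCMA$ and $\coQCMA$ are closed under a constant number of conjunctions, disjunctions (Merlin also names which disjunct holds), and --- for $\coQCMA$ --- complements, the composed protocols stay in $\QCMA\cap\coQCMA$. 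The only new ingredient relative to \cite{WatrousFOCS00} is that this atomic order step now admits a \emph{classical}-witness protocol, so Watrous' quantum proof (the uniform superposition over a subgroup) is no longer needed.

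For \textbf{Homomorphism} --- given $G=\gen{g_1,\dots,g_k}$ and elements $g_1',\dots,g_k'$ of a black-box group $G'$, does $g_i\mapsto g_i'$ extend to a homomorphism? --- I would use the graph-subgroup trick: the map extends if and only if the first-coordinate projection from $K=\gen{(g_1,g_1'),\dots,(g_k,g_k')}\le G\times G'$ is an isomorphism, i.e.\ $\abs K=\abs G$ (one always has $\abs K\ge\abs G$ since the projection is onto). Merlin sends $\abs K$ and $\abs G$, Arthur verifies both via \cref{th:order} and accepts iff they are equal; for the complement he accepts iff $\abs K>\abs G$. For \textbf{Proper Subgroup}, $H=\gen{h_1,\dots,h_l}$ is a proper subgroup of $G$ iff each $h_j$ lies in $G$ (Merlin supplies straight-line programs over $g_1,\dots,g_k$) and $\abs H<\abs G$ (Merlin supplies both orders, Arthur verifies them via \cref{th:order}); for the complement, $H$ fails to be a proper subgroup iff some $h_j\notin G$ (a Group Non-Membership instance, in $\QCMA$ by \cref{th:GNM}) or $H\le G$ with $\abs H=\abs G$. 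Both problems are thus in $\QCMA\cap\coQCMA$ with no further ingredients.

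For \textbf{Simple Group} and \textbf{Minimal Normal Subgroup} the negative side is easy: ``$G$ is not simple'' is in $\QCMA$ because Merlin can either certify $\abs G=1$ or exhibit generators of a subgroup $M$ with straight-line programs witnessing $M\le G$ and $g_i^{-1}m_jg_i\in M$ (normality) and the orders $\abs M,\abs G$, Arthur checking $1<\abs M<\abs G$; the analogous witness (now also allowing a single non-membership $g_i^{-1}h_jg_i\notin N$) puts ``$N$ is not a minimal normal subgroup of $G$'' in $\QCMA$. The positive side is where the real content lies, and here I would follow Babai~\cite{Babai92} and invoke the classification of finite simple groups. If $G$ is simple then $G$ is isomorphic to an explicitly parametrized, polynomial-size standard copy $T$ of a known simple group, so Merlin sends the parameters of $T$ and an isomorphism $G\cong T$, and Arthur verifies the isomorphism by checking two mutually inverse homomorphisms and the required order equalities, exactly as in the Homomorphism case --- Arthur having hard-coded both the construction of $T$ from its parameters and the fact that $T$ is simple. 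If $N$ is minimal normal then $N$ is characteristically simple, hence $N\cong T^{r}$ for a simple group $T$; Merlin sends $T$, $r$, an isomorphism $N\cong T^{r}$, and, when $T$ is non-abelian, the induced permutation action $G\to S_r$ on the $r$ simple factors, whose transitivity (checkable in polynomial time) is equivalent to minimality, because the $G$-normal subgroups of $T^r$ are precisely the subproducts over $G$-invariant index sets; when $T=C_p$ is abelian the condition becomes irreducibility of the $\field_p G$-module $N\cong\field_p^{r}$, which Babai~\cite{Babai92} shows can be decided within $\AM\cap\coAM$ (and whose internal order-verification steps we again replace by \cref{th:order}). This places both problems in $\QCMA\cap\coQCMA$.

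The main obstacle is exactly these positive directions of Simple Group and Minimal Normal Subgroup: producing a short classical witness that $G$ (resp.\ $N$) has \emph{no} proper nontrivial normal subgroup. Random sampling does not suffice, since the ``bad'' elements may form a vanishing fraction of the group, so one is forced to use the classification of finite simple groups together with the structure of characteristically simple groups, reducing the claim to an isomorphism test against an explicit group plus a polynomial-time transitivity (or module-irreducibility) check --- precisely the ingredients of Babai's $\AM$ protocols~\cite{Babai92}. Everything else is routine bookkeeping; the sole contribution here is to substitute the $\QCMA\cap\coQCMA$ group-order protocol of \cref{cor:order} for the classical $\AM\cap\coAM$ (or quantum $\QMA$) subroutine used in~\cite{Babai92,WatrousFOCS00}.
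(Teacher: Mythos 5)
Your protocol for Homomorphism (graph subgroup $K\le G\times G'$, check $|K|=|G|$) is exactly the reduction the paper invokes via Babai's Corollary~12.1, so that part matches. For the other three problems you take a genuinely different route. For Proper Subgroup, you compare orders ($H\le G$ and $|H|<|G|$), whereas the paper uses the $\QCMA$ protocol for Group Non-Membership directly (Merlin exhibits $a\in G\setminus H$); both are correct, and yours is arguably cleaner since it unifies everything under \cref{cor:order}. For Simple Group in $\QCMA$, you verify $G\cong T$ for a concretely implemented standard copy $T$ via the graph-subgroup trick in $\bbgroup\times T$ plus order equalities; the paper instead invokes its short-presentation machinery (\cref{th:p1}, \cref{th:p2}) together with the dedicated Ree-group protocol (\cref{prop:Ree}). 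Your route works, but it implicitly still leans on that same machinery (which is what makes \cref{cor:order} true in the first place), and it requires you to spell out that $T$ and hence $\bbgroup\times T$ are quantum black-box groups with unique encoding, and that surjectivity of the candidate isomorphism is certified (e.g.\ by SLPs expressing standard generators of $T$ over the alleged images). For Minimal Normal Subgroup, the paper simply cites Babai's polynomial-time reduction to Group Order and is done; you instead re-derive the reduction from the structure of characteristically simple groups (transitivity of the $G$-action on the factors, or irreducibility of the $\field_p G$-module). This is sound in outline, but the details you leave implicit --- certifying the direct-product decomposition $N\cong T^r$, certifying the induced permutation action $G\to S_r$ against that decomposition, and the module-irreducibility check in the abelian case --- are nontrivial and would need to be carried out; this is precisely the content hidden inside the citation to \cite{Babai92} that the paper chooses not to reprove. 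In short, the proposal is correct and in the same spirit, but it reproves (rather than cites) the hard classical reductions, and it uses an alternate order-based isomorphism check for Simple Group in place of the paper's short-presentation / Ree-group argument.
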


\begin{corollary}\label{cor:result3}
Intersection, Centralizer and Maximal Normal Subgroup are in  $\coQCMA$.  
\end{corollary}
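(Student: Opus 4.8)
\medskip
\noindent\textbf{Proof proposal.}
The plan is to revisit, for each of the three problems, the $\coQMA$ protocol of Watrous~\cite{WatrousFOCS00} --- which itself implements the classical reductions of Babai~\cite{Babai92} and Babai--Szemer\'edi~\cite{Babai+FOCS84} --- and to observe that it reduces the \emph{complement} of the problem to a polynomial-size, non-adaptive combination of three kinds of sub-task: (i)~instances of Group Membership, for which a prover can supply a straight-line program over the given generators, so that this sits inside $\NP\subseteq\QCMA$~\cite{Babai+FOCS84}; (ii)~instances of Group Non-Membership; and, where the reduction routes through group orders, (iii)~instances of Group Order Verification. Every remaining step is a single application of the black-box oracle and involves no prover. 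Concretely, to certify that a candidate subgroup $K=\gen{U}$ is \emph{not} the intersection $G\cap H$ (resp.\ not the centralizer $C_G(g)$, resp.\ not a maximal normal subgroup of $G$), the prover either names a generator of $K$ (or of $G$) together with a Group Membership or Group Non-Membership certificate exhibiting a violation of the defining condition, or else exhibits auxiliary group elements --- a distinguishing element $w$, the generators of an intermediate subgroup, and so on --- together with straight-line programs certifying the required memberships and Group Non-Membership certificates for the required non-memberships.

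\medskip
The upgrade from $\coQMA$ to $\coQCMA$ then rests on two points. First, $\QCMA$ is closed under such compositions: a single classical witness is obtained by concatenating the witnesses of all the (polynomially many) sub-instances, and after independent error reduction on each $\QCMA$ sub-protocol a union bound yields completeness and soundness for the composed verifier; the randomization that appears in Babai's reductions is harmless, since a $\QCMA$ verifier may already run bounded-error (in particular classical randomized) computation. Second, the only reason Watrous' protocols land in $\coQMA$ rather than $\coQCMA$ is that the sub-protocols for Group Non-Membership and Group Order use \emph{quantum} proofs; replacing them by the classical-witness protocols of \cref{th:GNM} and \cref{cor:order} therefore places the complement of each of the three problems in $\QCMA$, i.e.\ places Intersection, Centralizer and Maximal Normal Subgroup in $\coQCMA$.

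\medskip
The part I expect to require actual care --- and where the precise formulations of \cref{sec:other} enter --- is verifying that these reductions are genuinely \emph{prover-classical and non-adaptive}: Arthur must be able to read off, from the input and Merlin's message alone, the full list of sub-instances to check, and Merlin must never need to send quantum data beyond what the sub-protocols generate internally. One must also confirm that on a true complement instance the ``positive'' sub-instances fed to the Group Membership, Group Non-Membership and Group Order sub-protocols really are yes-instances, so that those sub-protocols accept with the prescribed probability; and that on a ``yes'' instance of the original problem every choice of Merlin's message makes at least one sub-check fail with high probability. I expect this to amount to routine bookkeeping, carried out problem by problem by unwinding the reductions of \cite{Babai92,Babai+FOCS84}, with no new quantum ingredient needed beyond \cref{th:GNM} and \cref{cor:order}.
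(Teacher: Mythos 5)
Your proposal is correct and follows essentially the same route as the paper: Watrous' $\coQMA$ protocols for these three problems use a quantum proof only for the Group Non-Membership sub-routine (all other sub-certificates are already classical), so substituting the $\QCMA$ protocol of \cref{th:GNM} for that sub-routine immediately yields $\coQCMA$ membership. The paper treats the closure-under-composition bookkeeping you flag as routine and does not spell it out.
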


All the results are summarized in \cref{table:results}.

\paragraph{Related work.}
When writing this paper, we learned from Michael Levet \cite{Levet} and James Wilson~\cite{Wilson} that Alexander Hulpke, Martin Kassabov, {\'A}kos Seress and James Wilson have obtained a proof of the existence of a short presentation for the Ree groups of rank one. The proof, which is 60-page long, is unpublished (and not expected to be published). The existence of such a short presentation leads to an alternative way of proving \cref{prop:Ree}, by using \cref{th:p2} instead of our isomorphism test. 

\subsection{Overview of the proof strategy}
We give below an overview of the strategy we use to prove \cref{th:order}. 

Let us first describe some basic notation and notions of group theory --- more details are given in \cref{sub:gt}. For a group $G$, we write $H\le G$ (resp.,~$H\ns G$) to express that $H$ is a subgroup (resp.,~normal subgroup) of $G$. We denote by $\triv$ the trivial subgroup of $G$. A composition series of a group $G$ is a decomposition of the group into simple groups (a simple group is a nontrivial group that has no nontrivial normal subgroup and thus cannot be further decomposed), which are called the composition factors of $G$. The ``classification theorem of finite simple groups'' states that every finite simple group belongs to one of 18 infinite families of simple groups, or is one of 26 sporadic simple groups. As a consequence, each simple group can be described by a short string called its standard name.

\paragraph{Babai-Beals filtration.}
The starting point of our strategy is the Babai-Beals filtration. Babai and Beals \cite{Babai+99} showed that any group $G$ has a decomposition 
\[
\tgroup\ns\sol{G}\ns\soc{G}\ns\pker{G}\ns G\,,
\]
where $\sol{G}$ and $\pker{G}$ are two normal subgroups of $G$ called the solvable radical and the permutation kernel, respectively, and $\soc{G}$ is another normal subgroup (all these subgroups are defined in \cref{sub:bb}, but their definition is not needed for this overview). 
Ref.~\cite{Babai+99} showed that 
in randomized polynomial time, it is possible to compute a set of generators for $\pker{G}$. Additionally, given a set of generators for $\pker{G}$, it is possible in deterministic polynomial time to test membership in $\pker{G}$ and compute the order $\abs{G/\pker{G}}$. Since \[\abs{G}=\abs{\pker{G}}\cdot\abs{G/\pker{G}},\] in order to compute $\abs{G}$ we thus only need to compute the order of $\pker{G}$.

While how to compute efficiently $\sol{G}$ and $\soc{G}$ is unknown (even with the help of a prover), these two subgroups have an important property: $\sol{G}$ and $\pker{G}/\soc{G}$ are solvable (a solvable group is a group that is ``not too much non-abelian,'' in the sense that all its composition factors are cyclic). Note that while the solvability of $\sol{G}$ is easy to show, the solvability of $\pker{G}/\soc{G}$ is based on Schreier conjecture, which was proposed by Schreier in 1926, and is now known to be true as a result of the classification of finite simple groups (no simpler proof is known). 

\paragraph{Stategy to compute \boldmath{$\abs{\pker{G}}$}.}
In order to compute the order of $\pker{G}$, 
we first observe that the Babai-Beals filtration implies the existence of a solvable subgroup $H_0$ and $2s$ elements $\beta_1,\ldots,\beta_s,\gamma_1,\ldots,\gamma_s\in \pker{G}$ such that, when defining 
\begin{align*}
    H_i&=\gen{H_0,\beta_1,\ldots,\beta_i,\gamma_1,\ldots,\gamma_i}
\end{align*}
for each $i\in[s]$, the chain of inclusions
\begin{equation}\label{eq:chain}
     \tgroup\ns H_0\ns H_1\ns \cdots\ns H_s\ns \pker{G}
\end{equation}
holds, where $\pker{G}/H_s$ is solvable and $H_i/H_{i-1}$ is a simple group
for each $i\in[s]$.\footnote{For instance, we can take $H_0=\sol{G}$ and $H_s=\soc{G}$ to show the existence of such a decomposition. We nevertheless do not require the conditions $H_0 = \sol{G}$ and $H_s= \soc{G}$ since they cannot be checked efficiently (as mentioned above, we do not know how to efficiently compute $\sol{G}$ and $\soc{G}$).} 
Observe that
\[
    \abs{\pker{G}}= \abs{H_0} \cdot 
    \abs{H_1/H_{0}}\cdot 
    \abs{H_2/H_{1}}\cdots 
    \abs{H_s/H_{s-1}}\cdot
    \abs{\pker{G}/{H_s}}
\]
holds. Here $\abs{H_0}$ and $\abs{\pker{G}/H_s}$ are easy to compute with the help of Merlin since they are solvable.\footnote{Actually, $\abs{H_0}$ can be computed even without Merlin's help by using Watrous' algorithm for solvable groups \cite{WatrousSTOC01}.} It thus remains to compute $\abs{H_i/H_{i-
1}}$ for each $i\in[s]$. 

While it is unknown whether the decomposition (\ref{eq:chain}) can be computed in polynomial time, we can ask Merlin to ``guess'' it and send it to Arthur. Concretely, Merlin sends a set of generators for each subgroup $H_i$ and the standard name of the simple group $H_i/H_{i-1}$. The standard names enable Arthur to learn each $\abs{H_i/H_{i-1}}$, and thus to compute $\abs{\pker{G}}$. A dishonest Merlin, however,  might cheat and send a wrong standard name or can even send a series in which some $H_i/H_{i-1}$ is not simple. The main obstacle is thus to check that $H_i/H_{i-1}$ is really isomorphic to the simple group specified by Merlin (by its standard name). 

\paragraph{Isomorphism test.} One promising strategy for testing if a group $\myK$ is isomorphic to a known (not necessarily simple) group~$S$ is to use a randomized homomorphism test. A similar strategy was also used by Aaronson and Kuperberg to analyze the query complexity of Group Non-Membership~\cite{Aaronson+ToC07}.

Let $s_1,\ldots,s_k$ be a set of generators of $S$ known to both Arthur and Merlin. We ask Merlin to send elements $g_1,\ldots,g_k\in 
\myK$. If $\myK\cong S$ and Merlin is honest, he will send $g_i=\phi(s_i)$ for each $i\in[k]$, for some isomorphism $\phi\colon S\to \myK$. For the checking procedure, Arthur defines a map $f\colon S\to \myK$ by extending the partial map $s_i\mapsto g_i$ into a map on all $S$ as if it were a homomorphism. For instance, for an element $s\in S$ that can be written as $s=s_1s_2s_1s_3$, Arthur will set $f(s)=g_1g_2g_1g_3$. 
Arthur then takes two elements $s$ and $s'$ uniformly at random in $S$ and checks if
\begin{equation}\label{eq:hom}
f(ss')=f(s)f(s')
\end{equation}
holds.
By standard results on property testing (e.g., \cite{BenOr+08}), we can show that passing this test with high probability guarantees that there exists a homomorphism from $S$ to $\myK$. 

To be successful, this approach has to satisfy three important requirements:
\begin{itemize}
    \item[A.] Arthur needs to be able to efficiently represent an arbitrary element $s\in S$ as a product of elements from the fixed set $\{s_1,\ldots,s_k\}$. This representation should also be unique for $f$ to be well-defined. 
    \item[B.\,] Arthur needs to be able to efficiently check that the homomorphism whose existence is guaranteed when passing the homomorphism test is actually an isomorphism, i.e., a bijection.
    \item[C.] Arthur needs to be able to efficiently check if (\ref{eq:hom}) holds. 
\end{itemize}

The first two requirements were also mentioned by Aaronson and Kuperberg \cite{Aaronson+ToC07} as obstacles to prove that Group Non-Membership is in $\QCMA$. In particular, Task B was handled in \cite{Aaronson+ToC07} by solving an instance of the Normal Hidden Subgroup Problem using the quantum algorithm by Ettinger, H{\o}yer and Knill \cite{Ettinger+04}, which has polynomial query complexity but in general exponential time complexity.

Note that Aaronson and Kuperberg \cite{Aaronson+ToC07} applied the homomorphism test to the whole group $\myK=G$. In our strategy, however, we are working on a composition factor $\myK=H_i/H_{i-1}$, i.e., we only need to consider the case where $S$ is a simple group. For simple groups, Tasks A and B can be implemented efficiently. For Task A, simple groups have a concrete representation for which we can efficiently represent any element as a product of elements from the fixed set (this is nontrivial and requires advanced techniques, such as the machinery for matrix groups developed by Babai, Beals and Seress \cite{Babai+STOC09}). For Task B, we can fairly easily guarantee that the homomorphism is a bijection by exploiting the property that simple groups do not have nontrivial subgroups (another interpretation is that the Normal Hidden Subgroup Problem is easy in simple groups since the only normal subgroups are the trivial subgroup and the whole group). In other words, we are able to bypass the first two obstacles because we are working on composition factors, and not on the whole group.

The price to pay is that Task C now becomes very challenging. When working on the whole group $\myK=G$ as done in \cite{Aaronson+ToC07}, checking if (\ref{eq:hom}) holds is trivial since the oracle for the black-box group $\bbgroup$ can be directly applied. When $\myK=H_i/H_{i-1}$, this is not the case anymore: checking if (\ref{eq:hom}) holds is equivalent to checking if 
\[
f(ss')f(s')^{-1}f(s)^{-1}\in H_{i-1}
\]
holds, which requires the ability to check membership in $H_{i-1}$.  This is challenging since the group $H_{i-1}$ can be arbitrary. Note that it is crucial that the elements $s$ and $s'$ chosen by Arthur are hidden from Merlin, otherwise Merlin can cheat by choosing $g_i$'s such that \cref{eq:hom} holds (only) for those specific $s$ and $s'$. For this reason, we cannot use Merlin to directly help Arthur check membership in $H_{i-1}$ (e.g., by sending a membership certificate). Instead, Arthur should be able to efficiently test membership in $H_{i-1}$ by himself. This is the main difficulty we have to overcome in this work.  

\paragraph{Replacing membership in \boldmath{$H_{i-1}$} by membership in \boldmath{$H_0$}.}
We show (in \cref{th:group}) the following crucial consequence of the Babai-Beals filtration: there exists a decomposition of the form (\ref{eq:chain}) that satisfies the  additional condition
\[
    H_i/H_{i-1}\cong \gen{H_0,\beta_i,\gamma_i}/H_0\,\textrm{, for all }i\in[s]. \tag{$\star$}
\]

In our protocol for Group Order Verification (described in \cref{sec:main}), an honest Merlin sends a decomposition satisfying $(\star)$. In order to check if $H_i/H_{i-1}\cong S_i$ for some specific simple group $S_i$, it is thus enough to check if $\gen{H_0,\beta_i,\gamma_i}/H_0\cong S_i$. As explained above, to use the homomorphism test, we need to be able to efficiently check membership in $H_0$. Crucially, $H_0$ is now a solvable group, and we can thus use Watrous' polynomial-time quantum algorithm for membership testing in solvable groups \cite{WatrousSTOC01} to implement the homomorphism test efficiently. 

In the case of a dishonest Merlin, we can still guarantee that the decomposition (\ref{eq:chain}) satisfies $\gen{H_0,\beta_i,\gamma_i}/H_0\cong S_i$ for all $i\in[s]$,\footnote{A decomposition such that $\gen{H_0,\beta_i,\gamma_i}/H_0\cong S_i$ for all $i\in[s]$ is called a \emph{nice decomposition} in \cref{sec:bb} (see \cref{def:nice}).} but cannot guarantee that it satisfies Condition ($\star$). We are nevertheless able to show (in \cref{th:gmain}) that $\gen{H_0,\beta_i,\gamma_i}/H_0\cong S_i$ implies that $\abs{H_i/H_{i-1}}$ is a divisor of $\abs{S_i}$, and show that guaranteeing that $\abs{H_i/H_{i-1}}$ is a divisor of $\abs{S_i}$ is enough for our purpose. 

In order to finish establishing the soundness of the protocol, further work is needed. We should especially deal with the potential cheating strategy in which Merlin sends, instead of (\ref{eq:chain}), the chain of subgroup
\begin{equation}
     \tgroup\ns H_0\ns H_1\ns \cdots\ns H_s\ns K
\end{equation}
for a proper subgroup $K\lneq\pker{G}$.
To prevent such a cheating, we observe that $K=\pker{G}$ if and only if the composition factors of $G/\pker{G}$ match (with multiplicity) the composition factors of $G/K$. We then use another deep property of the Babai-Beals filtration: $G/\pker{G}$ is isomorphic to a symmetric group of small degree, which implies that the composition factors of $G/\pker{G}$ are fairly ``easy''. We can thus check if the composition factors of $G/\pker{G}$ match (with multiplicity) the composition factors of $G/K$ fairly easily with the help of Merlin.

\paragraph{The Ree groups of rank one.}
Instead of testing if $\gen{H_0,\beta_i,\gamma_i}/H_0\cong S_i$ for each $i\in[s]$ using the isomorphism test described above, we observe that we actually only need to do it for one class of simple groups: the Ree groups of rank one. For all the other simple groups, we can use a simpler approach, already proposed in \cite{Babai+99}, based on the existence of short presentations (see \cref{th:p2}). In this paper, we thus describe the isomorphism test only for the Ree groups of rank one (in \cref{sec:Ree}).

\section{Preliminaries}
In this section, we describe the notions of complexity theory, group theory and black-box groups needed to show our results. For any positive integer $s$, we write $[s]=\{1,\ldots,s\}$.
\subsection{Quantum complexity theory}
We assume that the reader is familiar with the most basic concepts and terminology of quantum computing, such as quantum circuits and measurements. The main technical contribution of this work is to construct \emph{classical} certificates for order verification that can be checked by known quantum algorithms (e.g., Watrous' quantum algorithms for solvable groups \cite{WatrousSTOC01}). 
The claims of this paper can be verified without further expertise in quantum computing if the reader is willing to consider these quantum algorithms as black-boxes. 
We just give below the formal definition of the complexity class $\QCMA$.

\begin{definition}\label{def:QCMA}
A problem $A=(A_{{\rm yes}},A_{{\rm no}})$ is in $\QCMA$ if there is a polynomial-time quantum algorithm~$V$ (by a verifier called Arthur) such that:
\begin{description}
    \item[Completeness] For any $x\in A_{{\rm yes}}$, there is a polynomial-length binary string $w_x$ called a certificate (from a prover called Merlin) such that $V$ accepts on input $(x,w_x)$ with probability at least $2/3$.
    \item[Soundness] For any $x\in A_{{\rm no}}$, $V$ accepts with probability at most $1/3$ on input $(x,w)$ for any polynomial-length binary string $w$.
\end{description}
The quantum algorithm $V$ with certificates $\{w_x\}_{x\in A_{{\rm yes}}}$ is called a $\QCMA$ protocol.
\end{definition}

\subsection{Group theory}\label{sub:gt}
A group $G$ is called solvable if there exist $g_1,\cdots,g_s\in G$ such that when defining $H_i=\gen{g_1,\ldots,g_i}$ for each $i\in[s]$, 
\[
\{e\}=H_0\ns H_1\ns \cdots \ns H_s=G.
\]
Note that $H_{i}/H_{i-1}$ is necessarily cyclic in this case, for each $i\in[s]$. For any integer $k\ge 1$, we denote by $\Sym(k)$ the symmetric group of degree $k$. Let $S$ be a set of generators of the group $G$. We call a sequence $(g_1,\ldots,g_t)$ of elements of $G$ a \emph{straight-line program} over $S$ if each $g_i$ is either a member of $S$ or an element of the form $g_j^{-1}$ or $g_jg_k$ from some $j,k<i$. The length of the straight-line program is $t$. The element reached by the straight-line program is the last element $g_t$. 

We will use the following easy fact in \cref{sec:main}. 

\begin{fact}(see, e.g., \cite[Proposition 5.42]{Rotman'02})\label{lemma:nilpotent}
For any finite group $G$ and any prime power $p^t$, $p^t$ divides $|G|$ if and only if $G$ has a subgroup of order $p^t$.
\end{fact}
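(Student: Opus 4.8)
The plan is to deduce the statement from Sylow's first theorem together with the fact that every finite $p$-group contains subgroups of every order dividing its own order.

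The forward implication is immediate from Lagrange's theorem: if $H\le G$ with $\abs{H}=p^t$, then $\abs{H}\mid\abs{G}$, hence $p^t\mid\abs{G}$. For the converse, I would write $\abs{G}=p^a m$ with $\gcd(p,m)=1$, so that the hypothesis $p^t\mid\abs{G}$ is equivalent to $a\ge t$. By Sylow's first theorem, $G$ contains a subgroup $P$ with $\abs{P}=p^a$. It then suffices to show that this $p$-group $P$ has a subgroup of order $p^t$ for every $t$ with $0\le t\le a$.

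I would prove this last claim by induction on $a$. The case $a=0$ is trivial, and for $t=0$ one takes the trivial subgroup. For the inductive step with $t\ge 1$, recall that the class equation forces a nontrivial finite $p$-group to have nontrivial center; hence $P$ contains a central element of order $p$, which generates a subgroup $Z$ with $Z\ns P$ and $\abs{Z}=p$. Applying the induction hypothesis to $P/Z$, which has order $p^{a-1}$, yields a subgroup $\overline{H}\le P/Z$ with $\abs{\overline{H}}=p^{t-1}$; pulling $\overline{H}$ back through the canonical projection $\pi\colon P\to P/Z$ via the correspondence theorem gives a subgroup $H=\pi^{-1}(\overline{H})\le P$ that contains $Z$ and satisfies $\abs{H}=\abs{Z}\cdot\abs{\overline{H}}=p\cdot p^{t-1}=p^t$. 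This $H$, viewed inside $G$, is the desired subgroup.

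I do not expect a genuine obstacle: the only nontrivial ingredients are Sylow's existence theorem and the class-equation argument showing that a nontrivial finite $p$-group has nontrivial center, both of which are standard. The single point requiring care is the bookkeeping in the induction, namely verifying that the pulled-back subgroup has order exactly $p^t$; this follows because $\pi$ restricted to $\pi^{-1}(\overline{H})$ is surjective onto $\overline{H}$ with kernel $Z$, so $\abs{\pi^{-1}(\overline{H})}=\abs{Z}\cdot\abs{\overline{H}}$.
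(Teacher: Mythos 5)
The paper does not prove this statement; it is stated as a known fact and simply cites \cite[Proposition 5.42]{Rotman'02}. Your argument is correct and is the standard textbook proof: Lagrange for the easy direction, Sylow's existence theorem to reduce to a $p$-group, and then induction on the exponent using the nontrivial-center (class equation) argument together with the correspondence theorem to pull back a subgroup of $P/Z$ of order $p^{t-1}$ to one of order $p^t$ in $P$. The bookkeeping you flag (that $\abs{\pi^{-1}(\overline{H})}=\abs{Z}\cdot\abs{\overline{H}}$) is handled exactly as you say, since $\pi$ restricted to $\pi^{-1}(\overline{H})$ surjects onto $\overline{H}$ with kernel $Z$. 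No gaps.
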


\paragraph{Composition series and simple groups.}
A simple group is a nontrivial group with no nontrivial normal subgroup. Any group can be decomposed into simple groups via composition series.
\begin{definition}\label{def1}
Let $G$ be a finite group. A composition series of $G$ is a list of subgroups
$H_0,H_1,\ldots,H_s$, for some integer $s$, such that
\begin{itemize}
\item[(a)]
$\{e\}=H_0\ns H_1\ns \cdots \ns H_s=G$\,;
\item[(b)]
the quotient group $H_{i}/H_{i-1}$ is a simple group for each $i\in[s]$. 
\end{itemize}
The composition factors of $G$ are the quotients $H_1/H_0,H_2/H_1,\ldots,H_s/H_{s-1}$.
\end{definition}
Any finite group has a composition series (with $s=O(\log |G|)$).  While a group may have more than one composition series, the Jordan-H\"older theorem (e.g., \cite[Theorem 22]{Dummit+04}) shows that they have the same length and the same composition factors, up to permutation and isomorphism. 

The ``classification theorem of finite simple groups'' (see, e.g., \cite{Wilson09}) is a theorem which states that every finite simple group belongs to one of 18 infinite families of simple groups (each family being indexed by one or two parameters), or is one of 26 sporadic simple groups. This gives $18+26=44$ types of finite simple groups. 
Each finite simple group $S$ can thus be represented by a binary string $z=(z_1,z_2)$ where $z_1$ is a constant-length binary string indicating its type
and $z_2$ is a $O(\log\abs{S})$-length string representing its parameters ($z_2$ is empty if $S$ is a sporadic simple group).   
We call $z$ the \emph{standard name} of the finite simple group $S$. Conversely, given a binary string $z$ corresponding to a standard name of a finite simple group, we write $\group{z}$ the simple group represented by $z$. 
Given a standard name $z$, the order $\abs{\group{z}}$ can be easily computed. 

\paragraph{The Ree groups of rank one.}\label{sub:Ree}
The family of Ree groups of rank one is among the least understood families of finite simple groups. In this paper, we define these groups by using their natural matrix representation, and use the same set of generators as in \cite{Baarnhielm14,Kemper+01}.

The family of Ree groups of rank one is indexed by a positive integer $a$. 
Write $q=3^{2a+1}$ and $t=3^a$. Let $\field_q$ be the finite field of order $q$ and $\omega$ be a primitive element of $\field_q$.
Consider the group $\GL(7,q)$ of invertible matrices of dimension 7 over $\field_q$. The Ree group of rank one, which we denote by $\Ree$,\footnote{This group is also written as $\oRee{q}$ in the literature.} is the subgroup of $\GL(7,q)$ generated by the following three matrices:

\begin{align*}
\Gamma_1 &= 
\begin{bmatrix}
1 & 1 & 0 & 0 & -1 & -1 & 1 \\
0 & 1 & 1 & 1 & -1 & 0 & -1 \\
0 & 0 & 1 & 1& -1 & 0 & 1 \\
0 & 0 & 0 & 1 & 1 & 0 & 0 \\
0 & 0 & 0 & 0 & 1 & -1 & 1 \\
0 & 0 & 0 & 0 & 0 & 1 & -1 \\
0 & 0 & 0 & 0 & 0 & 0 & 1
\end{bmatrix},\hspace{25mm}
\Gamma_2 = 
\begin{bmatrix}
0 & 0 & 0 & 0 & 0 & 0 & -1 \\
0 & 0 & 0 & 0 & 0 & -1 & 0 \\
0 & 0 & 0 & 0 & -1 & 0 & 0 \\
0 & 0 & 0 & -1 & 0 & 0 & 0 \\
0 & 0 & -1 & 0 & 0 & 0 & 0 \\
0 & -1 & 0 & 0 & 0 & 0 & 0 \\
-1 & 0 & 0 & 0 & 0 & 0 & 0
\end{bmatrix},\\
\Gamma_3&=
\begin{bmatrix}
    \omega^t & 0 & 0 & 0 & 0 & 0 & 0 \\
    0 & \omega^{1-t} & 0 & 0 & 0 & 0 & 0 \\
    0 & 0 & \omega^{2t-1} & 0 & 0 & 0 & 0 \\
    0 & 0 & 0 & 1 & 0 & 0 & 0 \\
    0 & 0 & 0 & 0 & \omega^{1-2t} & 0 & 0 \\
    0 & 0 & 0 & 0 & 0 & \omega^{t-1} & 0 \\
    0 & 0 & 0 & 0 & 0 & 0 & \omega^{-t}
    \end{bmatrix}. 
\end{align*}

The order of $\Ree$ is $q^3(q^3+1)(q-1)$. 

\paragraph{Short presentations of groups.}
A presentation of a group $G$ is a definition of $G$ in terms of generators and relations (see, e.g., \cite{Dummit+04}). The length of the presentation is the total number of characters required to write down all relations between the generators. It is known that most simple groups have short presentations:

\begin{theorem}[\cite{Babai+97,Hulpke+01}]\label{th:p1}
    All the finite simple groups, with the possible exception of the family of Ree groups of rank one, have a polylogarithmic-length presentation (i.e., a presentation of length polynomial in the logarithm of the order of the group). Moreover, these polylogarithmic-length presentations can be efficiently computed from the standard name of the simple group.
\end{theorem}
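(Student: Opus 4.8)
The plan is to invoke the classification of finite simple groups and then exhibit an explicit, uniformly-described short presentation for each family, excluding only the Ree groups ${}^2G_2(q)$ of rank one. Two preliminary tools make the bookkeeping work. The first is the \emph{exponent-compression trick}: a naive relation $x^n=1$ has length $n$, but adjoining auxiliary generators $y_0,\dots,y_{\lceil\log n\rceil}$ with relations $y_0=x$, $y_{j+1}=y_j^2$ lets one rewrite $x^n$ (indeed $w^n$ for any word $w$) as a product of $O(\log n)$ of the $y_j$ via the binary expansion of $n$; this already gives $\mathbb{Z}/p\mathbb{Z}=\langle x\mid x^p=1\rangle$ a presentation of length $\polylog(p)$, handling the cyclic family, and it is the device used throughout to encode cyclic tori and orders of field elements. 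The second is that the $26$ sporadic groups are a \emph{finite} list of fixed groups of bounded order, so any textbook presentation of each has bounded length (hence vacuously $\polylog$ of the order) and can be hard-coded. For the alternating groups, one has $\log|A_n|=\Theta(n\log n)$, and a classical Coxeter-type or two-generator presentation of $A_n$ uses $O(n)$ relations of bounded length, giving total length $\mathrm{poly}(n)=\polylog(|A_n|)$; since it is an explicit function of $n$, it is efficiently computable.

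The bulk of the work is the groups of Lie type $X(q)$, which I would split by rank. For bounded rank $r=O(1)$ (all the rank-one and rank-two families, the exceptional groups, etc.), I would use the Steinberg / Curtis--Tits presentation: $X(q)$ is generated by its root subgroups $U_\alpha\cong(\mathbb{F}_q,+)$ (or their twisted analogues) together with torus and Weyl-group elements, subject to the additive relations inside each $U_\alpha$, the Chevalley commutator relations $[u_\alpha(t),u_\beta(s)]=\prod u_{i\alpha+j\beta}(c_{ij}t^is^j)$, the torus action $h\,u_\alpha(t)\,h^{-1}=u_\alpha(\chi_\alpha(h)t)$, and the Weyl relations. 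These relations naively range over all $t,s\in\mathbb{F}_q$ and so have length $\Omega(q)$; the crucial point (the heart of Babai--Goodman--Kantor--Luks--P\'alfy \cite{Babai+97}) is that it suffices to impose them only for $t,s$ ranging over an $O(\log q)$-size generating set of $(\mathbb{F}_q,+)$ over the prime field, plus one generator $\omega$ of $\mathbb{F}_q^\times$: bilinearity together with the field relations (the power relation $\omega^{q-1}=1$, compressed by the first trick, and the minimal polynomial of $\omega$) then force the full relation set, so one genuinely presents $X(q)$ rather than just a quotient onto it. Since the number of roots is $O(1)$ and each relation schema is instantiated $O(\log q)$ times by words of length $\polylog(q)$, the total length is $\polylog(q)=\polylog(|X(q)|)$, and every ingredient is an explicit formula in $q$, hence computable in the allotted time. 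For $\mathrm{PSL}_2(q)$ and $\mathrm{PSU}_3(q)={}^2A_2(q)$ this is direct; for the Suzuki groups ${}^2B_2(q)$ it is the nontrivial content of Hulpke--Seress \cite{Hulpke+01}; and it is precisely for the Ree groups ${}^2G_2(q)$ that this final step is not known, which is the source of the single exception.

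For unbounded rank --- the classical families $\mathrm{PSL}_n(q)$, $\mathrm{PSp}_{2n}(q)$, $\mathrm{PSU}_n(q)$, $\mathrm{P}\Omega^{\pm}_n(q)$ with $n\to\infty$ --- I would use the Curtis--Steinberg--Tits presentation, which realizes $X(q)$ as the amalgam of its rank-$\le 2$ parabolic/subsystem subgroups indexed by the nodes and edges of the Dynkin diagram. The diagram has $O(n)$ nodes; each rank-$\le 2$ piece is a bounded-rank group over $\mathbb{F}_q$ or $\mathbb{F}_{q^2}$, hence has a $\polylog(q)$-length presentation by the previous paragraph; and the amalgamation relations identifying the shared rank-one subgroups of adjacent pieces add $O(n)$ further relations of length $\polylog(q)$ each. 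The total length is $O(n\cdot\polylog(q))$, which is $\polylog(|X(q)|)$ since $\log|X(q)|=\Omega(n^2\log q)$, and everything is a uniform function of $n$ and $q$, so the presentation is efficiently computable from the standard name.

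I expect the main obstacle to be the bounded-rank groups of Lie type, and specifically proving that imposing the Steinberg relations only for field-generating values of the parameters really recovers all of them (so that one obtains a presentation of $X(q)$ and not merely a group mapping onto it), together with controlling the torus and Weyl-group contributions; the rank-one Suzuki groups ${}^2B_2(q)$ require their own special argument, and the rank-one Ree groups ${}^2G_2(q)$ are genuinely open, which is why the theorem must carry exactly this one exception.
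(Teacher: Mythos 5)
The paper does not prove \cref{th:p1} at all: it is stated as an imported theorem, cited to Babai--Goodman--Kantor--Luks--P\'alfy \cite{Babai+97} and Hulpke--Seress \cite{Hulpke+01}, and the authors rely on it as a black box (indeed, the ``Related work'' paragraph notes that even the remaining Ree-group case has reportedly been settled in a long unpublished manuscript). So there is no in-paper proof to compare against. That said, your sketch is a faithful high-level account of how the cited works actually argue: the exponent-compression trick via repeated squaring is exactly their device for encoding large element orders, the sporadic and alternating families are disposed of as you describe, the bounded-rank Lie-type groups are handled via Steinberg/Curtis--Tits relations instantiated only at a $\polylog(q)$-size set of field parameters, the unbounded-rank classical groups via Curtis--Steinberg--Tits amalgamation over the Dynkin diagram, and ${}^2B_2(q)$ is precisely the contribution of \cite{Hulpke+01}, leaving ${}^2G_2(q)$ as the one exception. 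Two cautions if you intended this as a self-contained proof rather than a summary: (i) you must verify not just that the group generated by your presentation surjects onto $X(q)$ but that the restricted Steinberg relations already collapse the presented group down to $X(q)$ --- this is the technically hard part and is where \cite{Babai+97} spends its effort, and your sketch acknowledges but does not discharge it; and (ii) ``efficiently computable from the standard name'' requires, e.g., that an irreducible polynomial over $\field_p$ of the right degree and a primitive element can be produced in the stated time, which is true but needs its own citation. As a description of the literature's strategy, however, your proposal is accurate.
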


In particular, any composition factor of a solvable group has a polylogarithmic-length presentation. For any group isomorphic to a subgroup of a permutation group of small degree, any composition factor also has a polylogarithmic-length presentation. This result is well-established in the literature (see, e.g., \cite[Section 6]{Luks1997}), but not explicitly stated. We provide a proof here for completeness.
\begin{theorem}\label{th:smallk}
If $G\le \Sym(k)$, then any composition factor of $G$ has a $\poly(k)$-length presentation.
\end{theorem}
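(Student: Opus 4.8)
The plan is to reduce the claim to the known fact (\cref{th:p1}) that every finite simple group except possibly the Ree groups of rank one has a polylogarithmic-length presentation, together with the classical bound on the order of simple composition factors of primitive permutation groups. The key point is that a composition factor $T$ of $G\le\Sym(k)$ is a section of $\Sym(k)$, and sections of $\Sym(k)$ that are simple are severely constrained: by a well-known consequence of the O'Nan--Scott theorem and the classification (see \cite{Babai+99,Luks1997}), any nonabelian simple composition factor $T$ of a subgroup of $\Sym(k)$ satisfies $|T|\le k^{O(1)}$, except that $T$ may be an alternating group $\mathrm{Alt}(m)$ with $m\le k$. In the first case, $\log|T|=O(\log k)$, so a polylogarithmic-length presentation of $T$ (guaranteed by \cref{th:p1}, since $T$ being of order at most $k^{O(1)}$ cannot be a large Ree group) has length $\poly(\log k)=\poly(k)$. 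In the remaining case $T=\mathrm{Alt}(m)$ with $m\le k$, we use the standard Coxeter-type presentation of $\mathrm{Alt}(m)$ on $m-2$ generators with $O(m^2)$ relations of bounded length, which has total length $O(m^2)=O(k^2)=\poly(k)$. Abelian (i.e.\ cyclic of prime order) composition factors $\mathbb{Z}/p\mathbb{Z}$ with $p\le k!$ have the one-relator presentation $\langle x\mid x^p\rangle$ of length $O(\log p)=O(k\log k)=\poly(k)$.

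Concretely, I would carry out the argument in the following steps. First, recall that by the Jordan--H\"older theorem the composition factors of $G$ are well-defined up to isomorphism, and each is either cyclic of prime order or a nonabelian finite simple group; since $G\le\Sym(k)$, each composition factor is a section of $\Sym(k)$. Second, invoke the bound on simple sections of symmetric groups: if $T$ is a nonabelian simple section of $\Sym(k)$, then either $T\cong\mathrm{Alt}(m)$ for some $m\le k$, or $|T|\le k^{c}$ for an absolute constant $c$ (this is the combinatorial heart of the Babai--Beals analysis and follows from bounds on orders of primitive permutation groups together with the classification). Third, in the bounded-order case apply \cref{th:p1}: a nonabelian simple group of order at most $k^c$ is not a Ree group of rank one once $k$ is large (the smallest Ree group already has order $q^3(q^3+1)(q-1)$ with $q\ge 27$, and in any case only finitely many Ree groups have order below any fixed polynomial bound, each of which has some finite presentation of bounded length), so \cref{th:p1} supplies a presentation of length $\poly(\log|T|)=\poly(\log k)$. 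Fourth, in the case $T=\mathrm{Alt}(m)$, $m\le k$, write down the explicit presentation of $\mathrm{Alt}(m)$ with $O(m)$ generators and $O(m^2)$ short relations, of total length $\poly(m)=\poly(k)$. Fifth, handle cyclic composition factors directly with a single power relation. Since every composition factor falls into one of these cases and in each case we exhibit a presentation of length $\poly(k)$, the theorem follows.

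The main obstacle is Step two: asserting cleanly that the only "large" simple sections of $\Sym(k)$ are alternating groups, and that all other simple sections have polynomially bounded order. This is where the classification of finite simple groups and the structure theory of primitive permutation groups are really used — it is exactly the kind of statement that \cite{Luks1997,Babai+99} establish but which is "well-established in the literature but not explicitly stated," as the excerpt itself notes. I would either cite it precisely (e.g.\ to \cite[Section 6]{Luks1997} or to the discussion surrounding the Babai--Beals filtration) or include a short self-contained derivation from the fact that a primitive subgroup of $\Sym(k)$ not containing $\mathrm{Alt}(k)$ has order at most $k^{O(\log k)}$ — actually for our purposes the weaker and classical bound suffices, but one must be careful to pass from "primitive" to "arbitrary" subgroups by iterating through orbits and block systems, which introduces only polynomially many composition factors each living inside a symmetric group of degree at most $k$. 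Once that structural statement is in hand, the presentation-length bookkeeping in Steps three through five is entirely routine.
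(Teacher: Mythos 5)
Your proposal contains two errors, one recoverable and one that leaves a real gap. First, the claim in Step~2 that every nonabelian simple section of $\Sym(k)$ is either $\mathrm{Alt}(m)$ with $m\le k$ or has order $k^{O(1)}$ is false: $\mathrm{PSL}(n,2)$ acts faithfully on the $k=2^n-1$ points of $\mathrm{PG}(n-1,2)$ and has order on the order of $2^{n^2}=k^{\Theta(\log k)}$, which is superpolynomial in $k$. The correct statement is the quasipolynomial bound $k^{O(\log k)}$ that you mention only in passing at the end. Fortunately this does not break the conclusion, since $\log|T|=O(\log^2 k)$ still gives $\poly(\log|T|)=\poly(k)$; but as written your Step~2 is wrong, and in fact the whole alternating/non-alternating dichotomy (and the detour through O'Nan--Scott and primitive groups) is unnecessary. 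Once you know that a composition factor $T$ embeds in $\Sym(k)$ — a standard fact the paper simply cites — you have $\log|T|\le O(k\log k)$, so \cref{th:p1} already yields a $\poly(\log|T|)=\poly(k)$-length presentation for \emph{every} non-Ree simple factor, alternating groups included.

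Second, and more seriously, your treatment of the Ree case is wrong. You assert that ``a nonabelian simple group of order at most $k^c$ is not a Ree group of rank one once $k$ is large,'' supported only by a parenthetical about finitely many exceptions. But for every sufficiently large $k$, the Ree groups $\Ree$ with $q^3+1\le k$ \emph{do} occur as (simple) subgroups of $\Sym(k)$, hence as possible composition factors, and their number grows with $k$. Since these are precisely the groups not covered by \cref{th:p1}, this case cannot be waved away. The missing idea, which the paper's proof supplies, is that the minimal faithful permutation degree of $\Ree$ is $q^3+1$, so if $\Ree$ is a composition factor of a subgroup of $\Sym(k)$ then $q^3+1\le k$ and therefore $|\Ree|=q^3(q^3+1)(q-1)\le(q^3+1)^3\le k^3$; a multiplication-table presentation then already has length $\poly(k)$. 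In sum, the paper's proof is considerably shorter than your proposal: embed each composition factor in $\Sym(k)$, apply \cref{th:p1} to the non-Ree factors, and use the degree bound for the Ree factors, with no case split and no appeal to the structure theory of primitive permutation groups.
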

\begin{proof}

Let $\tgroup= H_0\ns H_1\ns \cdots\ns H_s\le G$ be a composition series of $G$. Each $H_{i-1}$ is a maximal normal subgroup in $H_{i}$ and $H_{i}/H_{i-1}$ is a simple group. Then for each $i \in [s]$, $H_{i}/H_{i-1}$ is isomorphic to a subgroup of $\Sym(k)$ (see, e.g, \cite{Derek-Holt-MOF}).




By \Cref{th:p1}, $H_{i}/H_{i-1}$ has a presentation of length $\poly(k)$ except when $H_{i}/H_{i-1} \cong \Ree$ is a Ree group of rank one. 

Suppose $H_{i}/H_{i-1} \cong \Ree$. It is known that $(q^3+1)$ is the least positive integer such that there is an injective homomorphism $\phi:\Ree \to \Sym(q^3+1)$ (see, e.g., \cite[Table 4]{GMPS+2015}). Therefore, $|H_{i}/H_{i-1}|=|\Ree|=q^3(q^3+1)(q-1) \leq (q^3+1)^{3}  \leq k^3$. 
Taking the generating set as the set of all elements of $H_{i}/H_{i-1}$ and the set of relations as all the multiplication relations in the multiplication table, we get a $\poly(k)$-length presentation for $H_{i}/H_{i-1} \cong \Ree$. 
\end{proof}

\paragraph{Homomorphism test.} Testing the homomorphism of finite groups is a well-studied topic in property testing. In this paper we will use the following result from \cite{Aaronson+ToC07}, which is based on~\cite{BenOr+08}.
\begin{lemma}[Propositions 5.2 and 5.3 in \cite{Aaronson+ToC07}]\label{lemma:homo}
    Let $G,G'$ be two groups and consider a function $f\colon G\to G'$.
    Assume that the inequality
    \begin{equation}\label{eq:ineq}
    \Pr_{r_1,r_2\in G}\:[f(r_1r_2)=f(r_1)f(r_2)]\ge 9/10
    \end{equation}
    holds.
    Then there exists a unique homomorphism
    $\phi\colon G\to G'$ such that
    \[
        \Pr_{x\in G}\:[f(x)\neq\phi(x)]\le 1/10.
    \]
\end{lemma}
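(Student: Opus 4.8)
The plan is to use the ``self-correction'' (majority-vote) argument of Blum--Luby--Rubinfeld type, in the group formulation of Ben-Or et al. Write $\epsilon=1/10$, so that \cref{eq:ineq} says $\Pr_{r_1,r_2}[f(r_1r_2)\ne f(r_1)f(r_2)]\le\epsilon$. The key point is that one must \emph{not} take $\phi=f$: instead, for each $x\in G$ I would define $\phi(x)$ to be the plurality value of $f(xr)f(r)^{-1}$ over a uniformly random $r\in G$. First I would show this plurality is in fact a strict majority. Fix $x$ and let $r_1,r_2$ be independent and uniform; write $r_1=r_2s$ with $s$ uniform, using that $(r_2,s)\mapsto(r_2s,r_2)$ is a bijection of $G\times G$. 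Then $(xr_2,s)$ and $(r_2,s)$ are each uniform independent pairs, so the bound from \cref{eq:ineq} applies to each, and by a union bound, with probability at least $1-2\epsilon$ both $f(xr_2s)=f(xr_2)f(s)$ and $f(r_2s)=f(r_2)f(s)$ hold; when they do, $f(xr_1)f(r_1)^{-1}=f(xr_2)f(r_2)^{-1}$. Hence two independent evaluations of $f(xr)f(r)^{-1}$ agree with probability $\ge1-2\epsilon$, and since $\sum_v p_v^2\le\max_v p_v$ for any probability vector $(p_v)$, the plurality $\phi(x)$ satisfies $\Pr_r[f(xr)f(r)^{-1}=\phi(x)]\ge1-2\epsilon$.

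Next I would establish closeness and the homomorphism property. If $f(x)\ne\phi(x)$, then $f(xr)\ne f(x)f(r)$ for at least a $1-2\epsilon$ fraction of $r$; averaging over $x$ and comparing with $\Pr_{x,r}[f(xr)\ne f(x)f(r)]\le\epsilon$ yields $\Pr_x[f(x)\ne\phi(x)]\le\epsilon/(1-2\epsilon)$ (sharpening this to the stated constant $1/10$ uses the refined analysis of \cite{BenOr+08}, but the weaker bound already suffices for every later use in this paper). To see $\phi$ is a homomorphism, fix $x,y$ and take a single uniform $r$: each of the events $\phi(x)=f(xyr)f(yr)^{-1}$, $\phi(y)=f(yr)f(r)^{-1}$ and $\phi(xy)=f(xyr)f(r)^{-1}$ holds with probability $\ge1-2\epsilon$ (each is the defining property of $\phi$ at the relevant point, with its random argument reparametrized to the uniform element $yr$ or $r$); since $6\epsilon<1$, some $r$ satisfies all three, and then $\phi(x)\phi(y)=f(xyr)f(yr)^{-1}\cdot f(yr)f(r)^{-1}=f(xyr)f(r)^{-1}=\phi(xy)$.

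For uniqueness, if $\phi$ and $\psi$ are both homomorphisms agreeing with $f$ on more than a $3/4$ fraction of $G$, then they agree with each other on more than half of $G$; but $\{x:\phi(x)=\psi(x)\}$ is a subgroup (closed under multiplication and inversion because $\phi$ and $\psi$ are homomorphisms), hence a subgroup of index less than $2$, i.e.\ all of $G$. The only genuinely delicate point, and the one I would be most careful with, is the non-abelian bookkeeping in the majority and homomorphism steps: the products have to be split on the correct side so that the cancellations telescope, and each reparametrized random variable must be checked to remain uniform (and independent where needed) before \cref{eq:ineq} is applied. Everything else is a routine chain of union bounds.
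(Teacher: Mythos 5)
The paper does not prove this lemma; it is cited verbatim from Aaronson--Kuperberg (Propositions 5.2 and 5.3). Your proposal supplies a self-contained proof via the standard Blum--Luby--Rubinfeld self-correction argument in its non-abelian (Ben-Or et al.) form, which is exactly the argument underlying the cited result, and it is correct: the plurality is indeed a strict majority because two independent votes $f(xr_1)f(r_1)^{-1}$ and $f(xr_2)f(r_2)^{-1}$ agree with probability at least $1-2\epsilon$, the reparametrizations $(r_2,s)\mapsto(r_2s,r_2)$, $r\mapsto yr$, etc.\ all preserve uniformity, the three-event union bound with $6\epsilon<1$ gives the homomorphism identity, and the equalizer-is-a-subgroup argument gives uniqueness.

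The one point to flag is quantitative. Your averaging gives $\Pr_x[f(x)\neq\phi(x)]\le \epsilon/(1-2\epsilon)=1/8$, not the stated $1/10$, and you correctly acknowledge this. However, the follow-up remark that the weaker bound ``already suffices for every later use in this paper'' is not quite right as the numbers stand: in the completeness/soundness analysis of \cref{prop:Ree}, the bound $1/10$ is used to get $\Pr_r[f_\seed(xr)f_\seed(r^{-1})=\phi(x)]\ge 8/10$, which feeds a Chernoff bound showing that $26$ out of $50$ majority votes are correct with probability $>0.9$; replacing $1/10$ by $1/8$ drops the per-sample success to $3/4$, and the same Chernoff calculation with $\mu=37.5$ and threshold $26$ only gives roughly $0.83$, so $(\,\cdot\,)^3$ no longer clears the $0.7$ mark used there. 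This is of course repaired by a trivial re-parameterization (e.g.\ raising the $50$ repetitions), so nothing is broken structurally, but it is worth saying ``after adjusting constants'' rather than claiming the weaker bound drops in unchanged. Alternatively, one can recover exactly $1/10$ by the tighter argument: $f(x)=\phi(x)$ already holds whenever $\epsilon_x:=\Pr_r[f(xr)\ne f(x)f(r)]$ is strictly below $1-2\epsilon$, since then the two $r$-events $\{f(xr)f(r)^{-1}=f(x)\}$ and $\{f(xr)f(r)^{-1}=\phi(x)\}$ have total probability exceeding $1$ and hence intersect; combining this with the sharper self-convolution bound of \cite{BenOr+08} (rather than Markov applied to $\E_x[\epsilon_x]\le\epsilon$) is what yields the stated constant, as you suspected.
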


\subsection{Black-box groups}\label{sub:black}
A black-box group is a representation of a group $\bbgroup$ introduced by Babai and Szemer\'{e}di~\cite{Babai+FOCS84} in which each element of $\bbgroup$ is encoded by a binary string of a fixed length $n=O(\log |\bbgroup|)$. 
Let $s\colon \bbgroup\to \{0,1\}^n$ denote the encoding of elements as binary strings. If $s$ is injective, we say that $\bbgroup$ is a black-box group with unique encoding. If the encoding is not unique, an oracle for identity testing (i.e., deciding whether or not a given string encodes the identity element of $\bbgroup$) is available.

In the classical setting, classical oracles are available to perform group operations (each call to the oracles can be done at unit cost). A first oracle performs the group product: given two strings representing two group elements $g$ and $h$, the oracle outputs the string representing $gh$. A second oracle performs inversion: given a string representing an element $g\in \bbgroup$, the oracle outputs the string representing the element $g^{-1}$. These two oracles output an error message on non-valid inputs (i.e., strings in $\{0,1\}^n\setminus s(\bbgroup)$).
All the classical algorithms and protocols discussed in this paper do not require that $\bbgroup$ has unique encoding.

In the quantum setting, the oracles performing the group operations have to be able to deal with quantum superpositions. 
As in prior works \cite{Aaronson+ToC07,Ivanyos+03,LeGall+MFCS18,LeGallSTACS10,WatrousFOCS00,WatrousSTOC01}, in the quantum setting we always consider black-box groups with unique encoding. Let $s\colon \bbgroup\to \{0,1\}^n$ denote the injective encoding of elements as binary strings. Two quantum oracles are available (each call to the oracles can be done at unit cost). The first oracle maps $\ket{s(g)}\ket{s(h)}$ to $\ket{s(g)}\ket{s(gh)}$, for any two elements $g,h\in \bbgroup$. The second quantum oracle maps $\ket{s(g)}\ket{s(h)}$ to $\ket{s(g)}\ket{s(g^{-1}h)}$, for any $g, h\in \bbgroup$. These two oracles output an error message on non-valid inputs (in the quantum setting this is implemented by introducing a third 1-qubit register that is flipped when the inputs are not valid --- see \cite{WatrousFOCS00} for details).

We describe below several classical and quantum techniques for black-box groups.

\paragraph{Testing solvability and approximate sampling in black-box groups.} We will use the following classical randomized algorithms from \cite{BabaiSTOC91,Babai+99} to sample nearly uniformly elements in black-box groups and test solvability. 
\begin{theorem}[\cite{BabaiSTOC91}]\label{th_babai}
    Let $\bbgroup$ be a black-box group. 
    For any $G\le \bbgroup$ and any $\varepsilon>0$,
    there exists a classical randomized algorithm running
    in time polynomial in $\log(|\bbgroup|)$ and $\log(1/\varepsilon)$ that outputs an element of $G$ such that each $g\in G$ is output with probability in range $(1/|G|-\varepsilon,1/|G|+\varepsilon)$. 
    \end{theorem}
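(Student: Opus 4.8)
I would reconstruct Babai's argument \cite{BabaiSTOC91}, whose engine is the local expansion of Cayley graphs. Throughout, the output will be a random \emph{subproduct} of a carefully constructed sequence $(w_1,\ldots,w_m)$ of elements of $G$, that is, an element $w_1^{\epsilon_1}\cdots w_m^{\epsilon_m}$ with independent uniform bits $\epsilon_i\in\{0,1\}$; such an element always lies in $G$ and is evaluated with $m$ oracle multiplications, and the construction uses only the product and inversion oracles (so unique encoding is not needed). Let $N=O(\log\abs{\bbgroup})$.

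First I would replace the given generating set of $G$ by a generating \emph{sequence} of length $O(N)$. The combinatorial fact I would use (the \emph{subproduct lemma}) is that if a sequence $(u_1,\ldots,u_t)$ generates a group and $H$ is a proper subgroup, then a uniformly random subproduct of $(u_1,\ldots,u_t)$ lies outside $H$ with probability at least $1/2$; this follows by conditioning on all the bits except the one at the largest index $j$ with $u_j\notin H$. Hence, if $r_1,r_2,\ldots$ are independent random subproducts of the input generating sequence (padded with repetitions if it is short), the subgroups $\langle r_1,\ldots,r_j\rangle$ strictly increase — so at least double in size — until they equal $G$; this can happen at most $\log_2\abs{G}$ times, so a Chernoff bound shows that $T=O(N+\log(1/\varepsilon))$ subproducts generate $G$ except with probability $<\varepsilon^2$.

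The second and main step turns this generating sequence into a cube whose induced distribution is pointwise $\varepsilon$-close to uniform. Here I would invoke Babai's local expansion estimate for Cayley graphs: a random subproduct of a generating sequence of $G$ cannot leave any set $A$ with $\emptyset\neq A\subsetneq G$ invariant with probability close to $1$ — quantitatively, it moves at least an $\Omega(1/\polylog\,\abs{G})$ fraction of $A$. I would then build $(w_1,\ldots,w_m)$ greedily, starting from the short generating sequence above and repeatedly appending a fresh random subproduct of the fixed short generating set. If $\mu$ denotes the distribution of a random element of the current cube, appending such an element replaces $\mu$ by $\tfrac12\mu+\tfrac12(\mu\ast\sigma)$, where $\sigma$ is the subproduct distribution of the fixed generators; using the local expansion estimate one shows that an appropriate potential of $\mu$ — for instance a smoothed count of the elements whose probability deviates from $1/\abs{G}$ by more than a prescribed threshold, or an $\ell_2$-type quantity — shrinks by a factor $1-\Omega(1/\polylog\,\abs{G})$ at each append whenever it is still above the target. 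Iterating $m=\poly(\log\abs{\bbgroup})\cdot O(\log(1/\varepsilon))$ times forces the pointwise deviation of $\mu$ below $\varepsilon$, which combined with the $<\varepsilon^2$ failure probability of the generation step yields $\Pr[\text{output}=g]\in(1/\abs{G}-\varepsilon,\,1/\abs{G}+\varepsilon)$ for every $g\in G$. The total running time is polynomial in $N$ and in $\log(1/\varepsilon)$.

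The main obstacle is the quantitative analysis of the smoothing step: proving that a single random-subproduct append shrinks the chosen potential by a factor $1-\Omega(1/\polylog\,\abs{G})$. This is delicate because it must simultaneously account for the growth of the support of $\mu$ and the equalization of probabilities on that support, and because the local expansion estimate it relies on is itself nontrivial — proved by an analysis of boundaries and connected components in vertex-transitive graphs — while its $1/\polylog$ loss is unavoidable, as cyclic groups already show. This smoothing analysis is the technical heart of \cite{BabaiSTOC91}.
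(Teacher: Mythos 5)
The paper does not prove \cref{th_babai}; it is quoted verbatim as a black-box result from Babai's STOC~1991 paper, so there is no in-paper proof to compare against. That said, your reconstruction correctly captures the architecture of Babai's argument: the random-subproduct idea, the ``escape a proper subgroup with probability $\ge 1/2$'' lemma and the resulting $O(\log|G|+\log(1/\varepsilon))$-length generating sequence, and then the construction of an Erd\H{o}s--R\'enyi-type cube whose random subproducts are pointwise close to uniform, driven by the local-expansion estimate for vertex-transitive/Cayley graphs. You also correctly identify the technical bottleneck --- the quantitative smoothing step showing that each appended random subproduct shrinks a suitable potential by a $1-\Omega(1/\polylog|G|)$ factor --- which you leave as a citation to Babai rather than proving; that is fine for the level of detail the paper itself uses (it treats \cref{th_babai} entirely as a primitive). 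Two small points of care if you were to flesh this out: the subproduct lemma requires that the ambient sequence genuinely \emph{generates} the group (so the padding must still generate), and the pointwise $\varepsilon$-closeness is stronger than total-variation closeness --- it is what the statement actually asserts, and Babai's analysis does deliver it, but an $\ell_2$-type potential alone gives only an average bound, so the potential you track must control the pointwise deviation (Babai uses an entropy/combinatorial argument tailored to this).
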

\begin{theorem}[\cite{Babai+99}]\label{lemma:sol}
    Let $\bbgroup$ be a black-box group. 
    For any $G\le \bbgroup$, there exists a classical randomized algorithm running
    in time polynomial in $\log(|\bbgroup|)$ that decides if $G$ is solvable.
\end{theorem}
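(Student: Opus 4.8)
The plan is to decide solvability of $G$ by explicitly computing its derived series $G = G^{(0)} \trianglerighteq G^{(1)} \trianglerighteq \cdots$, where $G^{(i+1)} = [G^{(i)}, G^{(i)}]$, and reporting ``solvable'' exactly when this series reaches $\{e\}$ within $t := \lceil \log_2 |\bbgroup|\rceil$ steps. This criterion is correct for purely order-theoretic reasons: if $G$ is solvable then each nontrivial term of the derived series has index at least $2$ in its predecessor, so $G^{(t)} = \{e\}$ because $t \ge \log_2|G|$; and if $G$ is not solvable the series stabilizes at its nontrivial perfect residual and never reaches $\{e\}$. Thus it suffices to compute, from a generating set of a subgroup $H \le \bbgroup$, a generating set of $[H,H]$.

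The key subroutine is a Monte Carlo algorithm for normal closures. By the usual commutator identities, $[H,H]$ is the normal closure in $H$ of the polynomial-size set $C = \{[h_i,h_j] : h_i, h_j \text{ generators of } H\}$, so I would first establish that, given generators for a subgroup $H \le \bbgroup$ and a subset $T \subseteq H$, one can compute generators for the normal closure $\langle T \rangle^{H}$ with failure probability at most $\delta$ in time $\poly(\log|\bbgroup|, \log(1/\delta))$. The algorithm keeps a list $L$ with $T \subseteq L$, repeatedly appends $x^{g}$ for a near-uniform $g \in H$ and a near-uniform $x \in \langle L\rangle$ --- both produced by \cref{th_babai} applied to $H$ and to $\langle L\rangle$ respectively --- and halts after $O(\log|\bbgroup|\cdot\log(1/\delta))$ iterations. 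Throughout, $\langle L\rangle$ stays inside $\langle T\rangle^{H}$; and a standard probabilistic lemma shows that, while $\langle L \rangle$ is not yet normal in $H$, the conjugate $x^g$ lies outside $\langle L\rangle$ with constant probability. Since the order of $\langle L\rangle$ can strictly increase at most $\log_2|\bbgroup|$ times, the halting bound guarantees success except with probability $\delta$. The merely approximate uniformity provided by \cref{th_babai} is harmless: taking its error parameter inverse-exponentially small costs only a polynomial factor.

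Given this subroutine, the algorithm sets $L_0$ to the input generators of $G$ and, for $i = 1, \dots, t$, runs the subroutine with $H = \langle L_{i-1}\rangle$ and $T = \{[a,b] : a,b \in L_{i-1}\}$ to obtain $L_i$; finally it tests whether every generator in $L_t$ equals the identity --- directly when the encoding is unique, and through the identity testing oracle otherwise --- and accepts iff so. If all $t$ subroutine calls succeed, which by a union bound happens with probability at least $1 - t\delta$, then by induction $\langle L_i \rangle = G^{(i)}$ for every $i$, so the answer is correct; taking $\delta = 1/(3t)$ makes the error at most $1/3$. (In fact the subroutine can only \emph{underestimate} a normal closure, so ``solvable'' is never output erroneously and the procedure has one-sided error.) The running time is $\poly(\log|\bbgroup|)$: there are $t = O(\log|\bbgroup|)$ calls, each costing $\poly(\log|\bbgroup|)$, plus one $\poly$-time identity test.

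I expect the main obstacle to be the normal-closure subroutine itself --- specifically the probabilistic lemma that lower-bounds the chance that conjugating by a near-random element enlarges a proper non-normal subgroup, and the bookkeeping needed to carry \cref{th_babai}'s sampling error through the $O(\log|\bbgroup|)$ invocations (the elements $x \in \langle L\rangle$ are themselves only approximately uniform, and $\langle L\rangle$ changes at each step). The remaining ingredients --- the order-theoretic bound on the derived length, the reduction of $[H,H]$ to a normal closure, and the final identity test --- are routine.
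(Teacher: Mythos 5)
The paper cites this result from Babai--Beals and does not reprove it, so there is no paper proof to compare against; your reconstruction matches the standard approach in that literature. The derived-series criterion (solvable iff the series reaches $\{e\}$ within $\lceil\log_2|\bbgroup|\rceil$ steps), the reduction of $[H,H]$ to the normal closure of the set of generator-commutators, and the probabilistic growth lemma --- if $\langle L\rangle\le H$ is not normal in $H$, then for near-uniform $x\in\langle L\rangle$ and $g\in H$ one has $\Pr[x^g\notin\langle L\rangle]\ge 1/4$, since the normalizer has index at least $2$ in $H$ and each proper intersection $\langle L\rangle\cap\langle L\rangle^{g^{-1}}$ has index at least $2$ in $\langle L\rangle$ --- are all correct, and the handling of the sampling error from \cref{th_babai} and the union bound over the $O(\log|\bbgroup|)$ subroutine calls is routine.

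Your parenthetical about one-sided error, however, has the direction reversed. Because the subroutine only ever \emph{underestimates} a normal closure, the invariant $\langle L_i\rangle\subseteq G^{(i)}$ holds unconditionally, whether or not a subroutine call succeeds. This means that on a non-solvable $G$ a failed call can make $\langle L_t\rangle$ collapse to $\{e\}$ prematurely even though $G^{(t)}\ne\{e\}$, so the algorithm \emph{can} erroneously output ``solvable''. What the invariant actually guarantees is the opposite one-sidedness: if the algorithm outputs ``not solvable'' then $\{e\}\ne\langle L_t\rangle\subseteq G^{(t)}$, so $G$ is genuinely non-solvable. This does not affect your main two-sided analysis via the union bound, but the claim that ``solvable'' is never output erroneously is incorrect as stated.
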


\paragraph{Watrous' algorithms for solvable groups.}
Watrous showed that for solvable groups, Group Order Verification and Group Non-Membership can be solved by polynomial-time quantum algorithms. Here are the precise statements we will use in our paper:

\begin{theorem}{(\cite{WatrousSTOC01})}\label{prop:Watrous}
Let $\bbgroup$ be a black-box group. 
There exist quantum algorithms running in time $\poly(\log\abs{\bbgroup})$ that solve the following tasks with probability at least $1-1/\poly(|\bbgroup|)$:
\begin{itemize}
    \item given a solvable group $G\le \bbgroup$, compute $|G|$;
    \item given a solvable group $G\le \bbgroup$ and an element $g\in \bbgroup$, decide if $g\in G$.
\end{itemize}
\end{theorem}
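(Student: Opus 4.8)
The plan is to reduce both tasks to a single primitive: for a solvable subgroup $H\le\bbgroup$, preparing a quantum state close to the uniform superposition $\ket{H}=|H|^{-1/2}\sum_{h\in H}\ket{s(h)}$. Given $\ket{H}$, membership of an element $g$ in $H$ is decided by the standard coset-disjointness test — form $\tfrac1{\sqrt2}\bigl(\ket{0}\ket{H}+\ket{1}\,g\ket{H}\bigr)$, apply a Hadamard to the control qubit, and measure it: the outcome is $0$ with certainty when $g\in H$ (the two branches coincide) and uniformly random when $g\notin H$ (the branches are orthogonal, since $gH\cap H=\emptyset$), so $O(\log|\bbgroup|)$ repetitions amplify this to the required confidence. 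The order $|H|$ will come out as a by-product of the construction of $\ket{H}$. So everything rests on constructing $\ket{G}$ for the input group $G=\gen{g_1,\dots,g_k}$.

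To build $\ket{G}$ I would work along the derived series, whose terms $G=G^{(0)}\ge G^{(1)}\ge\cdots\ge G^{(d)}=\tgroup$ satisfy $G^{(j+1)}\ns G^{(j)}$ with each quotient $G^{(j)}/G^{(j+1)}$ abelian; here $d\le\log_2|G|=O(\log|\bbgroup|)$ because each quotient is nontrivial. A generating set for each $G^{(j)}$ is obtained in the black-box model as classical randomized preprocessing: $G^{(j+1)}$ is the normal closure in $G^{(j)}$ of the commutators of the current generators, and a generating set of a normal closure is produced with high probability by conjugating those commutators by polynomially many nearly-uniform random elements sampled via \Cref{th_babai} (Babai's random-subproduct technique). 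Repetition drives the failure probability of this phase below $1/\poly(|\bbgroup|)$.

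The heart of the argument is then an induction from $j=d$ down to $j=0$ producing, at each level, a quantum subroutine that prepares $\ket{G^{(j)}}$ to trace distance $2^{-\poly(\log|\bbgroup|)}$, outputs $|G^{(j)}|$, and decides membership in $G^{(j)}$. The base case $j=d$ is trivial since $G^{(d)}=\tgroup$ and the encoding is unique. For the step, let $a_1,\dots,a_r$ be the (images of the) generators of $G^{(j)}$; since $G^{(j)}/G^{(j+1)}$ is abelian, every element of $G^{(j)}$ is uniquely $a_1^{e_1}\cdots a_r^{e_r}h$ with $h\in G^{(j+1)}$ and $0\le e_l<m_l$, where $m_l$ is the order of $a_l$ modulo $\gen{G^{(j+1)},a_1,\dots,a_{l-1}}$. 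Each $m_l$ is found by period finding applied to the map $t\mapsto a_l^{\,t}\cdot\ket{\gen{G^{(j+1)},a_1,\dots,a_{l-1}}}$, whose coset superpositions are prepared from the inductive hypothesis together with left-multiplication by group-oracle calls (distinct cosets yield orthogonal states and equal cosets equal states, so Shor-style period finding applies just as for the abelian hidden subgroup problem). Then $\ket{G^{(j)}}$ is assembled by creating the uniform superposition over exponent vectors $(e_1,\dots,e_r)$ in $\prod_l\{0,\dots,m_l-1\}$, multiplying the product $a_1^{e_1}\cdots a_r^{e_r}$ into $\ket{G^{(j+1)}}$, and $|G^{(j)}|=m_1\cdots m_r\cdot|G^{(j+1)}|$; membership in $G^{(j)}$ comes from the coset-disjointness test above. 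Instantiating this at $j=0$ yields the two claimed algorithms.

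The main obstacle — and where essentially all the care is needed — is that the recursion feeds on bounded-error quantum subroutines, not exact oracles: each use of the inductive hypothesis (and of period finding) must be amplified by $\poly(\log|\bbgroup|)$ repetitions so its error drops to $2^{-\poly(\log|\bbgroup|)}$, and one must then show these errors compose benignly across the $O(\log|\bbgroup|)$ levels and the polynomially many state-preparation and period-finding steps — in particular that period finding and the coset-disjointness test still succeed when given coset states that are only exponentially close to ideal. Once this error bookkeeping is done, the running time is a product of $\poly(\log|\bbgroup|)$ factors, hence $\poly(\log|\bbgroup|)$, and the total failure probability is $1/\poly(|\bbgroup|)$, as required. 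One could alternatively refine the derived series to a subnormal series with cyclic prime-order factors and build $\ket{G}$ one generator at a time, but the abelian-quotient version keeps the number of induction levels, and hence the error analysis, cleanest.
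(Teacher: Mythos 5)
This statement is not proved in the paper at all; it is a direct citation of Watrous' STOC~2001 algorithms for solvable black-box groups, and the paper treats it as a black box. Your sketch does reconstruct the broad architecture of Watrous' proof (classically compute a subnormal series with solvable structure, quantumly build the uniform superposition $\ket{G}$ up the series using period finding to learn the cyclic quotient orders, obtain $|G|$ as the product of those orders, and decide membership $g\in G$ by a Hadamard/swap test comparing $\ket{G}$ against $g\ket{G}$), so the route is essentially the one the cited reference takes. Watrous works directly with a polycyclic series with cyclic quotients rather than the derived series with abelian quotients, but as you note these differ only cosmetically.

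One substantive step your sketch glides past, however, is the disentanglement of the exponent register, and this is where much of Watrous' actual work lies. After ``creating the uniform superposition over exponent vectors $(e_1,\dots,e_r)$ \ldots\ and multiplying the product $a_1^{e_1}\cdots a_r^{e_r}$ into $\ket{G^{(j+1)}}$,'' the resulting state is $\sum_{\vec e}\ket{\vec e}\otimes\ket{a_1^{e_1}\cdots a_r^{e_r}\,G^{(j+1)}}$, which is \emph{not} $\ket{\vec e\text{-junk}}\otimes\ket{G^{(j)}}$: the two registers are entangled, and simply discarding the first leaves a mixed state. Producing a clean $\ket{G^{(j)}}$ requires uncomputing $\vec e$ from the coset $a_1^{e_1}\cdots a_r^{e_r}\,G^{(j+1)}$, which amounts to a quantum discrete-logarithm subroutine run against coset superposition states (and must itself be amplified and shown to tolerate the exponentially small imprecision of those states). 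This is the technical heart of the cited result; your ``main obstacle'' paragraph on error bookkeeping is aimed in the right direction but does not name this step. Since the paper only cites the theorem, none of this affects the paper, but a self-contained proof along the lines you sketch would need to fill in that uncomputation.
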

We immediately obtain the following corollary.
\begin{corollary}\label{cor:Watrous}
    Let $\bbgroup$ be a black-box group. 
    There exists a quantum algorithm running in time $\poly(\log\abs{\bbgroup})$ that given a group $G\le \bbgroup$ and a solvable subgroup $H\le G$, decide if $H$ is normal in $G$ with probability at least $1-1/\poly(\abs{\bbgroup})$.
\end{corollary}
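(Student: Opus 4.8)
The plan is to reduce normality testing to polynomially many membership queries, each of which is answered by Watrous' quantum membership algorithm for solvable groups (\cref{prop:Watrous}). In the black-box model, $G$ and $H$ come specified by generating sets, say $G=\gen{g_1,\ldots,g_k}$ and $H=\gen{h_1,\ldots,h_\ell}$, with $k,\ell=\poly(\log\abs{\bbgroup})$. The first step is to make explicit the elementary group-theoretic reduction: $H\ns G$ if and only if $g_ih_jg_i^{-1}\in H$ for every $i\in[k]$ and every $j\in[\ell]$. The forward direction is immediate. For the converse, if every such conjugate lies in $H$ then $g_iHg_i^{-1}\subseteq H$, and since $H$ is finite and conjugation by $g_i$ is injective this forces $g_iHg_i^{-1}=H$, hence also $g_i^{-1}Hg_i=H$; consequently $wHw^{-1}=H$ for every word $w$ in the $g_i^{\pm 1}$, i.e.\ for every element of $G$, so $H\ns G$.

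The algorithm then runs as follows. For each pair $(i,j)$ use the black-box group oracles to compute (the encoding of) $x_{i,j}:=g_ih_jg_i^{-1}$, and invoke the membership-testing algorithm of \cref{prop:Watrous} on input $(x_{i,j},H)$ — this is legitimate precisely because $H$ is solvable. Accept if and only if all $k\ell$ queries report $x_{i,j}\in H$. Each query runs in time $\poly(\log\abs{\bbgroup})$ and errs with probability at most $1/\poly(\abs{\bbgroup})$; taking a union bound over the $k\ell=\poly(\log\abs{\bbgroup})$ queries keeps the total error at $1/\poly(\abs{\bbgroup})$ (if a prescribed polynomial bound is wanted, one can first amplify each query by $O(\log(k\ell))$ repetitions and a majority vote, which remains within the time budget). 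Correctness is then immediate from the equivalence above: if $H\ns G$ every query should pass, and if $H\not\ns G$ at least one conjugate $x_{i,j}$ lies outside $H$ and the corresponding query rejects.

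There is no real obstacle here; this is a direct corollary of \cref{prop:Watrous}. The only point that deserves to be stated carefully is the reduction of ``$H$ is normal in $G$'' to the finite (indeed polynomially many) conditions $g_ih_jg_i^{-1}\in H$, which rests on the finiteness of $H$; once that is in place, the rest is bookkeeping with the oracle operations and a union bound over the error probabilities.
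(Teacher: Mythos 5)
Your proposal is correct and is exactly the argument the paper has in mind: the paper labels \cref{cor:Watrous} as an ``immediate'' consequence of \cref{prop:Watrous} without spelling it out, and you have simply filled in the standard reduction (test $g_ih_jg_i^{-1}\in H$ for all generators, using that $g_iHg_i^{-1}\subseteq H$ plus finiteness forces equality) together with the routine union bound over the $\poly(\log\abs{\bbgroup})$ membership queries. No gaps; same approach.
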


\paragraph{Membership, normality, solvability and isomorphism certificates.}
Babai and Szemer\'edi \cite{Babai+FOCS84} showed that for any group $G$, any set of generators $S\subseteq G$ and any element $g\in G$, there exists a straight-line program over $S$ of length at most $(1+\log \abs{G})^2$ that generates $g$. This leads to the concept of \emph{membership certificate}. 
\begin{definition}\label{def:mem-cert}
    Let $\bbgroup$ be a black-box group. 
    For a group $G\le \bbgroup$ given by a set of generators $S$ and an element $g\in G$, a certificate of membership of $g$ in $G$ is a straight-line program over $S$ of length at most $(1+\log \abs{\bbgroup})^2$ that generates $g$.
\end{definition}
Since such a membership certificate can be verified in polynomial time using the group oracle, the existence of membership certificate established in \cite{Babai+FOCS84} shows that Group Membership is in the class $\NP$. Next, we introduce the notion of \emph{normality certificate}, which was also used in \cite{Babai+FOCS84} for checking the normality of a subgroup.
\begin{definition}\label{def:norm-cert}
    Let $\bbgroup$ be a black-box group. 
    For any group $G\le \bbgroup$ given as $G=\gen{g_1,\ldots,g_t}$ and any subgroup $H\le G$ given as $H=\gen{h_1,\ldots,h_s}$, a normality certificate of $H$ in $G$ is a collection of membership certificates for the inclusions 
    \begin{equation}\label{eq:norm}
        g_jh_ig_j^{-1}\in H, \textrm{ for each } i\in[s] \textrm{ and each } j\in[t].
    \end{equation}    
\end{definition}

We now introduce the notion of solvability certificate \cite{Babai+FOCS84}, which is used for checking if a group is a solvable group of order dividing a given integer.
\begin{definition}\label{def:sol}
    Let $\bbgroup$ be a black-box group. 
    For a group $G\le\bbgroup$ and an integer $m$, a
    certificate that $G$ is a solvable group of order dividing $m$ is  
    \begin{itemize}
        \item a list of $s$ primes $(p_1,\ldots,p_s)$ such that $p_1\cdots p_s=m$\,;
        \item a set of elements $g_1,\ldots,g_s\in G$ along with certificates certifying membership in $G$\,;
        \item for each $i\in[s]$, a normality certificate certifying that $\gen{g_1,\ldots,g_{i-1}}\ns \gen{g_1,\ldots,g_i}$\,;
        \item for each $i\in [s]$, a certificate of the inclusion $g_i^{p_i}\in \gen{g_1,\ldots,g_{i-1}}$\,.
    \end{itemize}
\end{definition}

Babai and Szemer\'edi have shown (see \cite[Theorem 11.4]{Babai+FOCS84}) that the order of a black-box group is certifiable if all its composition factors are isomorphic to simple groups that have  polylogarithmic-length presentations.
We will  need the following slightly different statement, which is proved by a similar technique. 
\begin{proposition}\label{th:p2}
    Let $\bbgroup$ be a black-box group 
    and
    $S$ be a simple group that has a polylogarithmic-length presentation.
    For any $G\le \bbgroup$,
    the problem of testing if $G\cong S$
    is in $\NP$.
\end{proposition}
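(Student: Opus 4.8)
The plan is to give an $\NP$ protocol where Merlin sends both a concrete description of an isomorphism $S \to G$ and certificates that let Arthur verify it is well-defined, that its image generates $G$, and that it is injective. Recall that $G \le \bbgroup$ is given by a generating set, say $G = \langle g_1,\ldots,g_t\rangle$, and $S$ is given by its standard name from which Arthur computes a polylogarithmic-length presentation $S = \langle x_1,\ldots,x_d \mid R_1,\ldots,R_\ell\rangle$ (using \cref{th:p1}, which applies since $S$ has a polylogarithmic-length presentation). Merlin's certificate consists of: (i) elements $a_1,\ldots,a_d \in \bbgroup$ claimed to be the images $\phi(x_1),\ldots,\phi(x_d)$, together with membership certificates (\cref{def:mem-cert}) for each $a_j \in G$; (ii) for each relator $R_k = R_k(x_1,\ldots,x_d)$, a short straight-line computation witnessing that $R_k(a_1,\ldots,a_d)$ equals the identity (using the identity-testing oracle, or directly if the encoding is unique); (iii) for each generator $g_i$ of $G$, a word $w_i$ in $x_1,\ldots,x_d$ together with a straight-line program witnessing $g_i = w_i(a_1,\ldots,a_d)$, which certifies that $\langle a_1,\ldots,a_d\rangle = G$; and (iv) a certificate of injectivity, described below.

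The verification then proceeds as follows. Arthur first checks all membership and straight-line-program certificates in polynomial time using the group oracle. Step (ii) guarantees that the assignment $x_j \mapsto a_j$ extends to a homomorphism $\phi \colon S \to \bbgroup$, since the $a_j$ satisfy all the defining relations of $S$. Step (iii) guarantees $\phi$ is surjective onto $G$. It remains to certify that $\phi$ is injective, equivalently — since $S$ is simple and $\ker\phi \ns S$ — that $\phi$ is not the trivial map, equivalently that $\phi(x_j) \neq e$ for at least one $j$ (indeed if $\ker\phi = S$ then every $a_j = e$). So for injectivity it suffices that Merlin point to one index $j$ with $a_j \neq e$, which Arthur checks with the identity-testing oracle. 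Note this already forces $|G| = |S|$: surjectivity gives $|G| \le |S|$ and injectivity gives $|G| \ge |S|$.

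For completeness, if $G \cong S$ then a genuine isomorphism $\phi$ exists; Merlin sends $a_j = \phi(x_j)$, the relations hold by definition, membership and generation certificates exist by the Babai–Szemerédi bound on straight-line program length (\cite[]{Babai+FOCS84}, cf.\ \cref{def:mem-cert}), and the straight-line programs expressing relators and expressing the $g_i$ have length at most polylogarithmic in $|S|$ because the presentation is polylogarithmic-length and each $g_i \in G$ has a short straight-line program over $\{a_1,\ldots,a_d\}$. All certificates thus have polynomial length and Arthur accepts. For soundness, if Arthur accepts then (ii) forces a homomorphism $\phi$, (iii) forces surjectivity onto $G$, and (iv) forces $\phi \neq e$, hence $\ker\phi \subsetneq S$; simplicity of $S$ then gives $\ker\phi = \{e\}$, so $\phi$ is an isomorphism $S \to G$ and $G \cong S$.

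The main subtlety — and the reason this is phrased as "proved by a similar technique" to \cite[Theorem 11.4]{Babai+FOCS84} rather than being immediate — is ensuring that all straight-line programs involved really do have polynomial length. The relators $R_k$ are words of polylogarithmic length in the abstract generators, but evaluating such a word via a straight-line program over the $a_j$ is immediate and short; the potential worry is that expressing each $g_i$ as a word in the $a_j$ could in principle require an exponentially long word, but the Babai–Szemerédi theorem guarantees a straight-line program (not merely a word) of length $O(\log^2|\bbgroup|)$, which is what we use. One should also handle the case of non-unique encoding, where all equality checks are replaced by identity tests on quotients $uv^{-1}$; this is routine. Thus the only real content is assembling the certificate and observing that simplicity of $S$ reduces the injectivity check to a single non-triviality test.
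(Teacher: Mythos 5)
Your proof is correct and matches the paper's approach: Merlin sends images of the presentation's generators together with certificates that the relators vanish and that these images generate $G$, and simplicity reduces the injectivity check to a single non-triviality test. The only cosmetic difference is that the paper checks $G \neq \{e\}$ directly from the input generators of $G$, whereas you check that some $a_j \neq e$; given the surjectivity guaranteed by your step (iii) these are equivalent.
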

\begin{proof}
    Let $\alpha_1,\ldots\alpha_s$ be the generators of $S$, and $\Pp$ be the set of relations between those generators in the polylogarithmic-length presentation of $S$. Let $\{g_1,\ldots,g_t\}$ denote the set of generators of $G$. If $G=\tgroup$ (which can be easily checked) we know that $G$ is not isomorphic to $S$. We thus assume below that $G\neq\tgroup$.

    The prover sends elements $g'_1,\ldots,g'_s$ of $G$, along with certificates of the equality $\gen{g'_1,\ldots,g'_s}=\gen{g_1,\ldots,g_t}$. The verifier checks if the certificates are correct (which guarantees that $\gen{g'_1,\ldots,g'_s}=G$) and checks if each relation in $\Pp$ holds in $G$ when replacing $\alpha_i$ by $g'_i$ for all $i\in[s]$ (which guarantees\footnote{This is a folklore fact (see for instance \cite[Section 6.3]{Dummit+04}). Here is a proof: the map $\varphi(\alpha_i)=g'_i$ can be extended into a surjective homomorphism $\varphi\colon S\to\gen{g'_1,\ldots,g_s'}$. We thus have $\gen{g'_1,\ldots,g_s'}\cong S/\ker{(\varphi)}$.} that the group $\gen{g'_1,\ldots,g_s'}$ is isomorphic to $S/N$ for some normal subgroup $N$ of $S$). 
    
    If $G$ is isomorphic to $S$, then there exists an isomorphism $\phi\colon S\to G$. The prover sends the elements $g'_1,\ldots,g'_s$ of $G$ such that $\phi(\alpha_i)= g'_i$ for each $i\in[s]$. We have $\gen{g'_1,\ldots,g'_s}=\gen{g_1,\ldots,g_t}$, and thus the prover can also send correct certificates of this equality. Since $\phi$ is a homomorphism, each relation in $\Pp$ holds in $G$ when replacing $\alpha_i$ by $g'_i$ for all $i\in[s]$. Thus all the tests succeed.

    Conversely, if all the tests succeed we know that $G=\gen{g'_1,\ldots,g'_s}$ is isomorphic to $S/N$ for some normal subgroup $N$ of $S$. Since $S$ is simple, its only normal subgroups are $\{e\}$ or $S$. Since the case $G=\tgroup$ is excluded, we conclude that $G$ is isomorphic to $S$. 
\end{proof}
Using \cref{th:p2}, we show the following result.
\begin{proposition}\label{th:p3}
    Let $\bbgroup$ be a black-box group 
    and
    $\mathcal{S}$ be a multiset of
    simple groups that each has a polyloga{-}rithmic-length presentation and is given by its standard name.
    For any $G\le \bbgroup$,
    the problem of testing if the multiset of composition factors of $G$ is $\mathcal{S}$ is in $\NP$.
\end{proposition}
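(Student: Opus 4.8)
The plan is to reduce the problem of verifying the multiset of composition factors to a sequence of $\NP$ checks, each handled either by a membership/normality certificate or by the isomorphism test of \cref{th:p2}. Write $\mathcal{S}=\{S_1,\ldots,S_s\}$ (with multiplicity) where each $S_j$ is given by its standard name $z_j$, so that $\abs{S_j}=\abs{\group{z_j}}$ is efficiently computable. First I would have the prover send a putative composition series of $G$: a list of subgroups $\tgroup=H_0\ns H_1\ns\cdots\ns H_r=G$, each $H_i$ specified by a generating set, together with, for each $i\in[r]$, a standard name $y_i$ of a simple group. The prover also sends, for each $i$: (a) membership certificates showing that each generator of $H_{i-1}$ lies in $H_i$ and each generator of $H_i$ lies in $G$; (b) a normality certificate (in the sense of \cref{def:norm-cert}) showing $H_{i-1}\ns H_i$; and (c) a certificate, via the $\NP$ protocol of \cref{th:p2} applied to the quotient situation, that $H_i/H_{i-1}\cong\group{y_i}$.

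The key point for step (c) is that $H_i/H_{i-1}$ is itself a black-box group once we can test membership in $H_{i-1}$ — but here $H_{i-1}$ is given by generators, and since $\group{y_i}$ is simple, the subgroup $H_{i-1}$ is \emph{maximal} in $H_i$ whenever the quotient is simple, and more importantly for the $\NP$ verification we only ever need to check membership \emph{of explicitly presented elements} (the relators of the presentation of $\group{y_i}$ evaluated on the prover's chosen preimages), each of which comes with its own membership certificate for $H_{i-1}$. Concretely, I would run the proof of \cref{th:p2} with $S=\group{y_i}$, $G$ replaced by $H_i/H_{i-1}$: the prover sends preimages $g'_1,\ldots,g'_k\in H_i$ of generators of $\group{y_i}$, certificates that $\gen{g'_1,\ldots,g'_k,H_{i-1}}=H_i$, and for each relator $w$ of the polylogarithmic-length presentation of $\group{y_i}$ a membership certificate for $w(g'_1,\ldots,g'_k)\in H_{i-1}$. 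Verifying all these is polynomial-time, and exactly as in \cref{th:p2} this forces $H_i/H_{i-1}\cong\group{y_i}/N$ for some $N\ns\group{y_i}$; since $\group{y_i}$ is simple and $H_i\ne H_{i-1}$ (forced by requiring $\abs{\group{y_i}}>1$, which is part of being a simple group), we get $H_i/H_{i-1}\cong\group{y_i}$.

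Once all certificates check out, the verifier knows it holds a genuine composition series of $G$ with composition factors $\group{y_1},\ldots,\group{y_r}$; by the Jordan–Hölder theorem this multiset is an isomorphism invariant of $G$. The verifier then simply checks that the multiset $\{\group{y_1},\ldots,\group{y_r}\}$ equals $\mathcal{S}$ — since both are lists of standard names, this is a purely combinatorial comparison (two simple groups are isomorphic iff they have the same standard name, so matching is by string equality on the $z_j$'s and $y_i$'s). For completeness, if the multiset of composition factors of $G$ really is $\mathcal{S}$, a genuine composition series exists, each factor is one of the $S_j$ (hence has the assumed polylogarithmic presentation), and all the required certificates exist by \cref{th:p2} and by \cite{Babai+FOCS84}. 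For soundness, any accepted transcript yields, as argued, a valid composition series with the claimed factors, so the Jordan–Hölder multiset of $G$ genuinely equals $\mathcal{S}$.

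The main obstacle — and the only place any care is needed — is formalizing step (c): making precise that the $\NP$ certificate of \cref{th:p2} goes through when the ambient object $H_i/H_{i-1}$ is not literally a black-box group but a quotient, so that ``checking a relation holds in the quotient'' must be spelled out as ``the corresponding word in the preimages lies in $H_{i-1}$, with an explicit straight-line-program certificate,'' and that the resulting word length stays polylogarithmic (it does, since the presentation has polylogarithmic length and each membership certificate has length at most $(1+\log\abs{\bbgroup})^2$). Everything else is bookkeeping with the certificate definitions already in \cref{sub:black}.
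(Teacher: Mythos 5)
Your proposal matches the paper's proof: both have Merlin send a composition series $\tgroup=H_0\ns\cdots\ns H_s=G$ with generators, normality certificates, and standard names, and both apply the $\NP$ isomorphism test of Proposition \ref{th:p2} to each factor $H_i/H_{i-1}$ and then compare the resulting multiset of names to $\mathcal{S}$; your extra elaboration of how \cref{th:p2} is run in the quotient (relators certified via straight-line-program membership certificates into $H_{i-1}$) is exactly what the paper leaves implicit. The one step you state that does not hold up is the parenthetical that ``$H_i\neq H_{i-1}$ is forced by requiring $\abs{\group{y_i}}>1$'': this is circular, since $\abs{\group{y_i}}>1$ is a property of the named group and says nothing about whether the prover's $H_i$ actually properly contains $H_{i-1}$, and certifying $H_i\neq H_{i-1}$ (i.e., non-membership of a generator of $H_i$ in $H_{i-1}$) is not obviously an $\NP$ task. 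The paper's own one-paragraph proof also glosses over this nontriviality point; it is harmless for the way \cref{th:p3} is used in \cref{prop:ub}, because a degenerate step $H_i=H_{i-1}$ could only inflate the claimed multiset, which is the direction that cannot help a cheating prover there, but as written your justification of that step is incorrect.
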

\begin{proof}
Merlin guesses a composition series of $G$\,:
\[
\{e\}=H_0\ns H_1\ns \cdots \ns H_s=G\,.
\]
For each $i\in[s]$, Merlin sends to Arthur a set of generators for $H_i$, a normality certificate certifying that $H_{i-1}$ is normal in $H_i$ and the standard name $z_i$ of the simple group $H_i/H_{i-1}$. Arthur checks that the normality certificates are correct and also checks that $\{z_1,\ldots,z_s\}$ is equal to the multiset of standard names corresponding to ${\cal S}$. He then checks that $H_i/H_{i-1}\cong\group{z_i}$ for all $i\in [s]$ using the protocol from \cref{th:p2}. 
\end{proof}

\section{The Babai-Beals Filtration and Nice Group Decompositions}\label{sec:bb}
In this section, we define our concept of nice decomposition of a group, inspired by the Babai-Beals filtration. We first describe the Babai-Beals filtration in \cref{sub:bb}. We then introduce the notion of nice decomposition and show several important properties in \cref{sub:nice}.

\subsection{The Babai-Beals filtration}\label{sub:bb}
The following characteristic chain of subgroups introduced by Babai and Beals \cite{Babai+99} has become a fundamental tool in the algorithmic theory of matrix and black-box groups: for any group $G$,
\[
\tgroup\le\sol{G}\le\soc{G}\le\pker{G}\le G\,.
\]
We now define each of the terms in this chain. 
\begin{itemize}
\item
$\sol{G}$, the solvable radical of $G$, is the unique largest solvable normal subgroup of $G$. 
\item
${\rm Soc}(G)$, the socle of $G$, is the subgroup generated by all minimal normal subgroups of $G$. 

\item
$\soc{G}$ is the preimage of ${\rm Soc}(G/\sol{G})$ in the natural projection $G \longrightarrow G/\sol{G}$, i.e., $\soc{G}/\sol{G}={\rm Soc}(G/\sol{G})$. The group $\soc{G}/\sol{G}={\rm Soc}(G/\sol{G})$ is the direct product of simple groups $T_1, \ldots, T_k$. The group $G$ acts on the set $\{T_1, \ldots, T_k\}$ via conjugational action. Let $\phi: G \longrightarrow \Sym(k)$ denote the permutation representation of $G$ via conjugation action on the set $\{T_1, \ldots, T_k\}$. 
\item
$\pker{G}$, the permutation kernel of $G$, is the kernel of $\phi$, i.e.,  $\pker{G}=\ker(\phi)$.
\end{itemize}
In this paper, we will not require detailed knowledge of all the algebraic structure of these terms. Instead, we will only need the following four properties (\cite[Section 1.2]{Babai+99}) : 

\begin{itemize}
    \item[(a)] $\sol{G}$ is solvable;
    \item[(b)] $\soc{G}/\sol{G}$ is a direct product of simple groups; 
    \item[(c)] $\pker{G}/\soc{G}$ is solvable;
    \item[(d)] $G/\pker{G} \leq \Sym(k)$ and $k \leq \frac{\log |G|}{\log 60}$.
\end{itemize}

A set of generators of $\pker{G}$ can be computed in Monte Carlo polynomial time \cite[Theorem~1.1]{Babai+99}.\footnote{Theorem 1.1 in \cite{Babai+99} assumes that a superset of the set of primes dividing $\abs{G}$ is given as an additional input. These primes are used to compute the order of elements of~$G$ efficiently. In the quantum setting, this is not needed since the order of an element can be computed in polynomial time~\cite{Cheung+01,Shor97}.}  
Moreover, in Monte Carlo polynomial time, we can set up a data structure which allows, for any $g \in G$, to compute $\phi(g)$ in deterministic polynomial time (\cite[Corollary 5.2]{Babai+99}). Using this data structure, membership in $\pker{G}$ can checked in deterministic polynomial time (see the discussion after Theorem 1.1 in \cite{Babai+99}). The following tasks can also be solved in deterministic polynomial time:\footnote{As explained after Theorem 1.1 in \cite{Babai+99}, this can be done by 
    using the extensive library of polynomial-time algorithms for permutation groups \cite{BaLS,Dixon+96,Luks1997,Luks1993}. More specifically, we can use the algorithms from \cite[Section 6]{Luks1997}, \cite[Section 3]{Luks1993} or \cite[Section 3.6]{Dixon+96}.}

\begin{itemize}
    \item compute a set of generators for $G/\pker{G} \leq \Sym(k)$ as permutations in $\Sym(k)$;
    \item compute the multiset of composition factors of $G/\pker{G}$, where each composition factor is given by its standard name, and thus compute $|G/\pker{G}|$.
\end{itemize}

\subsection{Nice decompositions and their properties}\label{sub:nice}
We now introduce our concept of nice decomposition.
\begin{definition}\label{def:nice}
    Let $P$ be a group. A nice decomposition of $P$ consists of a solvable normal subgroup $H_0\ns P$ and $2s$ elements $\beta_1,\ldots,\beta_s,\gamma_1,\ldots,\gamma_s\in P$ such that, when defining 
    \begin{align*}
        H_i&=\gen{H_0,\beta_1,\ldots,\beta_i,\gamma_1,\ldots,\gamma_i}
    \end{align*}
    for each $i\in[s]$, the following conditions are satisfied:
    \begin{itemize}
        \item[(C1)] $\tgroup\le H_0\ns H_1\ns \cdots\ns H_s\ns P$,
        \item[(C2)]  $P/H_s$ is solvable,
        \item[(C3)] $\gen{H_0,\beta_i,\gamma_i}/H_0$ is simple, for all $i\in[s]$.
    \end{itemize}
\end{definition}

We first state a general property of nice decompositions that will be used in \cref{sec:main} to analyze the soundness of our protocol.

\begin{proposition}\label{th:gmain}
    For any group $P$ and any nice decomposition of $P$ (with the same notations as in \cref{def:nice}), 
    $\abs{H_i/H_{i-1}}$ divides $\abs{\gen{H_0,\beta_i,\gamma_i}/H_0}$ for any $i\in[s]$.
\end{proposition}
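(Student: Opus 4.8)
The plan is to compare the abstract group $\gen{H_0,\beta_i,\gamma_i}/H_0$ with the quotient $H_i/H_{i-1}$ appearing inside the chain, and to exhibit the latter as a quotient of (a normal subgroup of) the former. Fix $i\in[s]$ and write $K=\gen{H_0,\beta_i,\gamma_i}$. First I would observe that $H_{i-1}$ is a normal subgroup of $H_i$ by condition (C1), and that $H_i = \gen{H_{i-1},\beta_i,\gamma_i}$ since $H_i=\gen{H_0,\beta_1,\ldots,\beta_i,\gamma_1,\ldots,\gamma_i}$ and $H_{i-1}=\gen{H_0,\beta_1,\ldots,\beta_{i-1},\gamma_1,\ldots,\gamma_{i-1}}$ already contains $H_0,\beta_1,\ldots,\beta_{i-1},\gamma_1,\ldots,\gamma_{i-1}$. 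Hence in the quotient $H_i/H_{i-1}$, the images of $\beta_i$ and $\gamma_i$ generate the whole group; that is, $H_i/H_{i-1}=\gen{\bar\beta_i,\bar\gamma_i}$, where bars denote images under the projection $\pi\colon H_i\to H_i/H_{i-1}$.

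Next I would build a surjective homomorphism from a subgroup of $K$ onto $H_i/H_{i-1}$. Consider the subgroup $L=\gen{\beta_i,\gamma_i}\le K$ (as a subgroup of the ambient group $P$). The restriction $\pi|_L\colon L\to H_i/H_{i-1}$ is a homomorphism whose image contains $\bar\beta_i$ and $\bar\gamma_i$, hence equals all of $H_i/H_{i-1}$ by the previous paragraph; so $\pi|_L$ is surjective and $L/(L\cap H_{i-1})\cong H_i/H_{i-1}$. In particular $\abs{H_i/H_{i-1}}$ divides $\abs{L}=\abs{\gen{\beta_i,\gamma_i}}$. It therefore suffices to show that $\abs{\gen{\beta_i,\gamma_i}}$ divides $\abs{K/H_0}=\abs{\gen{H_0,\beta_i,\gamma_i}/H_0}$. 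For this, note that $K=H_0 L$ (since $K$ is generated by $H_0$ together with $L$, and $H_0\ns P$ is normal, so $H_0L$ is already a subgroup containing all the generators of $K$), and therefore by the second isomorphism theorem $K/H_0 = H_0L/H_0 \cong L/(L\cap H_0)$. Thus $\abs{K/H_0}=\abs{L}/\abs{L\cap H_0}$, which is a multiple of... no — this gives $\abs{L} = \abs{K/H_0}\cdot\abs{L\cap H_0}$, so $\abs{K/H_0}$ divides $\abs{L}$, the wrong direction. I would instead argue directly: $\abs{H_i/H_{i-1}}=\abs{L/(L\cap H_{i-1})}$ divides $\abs{L/(L\cap H_0)}$ provided $L\cap H_0\le L\cap H_{i-1}$, which holds because $H_0\le H_{i-1}$; and $\abs{L/(L\cap H_0)}=\abs{K/H_0}$ by the second isomorphism theorem above. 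This chain of divisibilities yields the claim.

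To make the middle step rigorous I need the fact that if $A\ns B$ and $C\ns B$ with $A\le C\le B$, then $\abs{B/C}$ divides $\abs{B/A}$, applied with $A=L\cap H_0$, $C=L\cap H_{i-1}$, $B=L$; here $L\cap H_0$ and $L\cap H_{i-1}$ are normal in $L$ because $H_0\ns P$ and $H_{i-1}\ns H_i\supseteq L$ (using (C1)), and $\abs{B/A}=\abs{B/C}\cdot\abs{C/A}$ by the correspondence theorem. The main obstacle, and the point requiring the most care, is the normality bookkeeping: ensuring that $H_{i-1}$ really is normal in $H_i$ (so that $H_i/H_{i-1}$ is a genuine quotient) and that $L\cap H_{i-1}$ is normal in $L$ — both follow from condition (C1) of \cref{def:nice}, but it is worth spelling out that (C1) gives $H_{i-1}\ns H_i$ and that $L\le H_i$ since $\beta_i,\gamma_i\in H_i$. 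Everything else is a routine application of the isomorphism theorems and Lagrange's theorem; no classification-theoretic input or property (C2), (C3) of the nice decomposition is needed for this particular proposition.
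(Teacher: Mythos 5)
Your proof is correct. The paper's argument defines $H'=\gen{H_0,\beta_i,\gamma_i}\cap H_{i-1}$, builds an explicit surjective homomorphism $H_i\to\gen{H_0,\beta_i,\gamma_i}/H'$ with kernel $H_{i-1}$ via a word-based decomposition of elements of $H_i$, concludes $H_i/H_{i-1}\cong\gen{H_0,\beta_i,\gamma_i}/H'$, and then uses $H_0\le H'$ to get the divisibility. That is the second isomorphism theorem in disguise (since $\gen{H_0,\beta_i,\gamma_i}H_{i-1}=H_i$). You reach the same conclusion by an equivalent but arguably cleaner route: you introduce the smaller subgroup $L=\gen{\beta_i,\gamma_i}$, apply the second isomorphism theorem twice (to $LH_{i-1}=H_i$ and to $LH_0=\gen{H_0,\beta_i,\gamma_i}$), and use $L\cap H_0\le L\cap H_{i-1}$ together with the index-multiplicativity $\abs{L/(L\cap H_0)}=\abs{L/(L\cap H_{i-1})}\cdot\abs{(L\cap H_{i-1})/(L\cap H_0)}$. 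The normality bookkeeping you flag is exactly right and is handled by the same facts the paper relies on ($H_0\ns P$ from \cref{def:nice} and $H_{i-1}\ns H_i$ from (C1)). Your self-correction mid-proof --- noticing that $K=H_0L$ gives the divisibility in the wrong direction and switching to comparing $L/(L\cap H_{i-1})$ with $L/(L\cap H_0)$ directly --- lands on the right argument. In short, same mathematical content as the paper, different (and somewhat more transparent) packaging via explicit invocations of the second and third isomorphism theorems rather than an ad hoc word map.
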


\begin{proof}
Define $H'=\gen{H_0,\beta_i,\gamma_i}\cap H_{i-1}$, which is a normal subgroup of $\gen{H_0,\beta_i,\gamma_i}$ since $H_{i-1}\ns H_i$.
Observe that any element $g\in H_i$ can be written as 
\begin{equation}\label{eq:d}
    g=h_0\,w_1(g)\cdots w_{i-1}(g)\,w_i(g),
\end{equation}
where $h_0\in H_0$ and $w_\ell(g)$ means one word over the alphabet $\{\beta_\ell,\gamma_\ell\}$. This representation is not unique, but if~$g$ also admits the representation 
\[
    g=h'_0\,w'_1(g)\cdots w'_{i-1}(g)\,w'_i(g),
\]
we have $w_i(g){w'_i}^{-1}(g)\in \gen{\beta_i,\gamma_i}\cap H_{i-1} \le H'$. 

Define the map
$\phi:H_i \longrightarrow \frac{\gen{H_0,\beta_i,\gamma_i}}{H'}$ using Eq.~(\ref{eq:d}) as
\[
    \phi\big(h_0\,w_1(g)\cdots w_{i-1}(g)\,w_i(g)\big)=w_i(g)H'
\]
(the above observation guarantees that the map is well-defined).
This is a homomorphism: for any elements $g,g'\in H_i$ written as 
\begin{align*}
    g&= h_0\,w_1(g)\cdots w_{i-1}(g)\,w_i(g),\\
    g'&= h'_0\,w'_1(g)\cdots w'_{i-1}(g)\,w'_i(g),
\end{align*}
we obtain 
\[
\phi(gg')=w_i(g)w_i(g')H'=\phi(g)\phi(g').
\]
We have 
\[
    \ker(\phi)=\left\{h_0\,w_1(g)\cdots w_{i-1}(g)\,w_i(g)\:|\:w_i(g)\in H'\right\}=H_{i-1}.
\]
Clearly, $\phi$ is surjective. 
By the first homomorphism theorem of groups, we conclude that \[\frac{H_i}{{H_{i-1}}}\cong \frac{\gen{H_0,\beta_i,\gamma_i}}{H'}.\]
Since $H_0\le H'$, we conclude that $\abs{H_i/H_{i-1}}$ divides $\abs{\gen{H_0,\beta_i,\gamma_i}/H_0}$.
\end{proof}

By using the Babai-Beals filtration we can prove the following theorem, which shows that $\pker{G}$ has a nice decomposition with a very useful additional property (Property ($\star$)). This property will be crucial in \cref{sec:main} to establish the completeness of our protocol.

\begin{theorem}\label{th:group}
    For any group $G$, the subgroup $\pker{G}$ has a nice decomposition satisfying the condition
    \[
        H_i/H_{i-1}\cong \gen{H_0,\beta_i,\gamma_i}/H_0\,\textrm{, for all }i\in[s]. \tag{$\star$}
    \]
\end{theorem}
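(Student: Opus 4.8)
The plan is to produce the nice decomposition of $\pker{G}$ directly from the Babai-Beals filtration, being careful to arrange that each simple composition factor $H_i/H_{i-1}$ already appears ``internally'' as $\gen{H_0,\beta_i,\gamma_i}/H_0$. First I would set $H_0 = \sol{G}$; this is a solvable normal subgroup of $G$, hence normal in $\pker{G}$, so condition (C1) starts correctly and the solvability requirement on $H_0$ holds. Next I would consider the group $\pker{G}/\sol{G}$. Its socle $\mathrm{Soc}(\pker{G}/\sol{G})$ is the direct product of the simple groups $T_1,\dots,T_k$ from the Babai-Beals setup (intersected with $\pker{G}$, but since $\pker{G}$ is the permutation kernel it stabilizes each $T_j$, so these are genuine normal subgroups of $\pker{G}/\sol{G}$). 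The quotient $\pker{G}/\soc{G}$ is solvable by property (c), and $\sol{G}$ itself is solvable by property (a), so all the ``non-solvable content'' of $\pker{G}$ is concentrated in $\soc{G}/\sol{G} = T_1 \times \cdots \times T_k$.

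The key step is to refine $T_1 \times \cdots \times T_k$ into a chain of simple factors generated by two elements each. For a (nonabelian) finite simple group $T$, there exist two elements generating $T$ (by the theorem of Steinberg, or the now-classical $2$-generation of simple groups following from CFSG); pick generators $\bar\beta, \bar\gamma$ of $T_1$ and lift them to elements $\beta_1, \gamma_1 \in \pker{G}$. Then $\gen{H_0, \beta_1, \gamma_1}/H_0$ maps onto $T_1$ inside $\soc{G}/\sol{G}$, and since $T_1$ is a direct factor one checks $\gen{H_0,\beta_1,\gamma_1} \cap \sol{G}\,T_2\cdots T_k = \sol{G}$, so in fact $\gen{H_0,\beta_1,\gamma_1}/H_0 \cong T_1$ is simple (this is (C3) for $i=1$) and $H_1 = \gen{H_0,\beta_1,\gamma_1}$ has $H_1/H_0 \cong T_1 \cong \gen{H_0,\beta_1,\gamma_1}/H_0$, which is exactly ($\star$) for $i=1$. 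Iterating, at step $i$ I pick $2$-element generators of $T_i$, lift them to $\beta_i,\gamma_i$, and set $H_i = \gen{H_{i-1},\beta_i,\gamma_i} = \gen{H_0,\beta_1,\dots,\beta_i,\gamma_1,\dots,\gamma_i}$; the direct-product structure of $\mathrm{Soc}$ guarantees $H_i/H_{i-1} \cong T_i$ and also $\gen{H_0,\beta_i,\gamma_i}/H_0 \cong T_i$, so ($\star$) and (C3) hold for each $i$. After $k$ steps we have reached $H_k$ with $H_k/H_0 = \soc{G}/\sol{G}$, i.e.\ $H_k = \soc{G}$; then $\pker{G}/H_k = \pker{G}/\soc{G}$ is solvable by (c), giving (C2). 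Normality of each $H_i$ in $H_{i+1}$ and in $\pker{G}$ follows because at each stage $H_i/\sol{G}$ is a sub-product $T_1\times\cdots\times T_i$ of the socle, which is characteristic-enough to be normalized by $\pker{G}$.

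The main obstacle I anticipate is handling the case where some $T_i$ is \emph{not} generated by two elements within the ambient group in the naive way, or more precisely making sure the lift $\gen{H_0,\beta_i,\gamma_i}$ does not accidentally pick up extra socle factors so that $\gen{H_0,\beta_i,\gamma_i}/H_0$ fails to be simple. The fix is to choose $\beta_i,\gamma_i$ to be lifts of generators of $T_i$ that lie in (a transversal of $\sol{G}$ inside) the subgroup $\sol{G}\cdot T_i$; since $T_i$ is a direct factor of the socle modulo $\sol{G}$, the subgroup they generate modulo $\sol{G}$ is contained in $T_i$ and equals $T_i$, so simplicity is automatic. One should also note that $s = k = O(\log|G|)$, so the decomposition has polynomial size. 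A secondary point to be careful about is that abelian simple composition factors ($\mathbb{Z}/p$) arising inside $\mathrm{Soc}(\pker{G}/\sol{G})$ are also $2$-generated (indeed $1$-generated), so the same argument covers them uniformly; one could alternatively absorb any abelian part of the socle into $H_0$, but this is not necessary.

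A clean way to organize the write-up is: (1) set $H_0 = \sol{G}$, cite (a); (2) invoke property (b) to write $\soc{G}/\sol{G} = T_1 \times \cdots \times T_k$ with each $T_i$ simple; (3) for each $i$, use $2$-generation of $T_i$ to choose $\beta_i,\gamma_i \in \soc{G}$ whose images generate $T_i$; (4) verify by the direct-product structure that $H_i/H_{i-1} \cong T_i \cong \gen{H_0,\beta_i,\gamma_i}/H_0$, establishing (C3) and ($\star$); (5) verify the normality chain (C1) using that each $H_i/\sol{G}$ is a normal sub-product of the socle; (6) conclude (C2) from property (c) applied to $\pker{G}/\soc{G} = \pker{G}/H_k$. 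I would expect steps (4) and (5) to require the only real argument; everything else is bookkeeping with the Babai-Beals properties (a)–(c).
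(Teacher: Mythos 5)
Your proposal is correct and follows essentially the same route as the paper: set $H_0=\sol{G}$, write $\soc{G}/\sol{G}=T_1\times\cdots\times T_k$, use $2$-generation of finite simple groups to pick $\beta_i,\gamma_i$ as lifts of generators of $T_i$ lying in $\sol{G}\cdot T_i$, and exploit the direct-product structure to conclude $\gen{H_0,\beta_i,\gamma_i}\cap H_{i-1}=H_0$, hence $H_i/H_{i-1}\cong\gen{H_0,\beta_i,\gamma_i}/H_0\cong T_i$, with (C1) coming from $\pker{G}$ normalizing each $T_j$ and (C2) from solvability of $\pker{G}/\soc{G}$. (Minor point: your worry about abelian socle factors is moot, since $\sol{G/\sol{G}}$ is trivial and so every $T_i$ is a nonabelian simple group; but as you note this would not have affected the argument anyway.)
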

\begin{proof}
We use the Babai-Beals filtration. From the discussion in \cref{sub:bb}, there exists an isomorphism 
\[
    \phi\colon T_1 \times \cdots \times T_k \to \soc{G}/\sol{G},
\]
for simple groups $T_1,\ldots,T_k$. Since every simple group admits a generating set of size two (see, e.g., \cite{guralnick+00}), for each $i\in[k]$ we can write $T_i=\gen{a_i,b_i}$. We take $s=k$ and $H_0=\sol{G}$. For any $i\in[s]$, we take (arbitrary) elements $\beta_i,\gamma_i\in \soc{G}$ such that 
\begin{align*}
    \phi((e,\ldots,e,a_i,e\ldots,e))&=\beta_i H_0\,,\\
     \phi((e,\ldots,e,b_i,e\ldots,e))&=\gamma_i H_0\,,
\end{align*}
where in the above equations $a_i$ and $b_i$ are at the $i$-th position. For each $i\in[s]$,
define the subgroup $H_i=\gen{H_0,\beta_1,\ldots,\beta_i,\gamma_1,\ldots,\gamma_i}$. 
Note that $H_s=\soc{G}$ and $H_{i-1}\ns H_i$ for any $i\in[s]$. Additionally, $\pker{G}/H_s=\pker{G}/\soc{G}$ is solvable, as explained in \cref{sub:bb}. Conditions (C1) and~(C2) of Definition \ref{def:nice} thus hold for $P=\pker{G}$. Since $H_0 \ns G$, we also have $H_0 \ns \pker{G}$, as required for a nice decomposition.

Note that the following property holds: 
\begin{equation}\label{eq:conj}
    \gen{H_0,\beta_i,\gamma_i} \cap H_{i-1}  = H_0\:\textrm{ for any } i\in[s].
\end{equation}
Observe that any element $g\in H_i$ can be written as 
\[
    g=h_0\,w_1(g)\cdots w_{i-1}(g)\,w_i(g),
\]
where $h_0\in H_0$ and $w_\ell(g)$ means one word over the alphabet $\{\beta_\ell,\gamma_\ell\}$. This representation is not unique, but (\ref{eq:conj}) shows that if~$g$ also admits the representation 
\[
    g=h'_0\,w'_1(g)\cdots w'_{i-1}(g)\,w'_i(g),
\]
we have $w_i(g){w'_i}^{-1}(g)\in \gen{\beta_i,\gamma_i}\cap H_{i-1}\le \gen{H_0,\beta_i,\gamma_i}\cap H_{i-1}  = H_{0}$. 

Define the map
$\psi:H_i \longrightarrow \gen{H_0,\beta_i,\gamma_i}/H_0$ as 
\[
    \psi\big(h_0\,w_1(g)\cdots w_{i-1}(g)\,w_i(g)\big)=w_i(g)H_0\,.
\]
Clearly, $\psi$ is a surjective homomorphism and $H_{i-1} \leq \ker(\psi)$. Since $\ker(\psi) \neq H_i$, and  $H_{i}/H_{i-1}\cong T_i$ is a simple group, $H_{i-1} = \ker(\psi)$.   By the first homomorphism theorem of groups, we get $H_{i}/H_{i-1} \cong \gen{H_0,\beta_i,\gamma_i}/H_0$, i.e., Condition $(\star)$ holds. Combined with the fact that each $H_i/H_{i-1}$ is simple, this implies
Condition (C3) of Definition \ref{def:nice}. We conclude that $\pker{G}$ has a nice decomposition satisfying Condition ($\star$).
\end{proof}

\section{Testing Isomorphism to a Ree Group of Rank One}\label{sec:Ree}
In this section, we study the following problem.

\begin{center}
	\ovalbox{
	\begin{minipage}{12.5 cm} \vspace{2mm}
	
	\noindent\hspace{3mm}{\tt ReeIso}$(\bbgroup)$\hspace{5mm}// $\bbgroup$ {\tt is a black-box group}\\\vspace{-3mm}
	
	\noindent\hspace{3mm} Input: $\bul$ a set of generators of a solvable subgroup $L\le \bbgroup$

	\noindent\hspace{15mm}
	$\bul$ two elements $\beta,\gamma\in \bbgroup$ such that $L$ is normal in $\gen{\beta,\gamma,L}$

        \noindent\hspace{15mm}
	$\bul$ an integer $q$ of the form $3^{2a+1}$ for some $a>0$
    
	\vspace{2mm}
	\noindent\hspace{3mm} Output: yes if $\gen{\beta,\gamma,L}/L$ is isomorphic to $\Ree$; no otherwise
	\vspace{2mm}
	\end{minipage}
	}\vspace{2mm}
	\end{center}

Here is the main result of this section.

\begin{theorem}\label{prop:Ree}
    The problem {\tt ReeIso}$(\bbgroup)$ is in $\QCMA$.
\end{theorem}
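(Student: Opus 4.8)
The plan is to build a \emph{homomorphism test} from $\Ree$ into the quotient $K:=\gen{\beta,\gamma,L}/L$, exploiting the special feature of this sub-problem: since $L$ is solvable, Watrous' algorithm (\cref{prop:Watrous}) lets Arthur decide membership in $L$ — and hence equality of cosets of $L$ — in quantum polynomial time, which is exactly what is needed to implement Task~C. The $\QCMA$ certificate consists of three elements $g_1,g_2,g_3\in\bbgroup$ (intended to be the images modulo $L$ of the fixed matrix generators $\Gamma_1,\Gamma_2,\Gamma_3$ of $\Ree$ under an isomorphism $\Ree\to K$), together with the auxiliary data described below. Arthur first uses straight-line programs supplied by Merlin, evaluated with the black-box oracle, to verify that $g_1,g_2,g_3\in\gen{\beta,\gamma,L}$; this guarantees that the map $f\colon\Ree\to\bbgroup$ defined by ``evaluate a canonical word for $s$ at $g_1,g_2,g_3$'' descends to a well-defined map $\bar f\colon\Ree\to K$ with $\bar f(\Ree)\le K$.

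The technical heart, and the main obstacle, is implementing Task~A for the Ree groups: a \emph{deterministic} algorithm that, given a matrix $g\in\Ree\le\GL(7,q)$, outputs in time $\poly(\log q)$ a word in $\Gamma_1,\Gamma_2,\Gamma_3$ equal to $g$, returning the same word each time it is called on $g$ (so that $f$ is a genuine function, as \cref{lemma:homo} requires). Here I would invoke the classical structure theory of $\oRee{q}$ as a rank-one twisted group of Lie type: it has a $BN$-pair with Borel subgroup $B=U\rtimes H$, where $U$ is the Sylow $3$-subgroup of order $q^3$ with its explicit parametrization by $\field_q^3$ and known commutator formulas, $H=\gen{\Gamma_3}$ is the cyclic torus of order $q-1$, and the Weyl-group representative is a short fixed word; every $g\in\Ree$ then has a unique Bruhat-type normal form ($uh$ if $g\in B$, and $u_1hnu_2$ otherwise). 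From the matrix $g$, the cell and all the $\field_q$-parameters are recovered by deterministic linear and low-degree polynomial algebra over $\field_q$ (cube roots are single-valued in characteristic $3$, via the Frobenius, so no branching is needed there); each root-subgroup element $x_r(\lambda)$ is then turned into a word of length $\poly(\log q)$ by expanding $\lambda$ in an $\field_3$-basis of $\field_q$ and producing the needed powers of $\omega$ by repeated squaring of $\Gamma_3$, where one checks that the relevant root-character exponents are coprime to $q-1$ (e.g.\ $\gcd(2t-1,q-1)=1$ for $t=3^a$), so that only modular inverses, never discrete logarithms, are required; torus elements are handled through the Chevalley-type identities $n_r(\lambda)=x_r(\lambda)x_{-r}(-\lambda^{-1})x_r(\lambda)$ and $h_r(\lambda)=n_r(\lambda)n_r(1)^{-1}$. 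This is essentially a deterministic, ``canonical-form'' version of the constructive-recognition results of \cite{Baarnhielm14,Kemper+01}, and it is precisely because these groups lie outside the reach of the short-presentation machinery of \cref{th:p1} that they must be singled out here.

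Given $f$ (hence $\bar f$), Arthur runs the homomorphism test: he samples $s,s'$ near-uniformly from $\Ree\le\GL(7,q)$ using \cref{th_babai} (the matrix group $\GL(7,q)$ is a black-box group with $\poly(\log q)$-time oracles), computes $\bar f(ss'),\bar f(s),\bar f(s')$, and checks $\bar f(ss')=\bar f(s)\bar f(s')$ by testing $f(ss')f(s')^{-1}f(s)^{-1}\in L$ with Watrous' membership algorithm; he repeats enough times to distinguish passing probability $1-o(1)$ from $<9/10$. If the test passes, \cref{lemma:homo} yields a homomorphism $\phi\colon\Ree\to K$ with $\Pr_x[\bar f(x)\neq\phi(x)]\le 1/10$, and since $\Ree$ is simple $\phi$ is either trivial or injective. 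Arthur rules out the trivial case by sampling a nonidentity $x\in\Ree$ and rejecting if $f(x)\in L$ (impossible when $\bar f$ is an isomorphism, but happening with probability $\ge 9/10$ per sample when $\phi$ is trivial). To certify that the injective $\phi$ is \emph{onto}, Merlin additionally supplies $y_\beta,y_\gamma\in\Ree$ (each verified to lie in $\Ree$ by running the word algorithm and re-evaluating the output), and Arthur checks $\phi(y_\beta)=\beta L$ and $\phi(y_\gamma)=\gamma L$; since individual elements may fall in the bad $1/10$-fraction, he computes $\phi(y)$ by the standard trick $\phi(y)=\bar f(yr)\bar f(r)^{-1}$ for random $r$, taking the plurality over $L$-cosets (coset comparisons again via Watrous), which returns $\phi(y)$ with overwhelming probability. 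If both checks pass, then $\beta L,\gamma L\in\phi(\Ree)$, hence $K=\gen{\beta L,\gamma L}\le\phi(\Ree)\le K$, so $\phi$ is an isomorphism and $K\cong\Ree$.

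For completeness, an honest Merlin holding an isomorphism $\iota\colon\Ree\to K$ sends $g_i\in\bbgroup$ with $g_iL=\iota(\Gamma_i)$ and $y_\beta,y_\gamma\in\Ree$ with $\iota(y_\beta)=\beta L$, $\iota(y_\gamma)=\gamma L$; then word-evaluation gives $\bar f=\iota$ exactly, so every test is passed except with the $1/\poly$ failure probability of the polynomially many Watrous subroutines. For soundness, fix a no-instance $K\not\cong\Ree$ and any certificate: if the straight-line-program check fails Arthur rejects deterministically; otherwise $\bar f$ maps into $K$, and either the homomorphism test fails (the $p<9/10$ case, rejected after amplification), or $\phi$ exists and is trivial (rejected by the nonidentity check), or $\phi$ is injective — but then $\phi(\Ree)\cong\Ree$ is a proper subgroup of $K$, so the plurality-recovered value of $\phi(y_\beta)$ equals $\beta L$ only under a correct computation would contradict $K\not\cong\Ree$, whence that check fails with overwhelming probability. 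A union bound over the polynomially many randomized subroutines keeps the total acceptance probability below $1/3$. All of Arthur's computation runs in time $\poly(\log q,\log\abs{\bbgroup})$ and the certificate has polynomial length, so {\tt ReeIso}$(\bbgroup)$ is in $\QCMA$.
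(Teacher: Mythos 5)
Your overall architecture — homomorphism test from $\Ree$ into $K/L$, using Watrous' solvable-membership algorithm to compare cosets, with a plurality-vote self-corrector to extract the true homomorphism and a surjectivity check to upgrade it to an isomorphism — is the same as the paper's, and your surjectivity check (Merlin supplies $y_\beta,y_\gamma\in\Ree$ and Arthur checks $\phi(y_\beta)=\beta L$, $\phi(y_\gamma)=\gamma L$) is a logically equivalent variant of the paper's check ($\majproc(\Gamma_i)=g_iL$ together with certificates that $\beta,\gamma\in\gen{g_1,g_2,g_3,L}$).

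The genuine gap is in how you propose to implement Task~A (constructive membership in the standard representation of $\Ree$). You posit a \emph{deterministic, classical}, polynomial-time algorithm that canonically writes any $g\in\Ree\le\GL(7,q)$ as a word in $\Gamma_1,\Gamma_2,\Gamma_3$, arguing via a Bruhat normal form and claiming that discrete logarithms can be avoided because the relevant character exponents are coprime to $q-1$. This claim is not established, and your sketch does not address the actual difficulty. The Sylow $3$-subgroup of $\oRee{q}$ is not a one-parameter root group: it has order $q^3$, is parametrized by three coordinates in $\field_q$, and carries a complicated non-abelian group law involving the Tits endomorphism. The Chevalley identities $n_r(\lambda)=x_r(\lambda)x_{-r}(-\lambda^{-1})x_r(\lambda)$ and $h_r(\lambda)=n_r(\lambda)n_r(1)^{-1}$, which you borrow from untwisted groups, do not directly apply, and turning ``parametrize the unipotent, read off the Bruhat cell, reduce torus elements to unipotents'' into a provably correct, deterministic, discrete-log-free, $\poly(\log q)$-time procedure would amount to a new constructive-recognition result for $\Ree$. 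The paper instead invokes the Babai--Beals--Seress constructive-membership algorithm for matrix groups of odd characteristic (\cref{th1}), which is randomized and uses number-theoretic oracles, and makes it a polynomial-time quantum algorithm by implementing the discrete-log and factoring oracles with Shor. Because that algorithm is randomized and may fail on a small fraction of inputs, the paper then has to do nontrivial bookkeeping (the seed $\seed$, the sets $\valid$ and $\good$, and \cref{claim1}) to derandomize its output as a function of $(\seed,M)$ and to bound the effect of the bad inputs on the homomorphism and majority tests. Your proposal erases all of this bookkeeping by assuming determinism; if the determinism claim fails, the plan has to be rebuilt with exactly the seed machinery you skipped.

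A smaller point: even granting your canonical-form algorithm, you would still need to verify rigorously that the outputs form a genuine function of $g$ (so that \cref{lemma:homo} applies), and to justify sampling $r\in\Ree$ nearly uniformly. The paper does the latter via \cref{th_babai} applied to the black-box group $\bbgroup$ restricted to the subgroup, whereas you sample in $\GL(7,q)$ — that is fine, but it is worth noting that the sampling must be from the simple group $\Ree$, not from $\GL(7,q)$; you do say the right thing, just keep the distinction clear. These are minor; the load-bearing issue is the unsupported deterministic constructive-membership claim.
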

We prove \cref{prop:Ree} in \cref{sub:proofmain}, after introducing in \cref{sub:Ree2} tools to handle $\Ree$ in its standard representation.

\subsection{Constructive membership for the Ree group in its standard representation}\label{sub:Ree2}
Consider the standard representation $\Ree=\gen{\Gamma_1,\Gamma_2,\Gamma_3}$ introduced in \cref{sub:Ree}.
We will need to implement constructive membership in this representation, i.e., given a matrix $M\in\GL(7,q)$ that belongs to $\gen{\Gamma_1,\Gamma_2,\Gamma_3}$, find a straight-line program over $\{\Gamma_1,\Gamma_2,\Gamma_3\}$ that reaches $M$. Since $\field_q$ has characteristic 3, we can use the following randomized algorithm for constructive membership in matrix groups of odd characteristic from \cite{Babai+STOC09}:
\begin{theorem}[Theorem 2.3 in \cite{Babai+STOC09}]\label{th1}
    There exists a randomized polynomial-time algorithm that solves the constructive membership problem in matrix groups of odd characteristic, given number-theoretic oracles.
\end{theorem}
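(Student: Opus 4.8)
The plan is to follow the ``composition tree'' strategy of Babai, Beals and Seress~\cite{Babai+STOC09}: constructive membership in an arbitrary matrix group $G=\gen{g_1,\ldots,g_k}\le\GL(d,q)$, with $q$ a power of an odd prime, is reduced --- via Aschbacher's subgroup classification used as a recursion scheme --- to constructive recognition of a bounded number of (nearly) simple groups, for each of which straight-line programs can be produced from random elements together with the number-theoretic oracles. Here the number-theoretic oracles are one for integer factorization and one for discrete logarithm over finite fields; since the order of a matrix in $\GL(d,q)$ is computable in polynomial time once the relevant factors $q^i-1$ are factored, the factoring oracle also yields an order oracle.

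First I would build the reduction tree. Using the MeatAxe and Norton's irreducibility criterion (with its randomized analysis over finite fields), test whether the natural module $\field_q^d$ is irreducible and absolutely irreducible; in the negative cases pass to a proper submodule, or realize the action over a field extension, in either case strictly decreasing the relevant parameter. Once $G$ acts absolutely irreducibly (after the easy step of passing to $G$ modulo its scalar subgroup), run the standard tests for the remaining Aschbacher classes: imprimitivity (find a block system, giving a homomorphism onto a subgroup of $\Sym(m)$ with $m\mid d$), tensor decomposition and tensor induction, the extraspecial-normalizer class, and the classical-form-preserving classes. Each positive test yields a constructively computable homomorphism of $G$ onto a group of strictly smaller dimension (or onto a permutation group) together with generators for its kernel, on which one recurses; the recursion bottoms out at groups that, modulo scalars, are absolutely irreducible and almost simple.

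Next I would handle the base case: constructive recognition of an almost simple black-box group. Using random search together with the orders of random elements --- factored by the integer-factoring oracle --- and the order of $G$, identify its isomorphism type (alternating, classical of Lie type in the defining characteristic, exceptional of Lie type, a cross-characteristic representation of a group of Lie type, or sporadic). In the main case, a classical group in its natural representation, construct standard generators: locate two long-root elements generating a copy of $\mathrm{SL}(2,q)$, constructively recognize that $\mathrm{SL}(2,q)$ --- this is where the discrete-logarithm oracle enters, since writing a torus element of $\mathrm{SL}(2,q)$ as a straight-line program over standard generators amounts to a discrete logarithm in $\field_q$ or $\field_{q^2}$ --- then spread the recognized $\mathrm{SL}(2,q)$ across all root positions to obtain every Chevalley root element $x_{ij}(\lambda)$ as an evaluated straight-line program, and finally, given a target matrix $g$, run Gaussian elimination with these root elements to write $g$ explicitly as a product of them, each field scalar appearing being converted into a straight-line program by a discrete logarithm. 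The other classical types are handled analogously through their root systems and the Curtis--Steinberg--Tits presentation; the exceptional Lie type, alternating and sporadic cases are dispatched by dedicated constructive-recognition routines (the last two reducing essentially to bounded-degree permutation actions or small natural matrix actions).

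Finally I would assemble the tree: to decide $g\in G$ and output a straight-line program, push $g$ down through the constructive homomorphisms to the leaves, use the constructive recognition at each leaf to express the image as a straight-line program over that leaf's generators (failure certifying $g\notin G$), and then pull these programs back up, composing with straight-line programs for the kernel generators and for preimages of the leaf standard generators, to obtain a single straight-line program over $g_1,\ldots,g_k$ reaching $g$. Every step runs in Las Vegas polynomial time given the two oracles. The hard part --- and essentially the whole content of~\cite{Babai+STOC09} --- will be the constructive-recognition base case: one must treat the Aschbacher class $\mathcal{S}$ of simple groups that occur in low-dimensional exceptional representations, where the group must first be reconstructed in its natural representation before standard generators are available; and one must keep the entire recursion --- the repeated irreducibility, block-system and tensor-decomposition tests, and the random searches for suitable root elements --- within polynomial time and with success probability bounded away from zero, which rests on the theory of nearly uniform random elements in black-box groups and on careful bounds for the size of the reduction tree.
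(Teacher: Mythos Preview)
The paper does not prove this statement: Theorem~\ref{th1} is quoted verbatim from \cite{Babai+STOC09} and used as a black box, with no proof or proof sketch supplied. There is therefore nothing in the paper to compare your proposal against.

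Your write-up is a reasonable high-level outline of the Babai--Beals--Seress composition-tree approach, but for the purposes of this paper it is unnecessary: all that is used downstream (in Corollary~\ref{cor} and the proof of Theorem~\ref{prop:Ree}) is the existence of such an algorithm, together with the observation that the number-theoretic oracles can be replaced by Shor's algorithm and that the randomness can be separated out into a seed $\lambda$ so that the output straight-line program depends only on the input and $\lambda$.
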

The output of the randomized algorithm of \cref{th1} is not unique: the straight-line program output by the algorithm depends on the random bits used by the algorithm. Since in our applications we will need to specify a unique output, we consider the algorithm of \cref{th1} as a deterministic algorithm that receives as an auxiliary input a seed of random bits (which we denote by $\seed$). The number-theoretic oracles referred to in \cref{th1} are oracles for integer factoring and discrete logarithm. Since these two tasks can be implemented in polynomial time using a quantum computer \cite{Shor97}, this gives a polynomial-time quantum algorithm for constructive membership in $\Ree$, which we denote by $\Qq_q(\seed)$. Here is the precise statement.
\begin{corollary}\label{cor}
    For any $q$, 
    there exists a collection of polynomial-time quantum algorithms 
    \[\Big\{\Qq_q(\seed)\:|\:\seed\in\{0,1\}^{\poly(\log q)}\Big\}\] that receive as input a matrix $M\in\GL(7,q)$ and satisfy the following condition for any $M\in\Ree$:
    For a fraction at least $1-10^{-7}$ of the $\seed$'s, $\Qq_q(\seed)$ outputs ``success'' with probability at least $1-1/\poly(q)$. When $\Qq_q(\seed)$ outputs ``success'' it also outputs a straight-line program over $\{\Gamma_1,\Gamma_2,\Gamma_3\}$ reaching~$M$. This straight-line program depends only on $M$ and $\seed$ (i.e., it does not depend on the measurement outcomes of the quantum algorithm).
\end{corollary}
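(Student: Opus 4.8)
The plan is to start from the classical algorithm of \cref{th1} applied to the black-box group $\gen{\Gamma_1,\Gamma_2,\Gamma_3}=\Ree\le\GL(7,q)$, which has odd characteristic $3$, and to make three modifications: derandomize it by folding its internal coin tosses into the seed $\seed$, amplify its success probability, and replace its calls to the integer-factoring and discrete-logarithm oracles by Shor's algorithm \cite{Shor97}. The one point that needs genuine care is that the straight-line program returned must be a \emph{deterministic} function of $M$ and $\seed$ alone, never of the measurement outcomes of the quantum subroutines.

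First I would record that the algorithm $\mathcal{A}$ of \cref{th1} runs in time polynomial in the input length, which for a matrix in $\GL(7,q)$ is $O(\log q)$; hence $\mathcal{A}$ uses at most $\poly(\log q)$ random bits, makes at most $T=\poly(\log q)$ oracle calls, and all integers arising in it (orders of elements, etc.) have $\poly(\log q)$-bit representations. Since a candidate straight-line program can always be checked by evaluating it in $\GL(7,q)$ and comparing to $M$ (a deterministic computation), we may run $\mathcal{A}$ a sufficiently large constant number of times on independent random strings and output the first resulting program, in this fixed order, that reaches $M$; this amplifies the success probability, for any $M\in\Ree$, to at least $1-10^{-7}$, while still using $\poly(\log q)$ random bits. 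Folding all of these random bits into $\seed\in\{0,1\}^{\poly(\log q)}$ turns $\mathcal{A}$ into a deterministic oracle procedure $\mathcal{A}(\seed,\cdot)$, and we additionally insist that it verify its own output before declaring ``success''. Because integer factoring (the prime factorization of $n$) and discrete logarithm (the exponent of a given element to a given base of known order) are \emph{functions}, the output of $\mathcal{A}(\seed,M)$ run with the true oracles is a well-defined value $\mathrm{SLP}(\seed,M)$, which is a straight-line program reaching $M$ for at least a $1-10^{-7}$ fraction of the seeds, and ``fail'' otherwise; in particular ``success'' is never returned with an incorrect program.

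Next I would implement the oracles quantumly. A call to the factoring oracle on input $n$ is answered by running Shor's factoring algorithm (recursively, as needed) to obtain a candidate complete factorization, verifying each claimed prime factor with a deterministic polynomial-time primality test and checking that the factors multiply to $n$; on failure, repeat, and after $\poly(\log q)$ unsuccessful repetitions abort the whole computation and output ``fail''. A call to the discrete-logarithm oracle is handled analogously: run Shor's discrete-logarithm algorithm, verify the returned exponent by exponentiation (a deterministic computation) and a range check, repeat on failure, and abort after $\poly(\log q)$ repetitions. Since Shor's algorithms succeed with constant probability per run, $O(\log(T\cdot\poly(q)))=\poly(\log q)$ repetitions per oracle call make each call correct with probability at least $1-\tfrac{1}{T\cdot\poly(q)}$, so by a union bound all $T$ calls are answered by their unique correct values with probability at least $1-1/\poly(q)$. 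Finally, the algorithm evaluates its candidate output program in $\GL(7,q)$ and declares ``success'' only if it reaches $M$, outputting ``fail'' otherwise. Call the resulting polynomial-time quantum algorithm $\Qq_q(\seed)$.

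To conclude, fix $M\in\Ree$. On the event that no oracle-implementation subroutine aborts (probability at least $1-1/\poly(q)$), every oracle query is answered by its unique correct value, so the whole run of $\Qq_q(\seed)$ follows exactly the deterministic execution of $\mathcal{A}(\seed,M)$ with the true oracles; hence its output equals $\mathrm{SLP}(\seed,M)$, which depends only on $M$ and $\seed$ and, for at least a $1-10^{-7}$ fraction of the seeds, is a straight-line program over $\{\Gamma_1,\Gamma_2,\Gamma_3\}$ reaching $M$. On the complementary event the algorithm aborts and outputs ``fail'', and the final self-verification guarantees that ``success'' is never output together with an incorrect program. Thus $\Qq_q(\seed)$ satisfies every requirement of the corollary. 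I expect the determinism requirement to be the only real obstacle, and its resolution is precisely the observation above: both number-theoretic primitives have unique, efficiently and \emph{deterministically} verifiable answers, so the randomized, measurement-dependent Shor subroutines can be used as though they were noiseless oracles — the measurement outcomes affect only whether the computation aborts, never the value of the straight-line program it returns.
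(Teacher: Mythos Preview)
Your proposal is correct and follows the same approach as the paper, which presents the corollary as an immediate consequence of the paragraph preceding it: fold the random bits of the algorithm of \cref{th1} into the seed $\seed$ and replace the number-theoretic oracles by Shor's algorithm. Your version is more detailed than the paper's sketch, in particular the amplification step to reach the $1-10^{-7}$ bound and the careful justification (via uniqueness and deterministic verifiability of factoring and discrete-log answers) that the returned straight-line program depends only on $M$ and $\seed$; the paper simply asserts both points.
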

Let $\valid$ denote the subset of $\Ree$ containing all the $M\in\Ree$ such that $\Qq_q(\seed)$ on input $M$ outputs ``success'' with probability at least $1-1/\poly(q)$, for the same polynomial as in \cref{cor}. Let $\good$ denote the $\seed$'s such that 
\[
    \Pr_{M\in\Ree}[M\in\valid]\ge 1-10^{-5}\,.
\]
The following claim follows from \cref{cor} by a counting argument.
\begin{claim}\label{claim1}
    When $\seed$ is taken uniformly at random, 
    \[
    \Pr[\seed\in\good]\ge 0.99\,.
    \]
\end{claim}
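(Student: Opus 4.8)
\textbf{Proof plan for Claim~\ref{claim1}.} The plan is to unpack the quantifiers in \cref{cor} and in the definitions of $\valid$ and $\good$, and then apply Markov's inequality (equivalently, an averaging argument) to the event ``$\seed$ fails to be good.'' I would begin by fixing the polynomial $p(q)$ appearing in \cref{cor}, so that $\valid$ is, by definition, exactly the set of $M\in\Ree$ on which $\Qq_q(\seed)$ outputs ``success'' with probability at least $1-1/p(q)$; note that $\valid$ depends on $\seed$. The content of \cref{cor} is that for all $M\in\Ree$, the fraction of seeds $\seed$ for which $M\in\valid$ is at least $1-10^{-7}$. Reading this as a statement about the indicator $\mathbf{1}[M\in\valid]$ averaged over a uniformly random pair $(M,\seed)$, we get
\[
\Pr_{M\in\Ree,\ \seed}\big[M\notin\valid\big]=\E_{M\in\Ree}\Big[\Pr_{\seed}[M\notin\valid]\Big]\le 10^{-7}.
\]

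Next I would swap the order of the two expectations (both are over finite uniform distributions, so Fubini is trivial) to write this same quantity as $\E_{\seed}\big[\Pr_{M\in\Ree}[M\notin\valid]\big]\le 10^{-7}$. Now the random variable $X(\seed):=\Pr_{M\in\Ree}[M\notin\valid]$ is nonnegative with expectation at most $10^{-7}$, so Markov's inequality gives, for the threshold $10^{-5}$,
\[
\Pr_{\seed}\big[X(\seed)\ge 10^{-5}\big]\le \frac{10^{-7}}{10^{-5}}=10^{-2}.
\]
By the definition of $\good$, the seed $\seed$ lies in $\good$ precisely when $\Pr_{M\in\Ree}[M\in\valid]\ge 1-10^{-5}$, i.e.\ when $X(\seed)\le 10^{-5}$; in particular $\{X(\seed)<10^{-5}\}\subseteq\good$. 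Hence $\Pr_{\seed}[\seed\notin\good]\le \Pr_{\seed}[X(\seed)\ge 10^{-5}]\le 10^{-2}$, which rearranges to $\Pr[\seed\in\good]\ge 0.99$, as claimed.

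There is essentially no hard part here: the statement is a textbook averaging/Markov argument, and the only thing to be careful about is the bookkeeping of which objects depend on $\seed$ and which on $M$ — in particular that $\valid$ is seed-dependent, and that the ``$1-10^{-7}$ fraction of seeds'' guarantee in \cref{cor} holds \emph{pointwise} in $M$, which is exactly what licenses pulling the expectation over $M$ outside. One minor subtlety worth a sentence in the write-up is that the definition of $\good$ uses the same polynomial $p(q)$ as in \cref{cor}, so that $M\in\valid$ means the same thing in both places; once that is fixed, the chain of inequalities above goes through verbatim.
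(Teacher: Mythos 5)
Your proof is correct, and it is essentially the same argument as the paper's: the paper phrases it as a double-counting of pairs $(\seed,M)$ with $M\in\valid$, while you phrase it as swapping the order of expectation and applying Markov's inequality — these are two standard presentations of the identical averaging step, with the same constants and the same final bound $1-10^{-7}/10^{-5}=0.99$.
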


\begin{proof}
    Let $\delta=2^{\poly(\log q)}$ denote the total number of seeds $\seed$. The number of pairs $(\seed,M)$ such that $M\in\valid$ is $(1-10^{-7})\delta|\Ree|$. Note that for each $\seed\notin \good$ there are at most $(1-10^{-5})\abs{\Ree}$ pairs $(\seed,M)$ with $M\in\valid$, while for each $\seed\in\good$ there are (obviously) at most $\abs{\Ree}$ pairs $(\seed,M)$ with $M\in\valid$. We thus have
    \[
        \abs{\good}\abs{\Ree}+(\delta-\abs{\good})(1-10^{-5})\abs{\Ree}\ge (1-10^{-7})\delta|\Ree|,
    \]
    which implies
    \[
        \frac{\abs{\good}}{\delta}\ge \frac{10^{-5}-10^{-7}}{10{^{-5}}}=0.99,
    \]
    as claimed.
\end{proof}

\subsection{Proof of \cref{prop:Ree}}\label{sub:proofmain}
We are now ready to give the proof of the main result of this section.

\begin{proof}[Proof of \cref{prop:Ree}]
We write $K=\gen{\beta,\gamma,L}$.
We use the standard representation $\Ree=\gen{\Gamma_1,\Gamma_2,\Gamma_3}$ introduced in \cref{sub:Ree} and studied in \cref{sub:Ree2}. 

\paragraph{Merlin's witness.}
Merlin sends three elements $g_1,g_2,g_3\in K$ along with certificates of the following memberships: $g_i\in K$ for all $i\in\{1,2,3\}$, $\beta\in \gen{g_1,g_2,g_3,L}$ and $\gamma\in \gen{g_1,g_2,g_3,L}$.

\paragraph{Definition of the isomorphism candidate.}
Before explaining Arthur's checking procedure, we introduce a map $f_\seed\colon \Ree\to K/L$ defined by the three elements $g_1,g_2,g_3$ sent by Merlin and a binary string $\seed$ that will be later chosen by Arthur. 

Given as input $x\in \valid$, Algorithm $\Qq_q(\seed)$ described in \cref{sub:Ree2} outputs with probability at least $1-1/\poly(q)$ a straight-line program $(w_1,w_2,\ldots,w_s)$ over $\{\Gamma_1,\Gamma_2,\Gamma_3\}$ reaching $x$, where $s=\polylog(|\Ree|)$.
Remember that this means that $w_s=x$ and each $w_i$ is either 
\begin{itemize}
    \item[(i)] a member of $\{\Gamma_1,\Gamma_2,\Gamma_3\}$, or 
    \item[(ii)] an element of the form $w_j^{-1}$ or $w_jw_k$ from some $j,k<i$.
\end{itemize}
From the output $(w_1,w_2,\ldots,w_s)$ of $\Qq_q(\seed)$ on input $x$, we define a new straight-line program $(w'_1,w'_2,\ldots,w'_s)$ over $\{g_1,g_2,g_3\}$ by replacing $\Gamma_1$ by $g_1$, $\Gamma_2$ by $g_2$ and $\Gamma_3$ by $g_3$ in each Case~(i). 
For instance, the straight-line program 
\[
    (w_1=\Gamma_1,w_2=\Gamma_2,w_3=w_1w_1,w_4=w_3w_2, w_5=w_4^{-1},w_6=\Gamma_3,w_7=w_5w_6)
\] 
reaching the element $(\Gamma_1\Gamma_1\Gamma_2)^{-1}\Gamma_3$ becomes the straight-line program 
\[
    (w'_1=g_1,w'_2=g_2,w'_3=w'_1w'_1,w'_4=w'_3w'_2, w'_5={w'}_{\!\!4}^{-1},w'_6=g_3,w'_7=w'_5w'_6)
\]  
reaching the element $(g_1g_1g_2)^{-1}g_3$. We denote by $g_\seed(x)$ the element of $K$ reached by $(w'_1,w'_2,\ldots,w'_s)$. 

Define the map $f_\seed\colon \Ree\to K/L$ as follows: for any $x\in\Ree$,
\[
    f_\seed(x)=
    \left\{
        \begin{array}{cl}
             g_\seed(x) L &\textrm{ if } x\in\valid\,,\\
             L&\textrm{ otherwise}\,.
         \end{array}
    \right.
\]
We state the following elementary, but crucial, property of this map.

\begin{claim}\label{claim2}
    For any $g_1,g_2,g_3\in K$, if there exists a homomorphism $\varphi\colon\Ree\to K/L$ such that $\varphi(\Gamma_i)=g_iL$ for each $i\in\{1,2,3\}$, then $f_\seed(x)=\varphi(x)$ for any $x\in\valid$.
\end{claim}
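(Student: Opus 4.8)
\textbf{Plan for the proof of Claim~\ref{claim2}.}
The statement is essentially that $f_\seed$, restricted to $\valid$, is forced to agree with any homomorphism that extends the partial assignment $\Gamma_i\mapsto g_iL$. The idea is that $f_\seed$ was built to mimic $\varphi$ on $\valid$: on such an $x$ it just replays the straight-line program produced by $\Qq_q(\seed)$ with $\Gamma_i$ replaced by $g_i$, and then projects into $K/L$. So I would argue by tracking a straight-line program through both $\varphi$ and the syntactic substitution used to define $g_\seed$.

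Concretely, fix $x\in\valid$ and let $(w_1,\ldots,w_s)$ be the straight-line program over $\{\Gamma_1,\Gamma_2,\Gamma_3\}$ that $\Qq_q(\seed)$ outputs on input $x$ (this program depends only on $x$ and $\seed$, by \cref{cor}, so it is well-defined as soon as we know that the ``success'' branch is the one used to define $f_\seed$; since $x\in\valid$ such a program exists). Let $(w'_1,\ldots,w'_s)$ be the substituted program over $\{g_1,g_2,g_3\}$, so that $g_\seed(x)=w'_s$. The plan is to prove by induction on $i\in[s]$ that
\[
    w'_i L = \varphi(w_i).
\]
For the base case, if $w_i\in\{\Gamma_1,\Gamma_2,\Gamma_3\}$ then $w'_i$ is the corresponding $g_j$ and $w'_iL = g_jL = \varphi(\Gamma_j) = \varphi(w_i)$ by hypothesis. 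For the inductive step, if $w_i = w_j w_k$ with $j,k<i$, then $w'_i = w'_j w'_k$, and using that the quotient map $K\to K/L$ is a homomorphism together with the induction hypothesis and the fact that $\varphi$ is a homomorphism,
\[
    w'_i L = (w'_j L)(w'_k L) = \varphi(w_j)\varphi(w_k) = \varphi(w_j w_k) = \varphi(w_i);
\]
the case $w_i = w_j^{-1}$ is identical using $\varphi(w_j^{-1}) = \varphi(w_j)^{-1}$ and $(w'_jL)^{-1} = {w'_j}^{-1}L$. Taking $i=s$ gives $g_\seed(x)L = \varphi(w_s) = \varphi(x)$, and since $x\in\valid$ we have $f_\seed(x) = g_\seed(x)L$ by definition, so $f_\seed(x) = \varphi(x)$.

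I do not anticipate a real obstacle here: the content is a routine induction on the length of a straight-line program, exploiting only that $\varphi$ and the canonical projection $K\to K/L$ are homomorphisms and that the construction of $g_\seed$ is a literal symbol-for-symbol substitution into the same program. The one point to state carefully is that the straight-line program used in the definition of $f_\seed$ on $\valid$ is exactly the one guaranteed by \cref{cor} — i.e.\ $f_\seed$ is defined via the ``success'' output — so that the induction is run on a genuinely well-defined program; this is immediate from the definitions of $\valid$ and $f_\seed$ given just above the claim.
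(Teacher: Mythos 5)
Your proof is correct. The paper actually states Claim 2 without proof, labeling it ``elementary,'' and your routine induction on the length of the straight-line program (substituting $\Gamma_i\mapsto g_i$, pushing everything through the quotient homomorphism $K\to K/L$ and the given homomorphism $\varphi$) is exactly the argument the authors are implicitly invoking; you also correctly flag the one subtlety, that the straight-line program used to define $g_\seed(x)$ is canonical because it depends only on $x$ and $\seed$ (by \cref{cor}) and $f_\seed$ takes the ``success'' branch precisely on $\valid$.
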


\paragraph{Arthur's checking procedure.}
Arthur's main verification procedure is the procedure $\htest$ described below, which uses $\majproc(x)$ as a subprocedure. This verification procedure uses (at \cref{step:sample1} of $\htest$ and \cref{step:sample2} of $\majproc(x)$) the sampling algorithm from \cref{th_babai} (with $\varepsilon=1/\poly(\abs{\bbgroup})$) to sample a random element a constant number of times. As explained below, it also uses a constant number of times  Watrous' algorithm (\cref{prop:Watrous}) for membership in a solvable group. 
Since the sampling algorithm only performs approximate sampling and the second algorithm only succeeds with high probability, this may introduce some errors. These errors are nevertheless exponentially small, and thus have a negligible impact on the overall success probability. Additionally, when applying Algorithm $\Qq_q(\seed)$ on an input $M\in\valid$, the error probability (which is exponential small) also has a negligible impact on the overall success probability. For simplicity, we will thus simply ignore all these failure probabilities in the discussion below.

\SetKwFor{RepTimes}{repeat}{:}{end for}
\begin{center}
    \begin{minipage}{13 cm} \vspace{2mm}
    \begin{algorithm}[H]
        \SetAlgorithmName{}{}{List of Algorithms}
        \nonl \hspace{-4mm}$\htest$ 
        \hspace{4mm}\tcp*[l]{Checks if $K/L$ is isomorphic to $\Ree$}

        Choose $\seed$ uniformly at random\,;

        \RepTimes{$\mathrm{12\, times}$}
        {
            Take two elements $r_1$ and $r_2$ uniformly at random from $\Ree$; \label{step:sample1}

            \lIf{$g_\seed(r_1r_2)g_\seed(r_2)^{-1}g_\seed(r_1)^{-1}\notin L$}{output ``no''\,;}\label{step:htest}
             
        }

        \lIf{$K = L$ \,{\bf or}\, $\mathrm{Merlin's\,membership\, certificates\,are\,incorrect}$}{output ``no''\,;}\label{step:ttest}

        \lIf{$\majproc(\Gamma_i)= g_iL\,\, \mathrm{for\,all}\,\, i\in\{1,2,3\}$}{output ``yes''\,;}\label{step:btest}
        \lElse{output ``no''\,;}
        \end{algorithm}
    \end{minipage}
\end{center}

\begin{center}
    \begin{minipage}{14.8 cm} \vspace{2mm}
    \begin{algorithm}[H]
        \SetAlgorithmName{}{}{List of Algorithms}
        \nonl \hspace{-4mm}$\majproc(x)$ 
        \hspace{4mm}\tcp*[l]{computes $\phi(x)$ for the $\phi$ of \cref{lemma:homo}}

        $s\gets 50$\,;
        
        \For{$i$ {\bf from }$1$ \KwTo $s$}
        {
            Take an element $r$ uniformly at random from $\Ree$\,; \label{step:sample2}
        
            $h_i=g_\seed(xr)g_\seed(r^{-1})$\,;\label{step:maj1}
        }
        {\bf return} the coset of $L$ that appears the most frequently among $h_1L,h_2L,\ldots, h_sL$  (breaking ties arbitrarily)\,;\label{step:maj2}

        \end{algorithm}
    \end{minipage}
\end{center}

At \cref{step:ttest} of $\htest$, to check if $K=L$ we only need to check if $\beta\in L$ and $\gamma\in L$, which can be done using \cref{prop:Watrous}. Checking if Merlin's certificates are correct is straightforward. 
At \cref{step:htest}, we compute 
$g_\lambda(r_1)$, $g_\lambda(r_2)$ and 
$g_\seed(r_1r_2)$ in polynomial time by decomposing $r_1$, $r_2$ and $r_1r_2$ using Algorithm $\Qq_q(\seed)$, compute $g_\seed(r_1r_2)g_\seed(r_2)^{-1}g_\seed(r_1)^{-1}$ using the black box for the group $\bbgroup$, and test membership in $L$. At \cref{step:btest}, we use \cref{prop:Watrous} to check if $\majproc(\Gamma_i)=g_i L$. 

At \cref{step:maj1} of $\majproc$, we compute $g_\seed(xr)$ and $g_\seed(r^{-1})$ in polynomial time by using Algorithm $\Qq_q(\seed)$, and compute $g_\seed(xr)g_\seed(r^{-1})$ using the black box for the group $\bbgroup$. 
At \cref{step:maj2} of $\majproc$, we use \cref{prop:Watrous} to compare the cosets and select the one that appears the most frequently.


\paragraph{Completeness.}
If $K/L\cong\Ree$, then there exists an isomorphism $\varphi\colon \Ree\to K/L$. We assume below that $\seed\in\good$, which happens with probability at least 0.99 (\cref{claim1}).

Merlin sends $g_1,g_2, g_3\in K$ such that $g_iL=\varphi(\Gamma_i)$ for each $i\in\{1,2,3\}$, as well as a correct certificate of the membership $g_i\in K$, for each $i\in\{1,2,3\}$. Since $\varphi$ is surjective, Merlin can also send correct certificates of the memberships $\beta\in \gen{g_1,g_2,g_3,L}$ and $\gamma\in \gen{g_1,g_2,g_3,L}$. 

Since $\seed\in\good$, we know that $r_1,r_2$ and $r_1r_2$ are all in $\valid$ with probability at least 0.99997, in which case 
$f_\seed(r_1r_2)=g_\seed(r_1r_2)L$, $f_\seed(r_1)=g_\seed(r_1)L$ and $f_\seed(r_2)=g_\seed(r_2)L$ hold.
\cref{claim2} also implies that $f_\seed(r_1r_2)=\varphi(r_1r_2)$, $f_\seed(r_1)=\varphi(r_1)$ and $f_\seed(r_2)=\varphi(r_2)$.
Since $\varphi(r_1r_2)=\varphi(r_1)\varphi(r_2)$, we obtain $g_\seed(r_1r_2)g_\seed(r_2)^{-1}g_\seed(r_1)^{-1}\in L$.
We conclude that Procedure $\htest$ does not output ``no'' at \cref{step:htest} with probability at least $1-12\cdot 0.00003>0.99$. 

From a similar argument, at \cref{step:maj1} of $\majproc(x)$ we have $h_iL=f_\seed(xr) f_\seed(r^{-1})=\varphi(xr)\varphi(r^{-1})=\varphi(x)$ with probability at least $0.99998$.
Among the 50 trials, we always get $h_iL=\varphi(x)$ with probability at least $0.999$,
in which case we have $\majproc(x)=\varphi(x)$. In particular,
we have $\majproc(\Gamma_i)=\varphi(\Gamma_i)=g_iL$ for all $i\in\{1,2,3\}$ with probability at least $(0.999)^3>0.99$.  

The overall probability that $\htest$ outputs ``yes'' is thus at least $(0.99)^3>2/3$. 

\paragraph{Soundness.}
Now consider the case $K/L\not\cong \Ree$. In the following, we assume that $\seed\in\good$, which happens with probability at least 0.99 (\cref{claim1}). Assume for now that $\valid=\Ree$.

Consider first the case
\[
     \Pr_{r_1,r_2\in \Ree}\:[f_\seed(r_1r_2) = f_\seed(r_1)f_\seed(r_2)]< 9/10.
\]
In this case, $\htest$ outputs ``no'' at \cref{step:htest}  at least once during the 12 iterations with probability at least $1-(9/10)^{12}>0.7$. 

Now consider the case 
\[
     \Pr_{r_1,r_2\in \Ree}\:[f_\seed(r_1r_2) = f_\seed(r_1)f_\seed(r_2)]\ge 9/10.
\]
If $K=L$ or Merlin's certificates are incorrect,  $\htest$ outputs ``no'' at \cref{step:ttest}. 
We thus assume below that $K\neq L$ (i.e., $K/L\neq\tgroup$) and Merlin's certificate are correct (i.e., $\gen{g_1,g_2,g_3,L}=K$). 
\cref{lemma:homo} shows that there exists a homomorphism $\phi\colon\Ree\to K/L$ such that for any $x\in\Ree$,
\begin{align*}
\Pr_{r\in \Ree}\:[f_\seed(xr)f_\seed(r^{-1})=\phi(x)] 
&=\Pr_{r\in \Ree}\:[f_\seed(xr)f_\seed(r^{-1})=\phi(xr)\phi(r^{-1})]\\
&=1-\Pr_{r\in \Ree}\:[f_\seed(xr)f_\seed(r^{-1})\neq\phi(xr)\phi(r^{-1})]\\
&\geq 1-\Pr_{r\in \Ree}\:[f_\seed(xr)\neq \phi(xr)] 
- \Pr_{r\in \Ree}\:[f_\seed(r^{-1})\neq \phi(r^{-1})]\\
&\geq 1-\frac{1}{10}-\frac{1}{10}\\
&= \frac{8}{10}.
\end{align*}
Among 50 trials, the expected number of times we get $\phi(x)$ at \cref{step:maj2} of $\majproc(x)$ is thus at least $(8/10)\cdot 50$. 
From Chernoff's bound, this implies that $\phi(x)$ appears at least 26 times among the 50 times with probability at least 
\[
    1-\exp\left(-\frac{(7/20)^2 \cdot (8/10)\cdot 50}{2}\right)>0.9,
\] 
in which case we have $\majproc(x)=\phi(x)$.
 In particular, 
\begin{equation}\label{eq:cond}
    \majproc(\Gamma_i)=\phi(\Gamma_i) \textrm{ for all }i\in\{1,2,3\}
\end{equation} 
holds with probability at least 0.7. If \cref{eq:cond} holds, then there should be an index $i\in\{1,2,3\}$ such that $\majproc(\Gamma_i)\neq g_iL$. Otherwise, $\phi$ would be a surjective homomorphism from $\Ree$ to $K/L$, and thus an isomorphism since $\Ree$ is a simple group and $K/L\neq\tgroup$, which contradicts the assumption $K/L\not\cong \Ree$. The probability that $\htest$ outputs ``no'' at \cref{step:btest} is thus at least 0.7.

We actually have $\valid\neq \Ree$. The probability that the arguments of the $12\cdot 3+50\cdot 2=136$ calls to the function $g_\lambda$ performed by $\htest$ are all in the set $\valid$ is nevertheless at least $1-132\cdot 10^{-5}> 0.98$. The overall probability that $\htest$ outputs ``no'' is thus at least $0.7-0.01-0.02>2/3$. 
\end{proof}

\section{Proof of \cref{th:order}}\label{sec:main}
In this section, we prove \cref{th:order}, i.e., we show that checking that $\abs{G}=m$ is in $\QCMA$, where $G=\gen{g_1,\ldots,g_k}\le\bbgroup$ and $m$ are the inputs of Group Order Verification (here $\bbgroup$ is a black-box group). We divide the proof into two parts: checking that $m$ divides~$|G|$ (\cref{prop:lb}, which is the easy part) and checking that $|G|$ divides $m$ (\cref{prop:ub}, which is the hard part). \cref{prop:lb} and \cref{prop:ub} together immediately imply \cref{th:order}.

\subsection{Checking that \boldmath{$m$} divides the order}
Adapting the classical strategy from \cite[Section~9]{Babai+FOCS84} to the quantum setting by replacing oracles for ``independence testing'' by efficient quantum algorithms dealing with solvable groups (\cref{prop:Watrous}), we obtain the following result.
\begin{proposition}\label{prop:lb}
There exists a $\QCMA$ protocol that checks if $m$ divides $|G|$.
\end{proposition}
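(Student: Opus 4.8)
The plan is to adapt the classical certificate for ``order divides $|G|$'' from \cite[Section~9]{Babai+FOCS84} into a $\QCMA$ protocol. Recall (\cref{lemma:nilpotent}) that $m \mid |G|$ if and only if for every prime power $p^t \mid m$ (in fact it suffices to take, for each prime $p$, the exact power $p^{t}$ dividing $m$) the group $G$ contains a subgroup of order $p^t$. So the first step is: Arthur factors $m = \prod_j p_j^{t_j}$ (using Shor's algorithm, which is available since we are in the quantum setting), and it suffices to exhibit, for each $j$, a subgroup $P_j \le G$ with $\abs{P_j} = p_j^{t_j}$.

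\medskip

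\noindent\textbf{Merlin's certificate.} For each prime power $p_j^{t_j}$, Merlin sends a generating set for a subgroup $P_j \le G$ together with (i) membership certificates (straight-line programs over $\{g_1,\dots,g_k\}$, in the sense of \cref{def:mem-cert}) witnessing that each generator of $P_j$ lies in $G$, and (ii) a certificate that $P_j$ is a solvable group of order dividing $p_j^{t_j}$, in the sense of \cref{def:sol}. The point of demanding a solvability certificate here is that every group of prime-power order is nilpotent, hence solvable, so an honest Merlin can always produce one; moreover the order of $P_j$ as certified is a divisor of the prime power $p_j^{t_j}$, and Arthur will additionally need to check that it equals $p_j^{t_j}$ and not a proper divisor --- this is exactly where the quantum power enters.

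\medskip

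\noindent\textbf{Arthur's checks.} Arthur first verifies all the membership certificates and all the solvability certificates, which is a purely classical polynomial-time computation using the black-box oracle (this is the content of \cite{Babai+FOCS84}; note it does not even need unique encoding). At this point Arthur knows that each $P_j$ is a genuine subgroup of $G$, that $P_j$ is solvable, and that $\abs{P_j}$ divides $p_j^{t_j}$ (the solvability certificate of \cref{def:sol} lists primes $p_1',\dots,p_s'$ whose product is the claimed order, and Arthur checks each $p_i' = p_j$ and that the count $s = t_j$, or more robustly just checks the product divides $p_j^{t_j}$). It remains to verify that $\abs{P_j} = p_j^{t_j}$ exactly. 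Since $P_j$ is solvable, Arthur runs Watrous' quantum order-computation algorithm (\cref{prop:Watrous}) on $P_j$ to compute $\abs{P_j}$ with probability at least $1 - 1/\poly(\abs{\bbgroup})$, and checks that the output equals $p_j^{t_j}$. If all checks pass for all $j$, Arthur accepts.

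\medskip

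\noindent\textbf{Correctness and the main obstacle.} Completeness: if $m \mid |G|$, then for each $j$, $G$ has a Sylow-$p_j$ subgroup, inside which (again by \cref{lemma:nilpotent} applied to that $p_j$-group, or by standard Sylow theory) there is a subgroup of order exactly $p_j^{t_j}$; such a subgroup is a $p$-group, hence solvable, so Merlin can supply all the required certificates, and Watrous' algorithm confirms the order with high probability --- a union bound over the constantly-or-polynomially-many $j$'s keeps the failure probability below $1/3$. Soundness: if $m \nmid |G|$, then some $p_j^{t_j} \nmid |G|$, so by \cref{lemma:nilpotent} there is \emph{no} subgroup of $G$ of order $p_j^{t_j}$; whatever $P_j$ Merlin sends, either the classical certificates fail to verify (Arthur rejects), or $P_j$ is a genuine solvable subgroup of $G$ but $\abs{P_j} \ne p_j^{t_j}$, and Watrous' algorithm detects this with high probability and Arthur rejects. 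I do not expect a serious obstacle here; the only thing to be careful about is the number-of-prime-powers bookkeeping (it is $O(\log m) = O(\poly(\log \abs{\bbgroup}))$, so the union bound over failure probabilities is fine) and making sure the reduction from ``$m \mid |G|$'' to ``$G$ has a subgroup of order $p^t$ for each prime power'' is stated with the exact prime powers dividing $m$ rather than all prime-power divisors. The genuinely hard direction of \cref{th:order} is the converse inclusion, $|G| \mid m$, which is handled separately in \cref{prop:ub}.
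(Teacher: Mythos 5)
Your proposal is correct and essentially identical in structure to the paper's proof: both reduce via \cref{lemma:nilpotent} to exhibiting a subgroup of each exact prime-power order $p_j^{t_j}$ dividing $m$, have Merlin send generators and membership certificates for each such subgroup, and use Watrous' algorithm (\cref{prop:Watrous}) to verify the subgroup orders. The only cosmetic differences are that the paper has Merlin send the factorization of $m$ (for Arthur to verify) rather than invoking Shor, and it checks solvability with the classical randomized test of \cref{lemma:sol} rather than demanding a solvability certificate in the sense of \cref{def:sol} --- both choices are immaterial.
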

\begin{proof}
Merlin sends to Arthur the factorization of $m$ as a product of primes $m=p_1^{t_1}\cdots p_r^{t_r}$ and for each $i\in[r]$ a set of generators of a subgroup $G_i\le G$, together with membership certificates certifying that $G_i$ is a subgroup of $G$. Arthur accepts if and only if the following three conditions hold:
\begin{itemize}
    \item[(i)] the factorization of $m$ is correct;
    \item[(ii)] the membership certificates are correct;
    \item[(iii)] 
    $\abs{G_i}=p_i^{t_i}$ for each $i\in[r]$.
\end{itemize}
Conditions (i) and (ii) can be checked in deterministic polynomial time. Since any group of order $p^t$ for some prime $p$ and some positive integer $t$ is solvable, Arthur can check Condition (iii) in quantum polynomial time by first checking solvability using \cref{lemma:sol} and then computing the order using \cref{prop:Watrous}.

The completeness and soundness of this protocol follow from \cref{lemma:nilpotent}.
\end{proof}

\subsection{Checking that the order divides \boldmath{$m$}}
The hard part is to show that the order divides $m$. Combining all the techniques developed in this paper, we show the following result.

\begin{proposition}\label{prop:ub}
There exists a $\QCMA$ protocol that checks if $|G|$ divides $m$.
\end{proposition}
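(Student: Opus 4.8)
\textbf{Proof plan for Proposition \ref{prop:ub}.} The plan is to follow the strategy outlined in the overview: reduce the problem to computing $|\pker{G}|$ using the Babai--Beals machinery, then have Merlin supply a nice decomposition of $\pker{G}$ together with the standard names of the simple groups arising as the quotients $\gen{H_0,\beta_i,\gamma_i}/H_0$, and finally verify everything. First I would run the Monte Carlo algorithm of Babai--Beals to obtain a generating set for $\pker{G}$ and set up the data structure that lets Arthur compute $\phi$, test membership in $\pker{G}$, compute $|G/\pker{G}|$, and compute the multiset of composition factors of $G/\pker{G}$ (each by its standard name) --- all in deterministic polynomial time as recalled in \cref{sub:bb}. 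Since $|G| = |\pker{G}|\cdot|G/\pker{G}|$, checking $|G|\mid m$ reduces to checking that $|\pker{G}|$ divides $m/\gcd(m,\ldots)$; more precisely, Merlin sends a claimed value $m'$ for $|\pker{G}|$ and Arthur checks both that $m'\cdot|G/\pker{G}|$ is consistent with the divisibility requirement and that $|\pker{G}|$ divides $m'$.

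\medskip

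The core sub-protocol checks that $|\pker{G}|$ divides a given integer $m'$. Merlin sends: a solvable normal subgroup $H_0\ns\pker{G}$ (with generators, a normality certificate for $H_0\ns\pker{G}$, and a solvability certificate per \cref{def:sol}); elements $\beta_1,\ldots,\beta_s,\gamma_1,\ldots,\gamma_s\in\pker{G}$ (with membership certificates, so that Arthur can form $H_i=\gen{H_0,\beta_1,\ldots,\beta_i,\gamma_1,\ldots,\gamma_i}$); normality certificates for each $H_{i-1}\ns H_i$ and for $H_s\ns\pker{G}$; a solvability certificate that $\pker{G}/H_s$ is solvable of a claimed order (equivalently, a certificate for the solvable quotient analogous to \cref{def:sol}); and for each $i\in[s]$ the standard name $z_i$ of the simple group claimed to equal $\gen{H_0,\beta_i,\gamma_i}/H_0$. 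Arthur checks all certificates (polynomial time / \cref{prop:Watrous} for the solvable pieces), and for each $i$ checks that $\gen{H_0,\beta_i,\gamma_i}/H_0\cong\group{z_i}$. For every simple group other than the Ree groups of rank one this isomorphism test is done via the short-presentation certificate of \cref{th:p2} (using that $\group{z_i}$ has a polylogarithmic-length presentation by \cref{th:p1}, lifted to the quotient by $H_0$ by asking Merlin for generator images and having Arthur verify the relations hold modulo $H_0$, which needs only membership testing in the solvable group $H_0$, i.e.\ \cref{prop:Watrous}). For $\group{z_i}\cong\Ree$ a Ree group of rank one, Arthur invokes the $\QCMA$ protocol of \cref{prop:Ree} ({\tt ReeIso}). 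Finally Arthur computes the claimed order
\[
    N = |H_0|\cdot\prod_{i=1}^{s}|\group{z_i}|\cdot|\pker{G}/H_s|
\]
and accepts iff $N$ divides $m'$ (and the outer consistency check against $m$ passes).

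\medskip

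For \textbf{completeness}: if $|G|\mid m$, Merlin sends $m'=|\pker{G}|$ and the nice decomposition of $\pker{G}$ guaranteed by \cref{th:group}, which satisfies Condition $(\star)$, i.e.\ $H_i/H_{i-1}\cong\gen{H_0,\beta_i,\gamma_i}/H_0$. Then each $z_i$ can be taken as the standard name of the simple group $H_i/H_{i-1}$, all certificates exist honestly, all isomorphism tests pass (with probability $\geq 2/3$ each, amplified if needed), and $N=|\pker{G}|$ divides $m'$, which divides what's needed for $|G|\mid m$. For \textbf{soundness}: if all checks pass then by \cref{def:nice} what Merlin sent is a genuine nice decomposition of the group $P=\langle H_0, \beta_1,\ldots,\gamma_s\rangle$ sitting inside $\pker{G}$, and by \cref{th:gmain}, $|H_i/H_{i-1}|$ divides $|\gen{H_0,\beta_i,\gamma_i}/H_0| = |\group{z_i}|$; combined with the verified solvable orders of $H_0$ and $P/H_s$, this shows $|P|$ divides $N$ divides $m'$. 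The remaining gap is the cheating strategy where $P=H_s$ (or $H_s$'s closure under the solvable extension) is a proper subgroup $K\lneq\pker{G}$: to rule this out, Arthur additionally requires Merlin to certify that the multiset of composition factors of $G/K$ equals the multiset of composition factors of $G/\pker{G}$ --- the latter is computable by Arthur, and the former is certifiable in $\NP$ by \cref{th:p3}, since by property (d) of \cref{sub:bb} those factors are composition factors of a subgroup of $\Sym(k)$ with $k=O(\log|G|)$, hence all have $\poly(\log|G|)$-length presentations by \cref{th:smallk}. Equality of these multisets forces $[G:K]=[G:\pker{G}]$, hence $K=\pker{G}$. The \textbf{main obstacle} is precisely this last point: correctly formalizing that matching composition-factor multisets of $G/K$ and $G/\pker{G}$ (with $K\le\pker{G}$) implies $K=\pker{G}$, and checking it in $\NP$ --- which is where the special structure $G/\pker{G}\le\Sym(k)$ with small $k$ and \cref{th:p3} are essential; the rest is bookkeeping of certificates and routine amplification of the bounded-error quantum subroutines.
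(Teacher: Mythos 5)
Your proposal is correct and takes essentially the same approach as the paper: reduce to $\pker{G}$ via Babai--Beals, have Merlin supply a nice decomposition with standard names $z_i$, verify the non-Ree factors via \cref{th:p2} and the Ree factors via \cref{prop:Ree}, use \cref{th:group} (Condition $(\star)$) for completeness and \cref{th:gmain} for soundness, and force $K=\pker{G}$ by matching the multiset of composition factors of $G/K$ against $G/\pker{G}$ using \cref{th:p3} and \cref{th:smallk}. One small wording issue worth tightening: Merlin cannot literally send a \emph{normality certificate} for $H_0 \ns \pker{G}$ or $H_s \ns \pker{G}$, because Arthur's generating set for $\pker{G}$ is produced by a Monte Carlo algorithm and is unknown to Merlin (this is the paper's ``Remark~1''); the paper instead has Merlin send generators of a group $K$ he claims equals $\pker{G}$, certify normality relative to $K$, and has Arthur check $H_0 \ns \pker{G}$ himself via \cref{cor:Watrous} and $k_j \in \pker{G}$ via the Babai--Beals data structure --- but since you already invoke the composition-factor test precisely to rule out $K \lneq \pker{G}$, this is bookkeeping rather than a conceptual gap.
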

\begin{proof}
We first describe the $\QCMA$ protocol and then analyze its correctness and soundness. Note that the completeness and soundness of the $\QCMA$ protocol of \cref{prop:Ree} can be amplified using standard techniques so that the completeness becomes $1-1/\poly(\abs{\bbgroup})$ and the soundness becomes $1/\poly(\abs{\bbgroup})$. In the following, we implicitly assume that the completeness and soundness have been amplified. 

\paragraph{Merlin's witness.}
Merlin sends to Arthur a positive integer $m_1$ and elements $h_1,\ldots, h_n,k_1,\ldots,k_r$, $\beta_1,\ldots,\beta_s$, $\gamma_1,\ldots,\gamma_s\in \bbgroup$, for some $n,r,s=O(\log\abs{\bbgroup})$. 
We write below 
\begin{align*}
    H_0&=\gen{h_1,\ldots,h_n}\,,\\
    K&=\gen{k_1,\ldots,k_r}\,,\\
    H_i&=\gen{H_0,\beta_1,\ldots,\beta_i,\gamma_1,\ldots,\gamma_i} \:\textrm { for each } i\in[s]\,.
\end{align*}
Merlin also sends to Arthur the following information:
\begin{enumerate}[(i)]
    \item membership certificates certifying that $h_1,\ldots, h_n, \beta_1,\ldots,\beta_s$, $\gamma_1,\ldots,\gamma_s$ are elements in $K$\,;
    \item normality certificates certifying that $K\ns G$, $H_s\ns K$, and $H_{i-1}\ns H_i$ for each $i\in[s]$\,; 
    \item a solvability certificate certifying that $K/H_s$ is a solvable group of order divising $m_1$\,;
    \item a multiset $\mathcal{S}$ of simple groups with polylogarithmic-size presentation (each simple group is given by its standard name) and 
    the certificate from \cref{th:p3} certifying that $\mathcal{S}$ is the set of composition factors of $G/K$\,;
    \item for each $i\in[s]$, the standard name $z_i $ of a simple group\,;
    \item for each $i\in[s]$ such that $z_i$ corresponds to the standard name of a finite simple group other than a Ree group of rank one, 
    the certificate from \cref{th:p2} certifying that $\gen{H_0,\beta_i,\gamma_i}/H_0$ is isomorphic to $\group{z_i}$\,;
    \item for each $i\in[s]$ such that $z_i$ corresponds to the standard name of some Ree group of rank one $\Ree$, the certificate from \cref{prop:Ree} certifying that $\gen{H_0,\beta_i,\gamma_i}/H_0$ is isomorphic to $\Ree$\,.
\end{enumerate}

\paragraph{Terminology.} We use the following terminology below: we say that a certificate in (iv) or (vi) is correct if the checking procedure of \cref{th:p3} or \cref{th:p2}, respectively, outputs ``yes'' on this certificate.  We say that the certificate in (vii) is correct if the checking procedure of \cref{prop:Ree} outputs ``yes'' on this certificate with probability at least $1-1/\poly(\abs{\bbgroup})$. Note that for \cref{th:p2}, \cref{th:p3} and \cref{prop:Ree}, the existence of a correct certificate guarantees that the input is a yes-instance.

\paragraph{Arthur's checking procedure.}
Arthur first computes a set of generators of $\pker{G}$, the multiset of composition factors of $G/\pker{G}$ and the order $\abs{G/\pker{G}}$, which can be done in randomized polynomial time as discussed in \cref{sub:bb}. We write $m_{2}=\abs{G/\pker{G}}$.
Arthur then checks that
\begin{enumerate}[(1)]
    \item $H_0$ is a solvable normal subgroup of $\pker{G}$\,;
    \item $k_1,\ldots,k_r\in\pker{G}$\,;
    \item $\mathcal{S}$ is the multiset of composition factors of $G/\pker{G}$\,;
    \item each membership certificate in (i) is correct\,;
    \item each normality certificate in (ii) is correct, i.e., it certifies all the inclusions of \cref{eq:norm}\,;
    \item the solvability certificate in (iii) is correct, i.e., it certifies all the conditions of \cref{def:sol}\,;
    \item the certificate in (iv) is a correct certificate for \cref{th:p3};
    \item the certificates in
    (vi) are correct certificates for \cref{th:p2};
    \item the certificates in
    (vii) are correct certificates for \cref{prop:Ree};
    \item
    the product $|H_0|\cdot \abs{\group{z_1}}\cdots \abs{\group{z_s}}\cdot m_1\cdot m_2$
    divides $m$\,.
\end{enumerate}
Item (1) can be implemented using \cref{lemma:sol} and \cref{cor:Watrous}. Item (2) can be implemented in deterministic polynomial time using the efficient procedure for membership in $\pker{G}$ of \cref{sub:bb}. Item~(3) is trivial to check. Item (4)$\sim$(8) can be checked in deterministic polynomial time from the discussion in \cref{sub:black} (for (7) and (8) we use \cref{th:p3} and \cref{th:p2}, respectively). Item (9) can be checked (with high probability) in quantum polynomial time, from \cref{prop:Ree}. To check Item (10), we just need to compute $|H_0|$, which can be done with high probability using \cref{prop:Watrous}.

\paragraph{Remark 1.}
We cannot ask directly Merlin to prove that $K=\pker{G}$ since Merlin does not know the elements of the generating set of $\pker{G}$ computed by Arthur (generators of $\pker{G}$ can only be computed in randomized polynomial time, not in deterministic polynomial time). Instead, we ask Merlin to send $\mathcal{S}$ and check that $K=\pker{G}$ using Tests (2), (3) and (7).

\paragraph{Remark 2.}
Even when $K/H_s$ is solvable, we cannot use Watrous' quantum algorithm (\cref{prop:Watrous}) to compute its order since the group $K/H_s$ does not have unique encoding. This is why we ask Merlin to certify that the order of $K/H_s$ divides $m_1$ using the solvability certificate (iii), which can be checked classically even without unique encoding.

\paragraph{Completeness.}
Assume that $|G|$ divides $m$. From \cref{th:group}, there exist a normal subgroup $H_0\ns \pker{G}$ and $2s$ elements $\beta_1,\ldots,\beta_s,\gamma_1,\ldots,\gamma_s\in \pker{G}$ such that, when defining 
    \begin{align*}
        H_i&=\gen{H_0,\beta_1,\ldots,\beta_i,\gamma_1,\ldots,\gamma_i}
    \end{align*}
for each $i\in[s]$, Conditions (C1), (C2), (C3) of \cref{def:nice} and Condition ($\star$) of \cref{th:group} are satisfied. We have 
\[
\abs{G}=\abs{H_0}\cdot \abs{H_1/H_0}\cdots\abs{H_s/H_{s-1}}\cdot\abs{\pker{G}/H_s}\cdot\abs{G/\pker{G}}.
\]
Merlin sends generators of this subgroup $H_0$, generators of $K=\pker{G}$, these $2s$ elements $\beta_1,\ldots,\beta_s,\gamma_1,\ldots,\gamma_s$, and the integer $m_1=\abs{\pker{G}/H_s}$. Merlin sends the multiset $\mathcal{S}$ of composition factors of $G/\pker{G}$. Each $z_i$ sent by Merlin is the standard name of the simple group $\gen{H_0,\beta_i,\gamma_i}/H_0$.

The existence of correct normality certificates for (ii) follows from the normality of $H_i$, $H_s$ and $\pker{G}$. The existence of a correct solvability certificate for (iii) follows from Condition (C2) of \cref{def:nice}. The existence of a correct certificate for (iv) follows from \cref{th:p3} combined with \cref{th:smallk} and Property (d) of the Babai-Beals filtration described in \cref{sub:bb}, which guarantee that the composition factors of $G/\pker{G}$ (i.e., the simple groups in $\mathcal{S}$) have a short presentation. The existence of correct certificates of (vi) and (vii) follow from \cref{th:p2} and \cref{prop:Ree}, respectively. 

With the above choices, checking Items (1) and (9) succeeds with high probability, while checking Items (2)$\sim$(8) always succeed.
From Condition ($\star$), we know that $H_i/H_{i-1}$ is isomorphic to $\gen{H_0,\beta_i,\gamma_i}/H_0$ for all $i\in[s]$.
We thus have
\[
    \abs{G}=\abs{H_0}\cdot \abs{\group{z_1}}\cdots \abs{\group{z_s}}\cdot m_1\cdot m_2.
\]
Since $|G|$ divides $m$, this quantity divides $m$, and thus checking Item (10) also succeeds with high probability.

\paragraph{Soundness.}
Assume that $|G|$ does not divide $m$. If Item (1) or (9) is not true, Arthur rejects with high probability. If Items (2)$\sim$(7) are not all true, Arthur always rejects. In the following, we thus assume that Items (1)$\sim$(9) are all true.

Item (1) guarantees that $H_0$ is a solvable normal subgroup of $\pker{G}$.
Item (2) guarantees that~$K$ is a subgroup of $\pker{G}$.
Items (3) and (7) guarantee that the multiset of composition factors of $G/K$ matches the multiset of composition factors of $G/\pker{G}$, which implies $\abs{K}=\abs{\pker{G}}$. We thus have $K=\pker{G}$.

Item (4) and (5) further guarantee that
\[
    \tgroup\ns H_0\ns H_1\ns \cdots\ns H_s\ns \pker{G}\ns G.
\]
Item (6) guarantees that $\abs{\pker{G}/H_s}$ divides $m_1$. Item (8) and (9) guarantee that $\gen{H_0,\beta_i,\gamma_i}/H_0\cong\group{z_i}$ for each $i\in[s]$.  \cref{th:gmain} then implies that $\abs{H_i/H_{i-1}}$ also divides $\abs{\group{z_i}}$, for each $i\in[s]$. 
We conclude that $\abs{G}$ must divide the quantity 
\[
    \abs{H_0}\cdot \abs{\group{z_1}}\cdots \abs{\group{z_s}}\cdot m_1\cdot m_2.
\]
Since $|G|$ does not divide $m$, this implies that Item (10) must fail whenever the computation of $\abs{H_0}$ is correct, which happens with high probability.
\end{proof}

\section{Proofs of the Other Results}\label{sec:other}
In this section, we discuss how to derive the other results of \cref{table:results}. 

We first give the formal definition of the eight problems introduced in Section \ref{subsec:results}. 

\begin{center}
\underline{Group Isomorphism}\\[2mm]
\begin{flushleft}
\begin{tabular}{ll}
Instance: & Elements $g_1,\,\ldots,\,g_k$ in some group $\bbgroup$, elements $h_1,\,\ldots,\,h_\ell$ in some group $\bbgroupb$.\\
Question: & Are $\gen{g_1,\,\ldots,\,g_k}$ and $\gen{h_1,\,\ldots,\,h_\ell}$ isomorphic?
\end{tabular}
\end{flushleft}
\end{center}

\begin{center}
\underline{Homomorphism}\\[2mm]
\begin{flushleft}
\begin{tabular}{ll}
Instance: & Elements $g_1,\,\ldots,\,g_k$ in some group $\bbgroup$, elements $h_1,\,\ldots,\,h_k$ in some group $\bbgroupb$.\\
Question: & Is there a homomorphism $\phi:\gen{g_1,\,\ldots,\,g_k}\to\gen{h_1,\,\ldots,\,h_k}$ such that $\phi(g_i)=h_i$\\ 
&\hspace{106mm}  for each $i\in[k]$?
\end{tabular}
\end{flushleft}
\end{center}


\begin{center}
\underline{Minimal Normal Subgroup}\\[2mm]
\begin{flushleft}
\begin{tabular}{ll}
Instance: & Elements $g_1,\,\ldots,\,g_k$ and $h_1,\,\ldots,\,h_\ell$ in some
group $\bbgroup$.\\
Question: & Is $\langle h_1,\ldots,h_\ell\rangle$ a minimal normal subgroup of
$\langle g_1,\ldots,g_k\rangle$?
\end{tabular}
\end{flushleft}
\end{center}

\begin{center}
\underline{Proper Subgroup}\\[2mm]
\begin{flushleft}
\begin{tabular}{ll}
Instance: & Elements $g_1,\,\ldots,\,g_k$ and $h_1,\,\ldots,\,h_\ell$ in
some group $\bbgroup$.\\
Question: & Is $\,\langle h_1,\:\ldots,\:h_\ell\rangle\,$ a proper subgroup of
$\langle g_1,\ldots,g_k\rangle$?
\end{tabular}
\end{flushleft}
\end{center}

\begin{center}
\underline{Simple Group}\\[2mm]
\begin{flushleft}
\begin{tabular}{ll}
Instance: & Elements $g_1,\ldots,g_k$ in some group $\bbgroup$.\\
Question: & Is $\langle g_1,\ldots,g_k\rangle$ a simple group?
\end{tabular}
\end{flushleft}
\end{center}

\begin{center}
\underline{Intersection}\\[2mm]
\begin{flushleft}
\begin{tabular}{ll}
Instance: & Elements $g_1,\:\ldots,\:g_k$, $h_1,\:\ldots,\:h_\ell$, and
$a_1,\ldots,a_t$ in some group $\bbgroup$.\\
Question: & Is $\langle a_1,\ldots,a_t\rangle$ equal to the intersection
of $\langle g_1,\ldots,g_k\rangle$ and $\langle h_1,\ldots,h_\ell\rangle$?
\end{tabular}
\end{flushleft}
\end{center}

\begin{center}
\underline{Centralizer}\\[2mm]
\begin{flushleft}
\begin{tabular}{ll}
Instance: & Elements $g_1,\ldots,g_k$, $h_1,\ldots,h_\ell$ and $a$ in some
group $\bbgroup$.\\
Question: & Is $\langle h_1,\ldots,h_\ell\rangle$ equal to the centralizer of
$a$ in $\langle g_1,\ldots,g_k\rangle$?
\end{tabular}
\end{flushleft}
\end{center}

\begin{center}
\underline{Maximal Normal Subgroup}\\[2mm]
\begin{flushleft}
\begin{tabular}{ll}
Instance: & Elements $g_1,\,\ldots,\,g_k$ and $h_1,\,\ldots,\,h_\ell$ in some
group $\bbgroup$.\\
Question: & Is $\langle h_1,\ldots,h_\ell\rangle$ a maximal normal subgroup of
$\langle g_1,\ldots,g_k\rangle$?
\end{tabular}
\end{flushleft}
\end{center}

We give below the proofs of Corollaries \ref{cor:result1}, \ref{cor:result2} and \ref{cor:result3}. 

\addtocounter{section}{-5}
\addtocounter{corollary}{2}
\begin{corollary}[repeated]
Group Isomorphism is in the complexity class $\QCMA$.   
\end{corollary}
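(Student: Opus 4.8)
The plan is to reduce Group Isomorphism to Group Order Verification (\cref{th:order}, or rather \cref{cor:order}) together with a structural comparison of composition factors, following the classical template of Babai \cite{Babai92}. Given generators for $A=\gen{g_1,\ldots,g_k}\le\bbgroup$ and $B=\gen{h_1,\ldots,h_\ell}\le\bbgroupb$, the first observation is that $A\cong B$ is equivalent to two conditions holding simultaneously: $|A|=|B|$, and there exists an explicit isomorphism $\phi\colon A\to B$. The order equality is handled directly: Merlin sends a candidate value $m$ and proofs, via the $\QCMA\cap\coQCMA$ protocol of \cref{cor:order}, that $|A|=m$ and $|B|=m$. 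The harder part is certifying the existence of the isomorphism itself.

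For the isomorphism, I would mimic the strategy already used in this paper for \cref{th:p2} and for the Ree isomorphism test, but lifted to arbitrary $A$: Merlin sends elements $g'_1,\ldots,g'_k\in A$ (claimed to be $\phi^{-1}$-images of a generating set of $B$) and $h'_1,\ldots,h'_k\in B$ (claimed to be the $\phi$-images of $g_1,\ldots,g_k$), together with membership certificates showing $\gen{g'_1,\ldots,g'_k}=A$ and $\gen{h'_1,\ldots,h'_k}=B$ (using \cref{def:mem-cert}). The map $g_i\mapsto h'_i$ extends to a homomorphism $\psi\colon A\to B$ provided every relation holding among $g_1,\ldots,g_k$ also holds among $h'_1,\ldots,h'_k$; symmetrically $h_i\mapsto g'_i$ extends to a homomorphism $\chi\colon B\to A$. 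If both extend and $|A|=|B|$, then $\psi$ is surjective onto $\gen{h'_i}=B$, hence an isomorphism by counting. The subtlety, exactly as in \cref{th:p2}, is that we cannot list "all relations" of $A$; instead we need a short presentation of $A$ whose relators can be checked in the black box. This is where the composition-factor machinery enters: Merlin additionally sends a nice decomposition of $\pker{A}$ (as in \cref{th:group}) and composition-series data for $A/\pker{A}\le\Sym(k)$, so that — by \cref{th:smallk}, \cref{th:p1} and the Ree test \cref{prop:Ree} — Arthur can certify the isomorphism type of each composition factor. Combining a composition series with per-factor presentations yields a polylogarithmic-length presentation of $A$ in the spirit of \cite[Theorem 11.4]{Babai+FOCS84}; Arthur checks that the relators of this presentation hold in $B$ under $g_i\mapsto h'_i$, which guarantees $B$ is a quotient of $A$, and since $|A|=|B|$ we get $A\cong B$. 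All the checks — membership certificates, normality certificates, solvability certificates, the \cref{th:p2}/\cref{prop:Ree} subroutines, and the \cref{prop:Watrous} order computations for the solvable pieces — run in quantum polynomial time, so the whole protocol is a $\QCMA$ protocol.

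The completeness direction is then routine: if $A\cong B$ via $\phi$, Merlin sends $g'_i=\phi^{-1}$ of $B$'s generators, $h'_i=\phi(g_i)$, the true order $m=|A|=|B|$ with the \cref{cor:order} certificates, and an honest nice decomposition of $\pker{A}$ with correct standard names; every test passes with high probability. For soundness, if $A\not\cong B$ then either $|A|\neq|B|$, in which case the \cref{cor:order} protocols reject with high probability, or $|A|=|B|$ but no isomorphism exists, in which case the presentation of $A$ extracted from the (verified) composition-factor data cannot have all its relators satisfied in $B$ under $g_i\mapsto h'_i$ — for otherwise $B$ would be a quotient of $A$ of equal order, hence isomorphic to $A$, a contradiction. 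The main obstacle, and the only place real care is needed, is assembling the short presentation of $A$ from the composition series and verifying that the required constructive-recognition certificates (particularly for the Ree factors) can indeed be produced and checked within the $\QCMA$ framework; this is precisely why \cref{th:p2}, \cref{th:p3} and \cref{prop:Ree} were developed, so the remaining work is bookkeeping rather than new ideas.
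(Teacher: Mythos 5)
Your proposal correctly identifies that the reduction should verify $|A|=|B|$ via \cref{cor:order}, and correctly observes that a surjective homomorphism $\psi\colon A\to B$ of equal order would force $A\cong B$. But the route you take to certify the existence of $\psi$ — assembling a polylogarithmic-length presentation of $A$ from per-factor presentations and checking its relators in $B$ — has a genuine gap that the paper's whole framework is designed to avoid. Your own factor-recognition step invokes \cref{prop:Ree} for the Ree factors precisely because no short presentation of the Ree groups of rank one is available (this is the one exception in \cref{th:p1}, and the ``Related work'' paragraph explains that the only known proof of such a presentation is a 60-page unpublished manuscript the paper deliberately does not rely on). An isomorphism test is not a presentation: \cref{prop:Ree} lets Arthur confirm that a given factor is isomorphic to $\Ree$, but it hands him no list of relators for $\Ree$ that he could then impose on $B$. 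Consequently the ``combine a composition series with per-factor presentations'' step fails whenever $A$ has Ree composition factors, and the claimed presentation of $A$ does not exist within the paper's toolbox. Even absent the Ree problem, going from a composition series plus factor presentations to a verifiable presentation of $A$ is an extension-problem computation (in the spirit of \cite[Theorem 11.4]{Babai+FOCS84}) that is not merely bookkeeping, and you would still have to argue that the resulting relators — which are straight-line programs interleaved with membership data in the intermediate $H_i$'s — can all be evaluated in $B$ under $g_i\mapsto h'_i$.

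The paper sidesteps all of this with a much shorter reduction (\cite[Proposition 4.9]{Babai92}): Merlin sends $h'_1,\ldots,h'_k$ claimed to be $\phi(g_1),\ldots,\phi(g_k)$ together with membership certificates that $\gen{h'_1,\ldots,h'_k}=\gen{h_1,\ldots,h_\ell}$, and Arthur forms the subgroup $K=\gen{(g_1,h'_1),\ldots,(g_k,h'_k)}$ of $\gen{g_1,\ldots,g_k}\times\gen{h_1,\ldots,h_\ell}$ (which is itself a black-box group) and verifies $m=|\gen{g_1,\ldots,g_k}|=|\gen{h_1,\ldots,h_\ell}|=|K|$ via the Group Order Verification protocol. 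If all three orders agree and $\gen{h'_i}=B$, both coordinate projections of $K$ are surjective and hence bijective, so $A\cong K\cong B$; no presentation of $A$ is ever needed. I would recommend replacing your homomorphism-by-presentation argument with this graph-of-the-isomorphism construction.
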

\begin{proof}
Group Isomorphism can be reduced to the Group Order Verification as follows (see, \cite[Proposition 4.9]{Babai92}): Merlin guesses $k$ elements $h'_1,\,\ldots,\,h'_k$ from $\gen{h_1,\,\ldots,\,h_\ell}$ such that $h'_i=\phi(g_i)$ for each $i\in[k]$, for some isomorphism $\phi\colon\gen{g_1,\,\ldots,\,g_k}\to \gen{h_1,\,\ldots,\,h_\ell}$. Merlin also guesses $m=\abs{\gen{g_1,\,\ldots,\,g_k}}$, as well as membership certificates certifying that $\gen{h'_1,\,\ldots,\,h'_k}=\gen{h_1,\,\ldots,\,h_\ell}$. 

Consider the subgroup $K=\gen{(g_1,h'_1), \cdots,(g_k,h'_k)}$ of the group $\gen{g_1,\,\ldots,\,g_k} \times \gen{h_1,\,\ldots,\,h_\ell}$. 
Arthur checks that the membership certificates are correct and checks that $m=\abs{\gen{g_1,\,\ldots,\,g_k}}=\abs{\gen{h_1,\,\ldots,\,h_\ell}}=\abs{K}$ using the $\QCMA$ protocol for Group Order Verification.  
\end{proof}
\begin{corollary}[repeated]
Homomorphism, Minimal Normal Subgroup, Proper Subgroup and Simple Group are in the complexity class $\QCMA \cap\coQCMA $.   
\end{corollary}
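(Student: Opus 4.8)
The strategy is to reduce each of the four problems to Group Order Verification (or its complement), following the reductions already used by Babai \cite{Babai92} in the classical $\AM\cap\coAM$ setting, and then invoke \cref{cor:order}. The key observation is that all four problems can be phrased as a conjunction of membership-type statements (which are in $\NP$ via straight-line program certificates, hence in $\QCMA$) together with an order-equality statement (which is in $\QCMA\cap\coQCMA$ by \cref{cor:order}). Since $\QCMA\cap\coQCMA$ is closed under taking conjunctions and complements of $\NP$ predicates (Merlin simply concatenates certificates; Arthur runs the relevant verifiers with amplified success probabilities and takes the AND), membership of each problem in $\QCMA\cap\coQCMA$ will follow.

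\textbf{Key steps, problem by problem.} First, \emph{Homomorphism}: the map $\phi$ extending $g_i\mapsto h_i$ exists as a homomorphism if and only if $\abs{\gen{(g_1,h_1),\ldots,(g_k,h_k)}}=\abs{\gen{g_1,\ldots,g_k}}$ (the graph subgroup projects onto the first factor, and it is the graph of a well-defined homomorphism exactly when this projection is injective, i.e.\ order-preserving). So Homomorphism reduces directly to Group Order Verification with Merlin guessing the common order; this puts it in $\QCMA\cap\coQCMA$. Second, \emph{Proper Subgroup}: $H=\gen{h_1,\ldots,h_\ell}$ is a proper subgroup of $G=\gen{g_1,\ldots,g_k}$ iff $H\le G$ (checkable in $\NP$ via membership certificates for each $h_j$ in $G$) and $\abs{H}\neq\abs{G}$ (in $\coQCMA$, equivalently $\abs{H}<\abs{G}$, which since $H\le G$ is the negation of order equality). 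For the complement, $H$ is \emph{not} a proper subgroup iff either some $h_j\notin G$ (this is $\coNP\subseteq\coQCMA$ — Merlin cannot certify membership) or $H=G$, i.e.\ $\abs{H}=\abs{G}$ and $H\le G$; Merlin guesses which case holds. Hence Proper Subgroup $\in\QCMA\cap\coQCMA$. Third, \emph{Simple Group}: $G=\gen{g_1,\ldots,g_k}$ is simple iff $G$ is nontrivial and its unique composition factor equals $G$ itself; equivalently, using the Babai--Beals machinery, Arthur can have Merlin guess the order $m=\abs{G}$ and a claimed composition series, verified via \cref{th:p2}/\cref{prop:Ree} together with normality certificates, and check that $s=1$ and that the single composition factor has order $m$ (with $m>1$). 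The order $m$ is certified via \cref{cor:order}. For $\coQCMA$, $G$ is non-simple iff $G$ is trivial (checkable deterministically) or Merlin can exhibit a proper nontrivial normal subgroup $N$: generators, a normality certificate, a witness element in $G\setminus N$ (membership of the witness in $G$ certified by a straight-line program; non-membership in $N$ certified by $\abs{\gen{N,\text{witness}}}>\abs{N}$ via \cref{cor:order}), and a witness element in $N\setminus\{e\}$ with membership certificate in $N$. Fourth, \emph{Minimal Normal Subgroup}: $N=\gen{h_1,\ldots,h_\ell}$ is a minimal normal subgroup of $G$ iff (a) $N\trianglelefteq G$ (normality certificate, in $\NP$), (b) $N\neq\{e\}$ (deterministic), and (c) no proper nontrivial subgroup of $N$ is normal in $G$. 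Condition (c) is the delicate one: its negation is witnessed by Merlin sending generators of a candidate $M$ with $\{e\}\lneq M\lneq N$ (proper inclusions certified via order inequalities from \cref{cor:order} plus membership certificates) together with a normality certificate for $M\trianglelefteq G$. So Minimal Normal Subgroup $\in\QCMA\cap\coQCMA$, where the $\QCMA$ direction uses that (c) is a $\coNP$-type statement relative to \cref{cor:order}.

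\textbf{Main obstacle.} The conceptually cleanest steps are Homomorphism and Proper Subgroup, which are essentially direct reductions to order verification. The subtle point is \emph{Simple Group}, and more precisely the $\QCMA$ (yes-)direction: to certify simplicity one must rule out \emph{all} proper nontrivial normal subgroups, which a priori is a $\coNP$ statement about exponentially many subgroups and not obviously certifiable. The fix — already implicit in Babai's approach and in the Babai--Beals filtration developed in \cref{sec:bb} — is to reduce simplicity to the structure of the composition series: $G$ is simple iff it has a composition series of length one, which Merlin certifies by sending generators for the (trivial) bottom term and the standard name of the top factor, verified via \cref{th:p2} and \cref{prop:Ree} exactly as in the proof of \cref{prop:ub}, together with the order equality from \cref{cor:order} forcing the factor to be all of $G$. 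Once this is set up, the normality and membership certificates are routine $\NP$ objects, and combining them with the $\QCMA\cap\coQCMA$ protocol for order gives the claimed bounds; the analogous composition-series view handles the ``no proper nontrivial $G$-normal subgroup of $N$'' condition in Minimal Normal Subgroup.
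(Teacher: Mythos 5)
Your overall plan --- reduce each of the four problems to Group Order Verification and/or Group Non-Membership, then invoke \cref{cor:order}/\cref{th:GNM} --- is exactly the paper's strategy, and three of your four instantiations are correct. The Homomorphism argument via the graph subgroup is right and is the content of Babai's reduction. Your Proper Subgroup and Simple Group arguments work, though they route through order inequalities rather than through Group Non-Membership as the paper does; this is a harmless variation. (A small expository slip in Proper Subgroup: ``some $h_j\notin G$'' is certified in the $\coQCMA$ direction by a $\QCMA$ certificate for Group Non-Membership, not by ``$\coNP\subseteq\coQCMA$''; but the intended protocol is fine.)

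The genuine gap is the $\QCMA$ direction for Minimal Normal Subgroup. You reduce the problem to certifying (a) $N\trianglelefteq G$, (b) $N\neq\{e\}$, and (c) no proper nontrivial subgroup of $N$ is normal in $G$, and you correctly observe that the negation of (c) is witnessable, which gives the $\coQCMA$ direction. But your justification for the $\QCMA$ direction --- ``(c) is a $\coNP$-type statement relative to \cref{cor:order}'' --- is not a proof: a $\coNP$-style universal statement does not automatically admit a $\QCMA$ certificate, and a conjunction of an $\NP$ part with a $\coQCMA$ part need not be in $\QCMA$. Your later appeal to ``the analogous composition-series view'' is also not substantiated: a minimal normal subgroup is characteristically simple ($N\cong T^k$), not simple, and certifying minimality requires in addition that $G$ permutes the direct factors transitively (non-abelian case) or that $N$ is an irreducible $\field_p[G]$-module (elementary abelian case). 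Neither condition follows from merely exhibiting the composition factors of $N$, and you do not explain how to certify either one. The paper avoids this by citing Babai's reduction of Minimal Normal Subgroup to Group Order verbatim rather than reconstructing it; your attempt at a self-contained argument leaves this step unjustified.
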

\begin{proof}
Homomorphism and Minimal Normal Subgroup can be reduced to Group Order in  
polynomial time (see, \cite[Corollary 12.1]{Babai92} and its proof). 
The claim thus follows from \cref{cor:order}.


To show that Proper Subgroup is in $\QCMA$, 
we follow the strategy of \cite[Section 5]{WatrousFOCS00}. Merlin guesses an element $a \in \gen{g_1,\,\ldots,\,g_k}$ such that $a \not \in \gen{h_1,\,\ldots,\,h_\ell}$ and membership certificates certifying that $a \in \gen{g_1,\,\ldots,\,g_k}$ and $h_i \in \gen{g_1,\,\ldots,\,g_k}$ for each $i\in[\ell]$. Arthur checks that the membership certificates are correct and checks that $a \not \in \gen{h_1,\,\ldots,\,h_\ell}$ using the $\QCMA$ protocol for Group Non-Membership.  

To show that Proper Subgroup is in $\coQCMA$, we observe that $\gen{h_1,\,\ldots,\,h_\ell}$ is not a proper subgroup of $\gen{g_1,\,\ldots,\,g_k}$ if and only if either $\gen{h_1,\,\ldots,\,h_\ell}=\gen{g_1,\,\ldots,\,g_k}$ or there exists an element $h\in \gen{h_1,\,\ldots,\,h_\ell}$ such that $h\notin \gen{g_1,\,\ldots,\,g_k}$. Merlin guesses which case holds. In the first case, he also guesses membership certificates certifying that $\gen{h_1,\,\ldots,\,h_\ell}=\gen{g_1,\,\ldots,\,g_k}$. In the second case, he also guesses an element $h\in \gen{h_1,\,\ldots,\,h_\ell}$ such that $h\notin \gen{g_1,\,\ldots,\,g_k}$. Arthur checks that the membership certificates are correct and checks that $h\notin \gen{g_1,\,\ldots,\,g_k}$ using the $\QCMA$ protocol for Group Non-Membership. 

Watrous \cite[Section 5]{WatrousFOCS00} showed that Simple Group is in $\coQMA$ by using a quantum proof for Group Non-Membership. From \cref{th:GNM}, we can conclude that Simple Group is in $\coQCMA$. \cref{th:p1}, \cref{th:p2} and \cref{prop:Ree} together imply that Simple Group is in $\QCMA$.
\end{proof}

\begin{corollary}[repeated]
Intersection, Centralizer and Maximal Normal Subgroup are in the complexity class $\coQCMA$.   
\end{corollary}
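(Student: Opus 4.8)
The plan is to show that the \emph{complement} of each of these three problems lies in $\QCMA$. Watrous \cite[Section~5]{WatrousFOCS00} already proved that these problems are in $\coQMA$ by reducing their complements to Group Non-Membership; the only new ingredient here is to replace his $\QMA$ protocol for Group Non-Membership by the $\QCMA$ protocol guaranteed by \cref{th:GNM}. Throughout I will freely use that membership of an element in a black-box subgroup given by generators is certifiable in $\NP$ (\cref{def:mem-cert}), that normality of a subgroup given by generators is certifiable in $\NP$ (\cref{def:norm-cert}), and that Group Non-Membership is in $\QCMA$; since only a constant number of invocations of each sub-protocol is ever needed, amplification keeps the overall error bounded.

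For Intersection, write $A=\gen{g_1,\dots,g_k}$, $B=\gen{h_1,\dots,h_\ell}$ and $C=\gen{a_1,\dots,a_t}$. I would note that $C\neq A\cap B$ precisely when either $C\not\subseteq A\cap B$ or $A\cap B\not\subseteq C$. In the first case some $a_i$ lies outside $A$ or outside $B$, so Merlin names that index and Arthur runs the Group Non-Membership protocol. In the second case there is an element $x\in A\cap B$ with $x\notin C$; Merlin sends $x$ together with membership certificates for $x\in A$ and $x\in B$, and Arthur checks these certificates and then runs Group Non-Membership to confirm $x\notin C$. Merlin first announces which alternative he is using, so the verifier is a single $\QCMA$ protocol for the complement, whence Intersection is in $\coQCMA$.

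For Centralizer, with $A$, $B$ as above and $a$ the given element, $B\neq C_A(a)$ precisely when either $B\not\subseteq C_A(a)$ --- some $h_i\notin A$ (Group Non-Membership) or some $h_i$ fails to commute with $a$ (a direct black-box check) --- or $C_A(a)\not\subseteq B$, witnessed by an element $x\in A$ with $xa=ax$ and $x\notin B$ (Merlin supplies $x$ with a membership certificate for $x\in A$, Arthur checks $xa=ax$ and runs Group Non-Membership on $x\notin B$). For Maximal Normal Subgroup I would use the standard fact that $B$ is a maximal normal subgroup of $A$ iff $B\le A$, $B\ns A$ and $A/B$ is simple. Hence $B$ fails to be a maximal normal subgroup of $A$ iff one of the following is witnessed: (a) some $h_i\notin A$ (Group Non-Membership); (b) some conjugate $g_j h_i g_j^{-1}$ lies outside $B$ (Group Non-Membership), which rules out $B\ns A$; (c) $A\le B$, witnessed by membership certificates $g_j\in B$ for all $j$ (so $B$ is not a proper subgroup of $A$); or (d) there is a subgroup $N$ with $B\lneq N\lneq A$ and $N\ns A$, witnessed by generators $n_1,\dots,n_p$ of $N$, membership certificates $n_i\in A$, a normality certificate for $N\ns A$, membership certificates $h_i\in N$ (so $B\le N$), an index with $n_i\notin B$ and an index with $g_j\notin N$ (the last two checked by Group Non-Membership, forcing $B\neq N$ and $N\neq A$). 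Merlin announces which of (a)--(d) applies and provides the matching data, giving a $\QCMA$ protocol for the complement; the correspondence theorem shows this covers exactly the complement.

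The main obstacle is not conceptual depth but getting the case analyses right: one must check that each ``not equal'' condition decomposes into finitely many alternatives, each reducible either to Group Non-Membership or to plain membership/normality certificates. The two points that need care are (i) that normality of $B$ in $A$ can be verified by inspecting only the conjugates of generators --- valid because in a finite group every element of $A=\gen{g_1,\dots,g_k}$ is a product of the $g_j$ themselves --- so that its failure is always exhibited by a single offending conjugate; and (ii) that the intermediate normal subgroup $N$ in case (d) can itself be handed to Arthur by a generating set, so that all of $B\le N$, $N\le A$, $N\ns A$, $B\neq N$ and $N\neq A$ become certifiable by the tools already developed. Once these points are in place, the statement follows by substituting the $\QCMA$ protocol of \cref{th:GNM} for the $\QMA$ protocol of \cite{WatrousFOCS00}.
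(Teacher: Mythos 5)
Your proposal is correct and takes essentially the same approach as the paper: both observe that Watrous's $\coQMA$ protocols for these three problems reduce their complements to Group Non-Membership plus classically certifiable conditions, and both simply substitute the new $\QCMA$ protocol for Group Non-Membership in place of Watrous's $\QMA$ protocol. The paper cites Watrous for the case analyses while you reconstruct them explicitly, but the underlying argument is identical.
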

\begin{proof}
Watrous \cite{WatrousFOCS00} showed that Intersection, Centralizer and Maximal Normal Subgroup are in $\coQMA$ by using a quantum proof for Group Non-Membership along with classical proofs for various other properties (see,  \cite[Section 5]{WatrousFOCS00}). By \cref{th:GNM}, Group Non-Membership is in $\QCMA$. This implies that Intersection, Centralizer and Maximal Normal Subgroup are in $\coQCMA$.
\end{proof}

\addtocounter{section}{+5}
\addtocounter{corollary}{-2}
\section*{Acknowledgments}
The authors are grateful to Scott Aaronson, Michael Levet and James Wilson for helpful correspondence, and to Hirotada Kobayashi for many fruitful discussions. The authors are supported by JSPS KAKENHI 22H00522, 24H00071, 24K22293, MEXT Q-LEAP JPMXS0120319794, JST CREST JPMJCR24I4 and JST ASPIRE JPMJAP2302.
\bibliographystyle{plain}
\bibliography{References}

\providecommand{\noopsort}[1]{}
\begin{thebibliography}{10}

\bibitem{Aaronson16}
Scott Aaronson.
\newblock The complexity of quantum states and transformations: From quantum money to black holes.
\newblock ArXiv: 1607.05256, 2016.

\bibitem{Aaronson+ToC07}
Scott Aaronson and Greg Kuperberg.
\newblock Quantum versus classical proofs and advice.
\newblock {\em Theory of Computing}, 3(1):129--157, 2007.
\newblock ArXiv:quant-ph/0604056, 2006.

\bibitem{Aharonov+02}
Dorit Aharonov and Tomer Naveh.
\newblock Quantum {NP} - a survey.
\newblock ArXiv: quant-ph/0210077, 2002.

\bibitem{Baarnhielm14}
Henrik B\"a\"arnhielm.
\newblock Recognising the small {Ree} groups in their natural representations.
\newblock {\em Journal of Algebra}, 416:139--166, 2014.

\bibitem{Babai85}
L{\'{a}}szl{\'{o}} Babai.
\newblock Trading group theory for randomness.
\newblock In {\em Proceedings of the 17th Annual {ACM} Symposium on Theory of Computing (STOC 1985)}, pages 421--429, 1985.

\bibitem{BabaiSTOC91}
L{\'{a}}szl{\'{o}} Babai.
\newblock Local expansion of vertex-transitive graphs and random generation in finite groups.
\newblock In {\em Proceedings of the 23rd Annual {ACM} Symposium on Theory of Computing (STOC 1991)}, pages 164--174, 1991.

\bibitem{Babai92}
L{\'a}szl{\'o} Babai.
\newblock Bounded round interactive proofs in finite groups.
\newblock {\em SIAM Journal on Discrete Mathematics}, 5(1):88--111, 1992.

\bibitem{Babai+99}
L{\'a}szl{\'o} Babai and Robert Beals.
\newblock A polynomial-time theory of black-box groups {I}.
\newblock {\em London Mathematical Society Lecture Note Series}, 260:30--64, 1999.

\bibitem{Babai+STOC09}
L{\'{a}}szl{\'{o}} Babai, Robert Beals, and {\'{A}}kos Seress.
\newblock Polynomial-time theory of matrix groups.
\newblock In {\em Proceedings of the 41st Annual {ACM} Symposium on Theory of Computing (STOC 2009)}, pages 55--64, 2009.

\bibitem{Babai+97}
L{\'{a}}szl{\'{o}} Babai, Albert~J. Goodman, William~M. Kantor, Eugene~M. Luks, and P\'eter~P. P\'alfy.
\newblock Short presentations for finite groups.
\newblock {\em Journal of Algebra}, 194:79--112, 1997.

\bibitem{BaLS}
L{\'{a}}szl{\'{o}} Babai, Eugene~M. Luks, and {\'{A}}kos Seress.
\newblock Computing composition series in primitive groups.
\newblock {\em DIMACS Series in Discrete Mathematics and Theoretical Computer Science}, 11:1--16, 1993.

\bibitem{Babai+FOCS84}
L{\'a}szl{\'o} Babai and Endre Szemer{\'e}di.
\newblock On the complexity of matrix group problems {I}.
\newblock In {\em Proceedings of the 25th Annual Symposium on Foundations of Computer Science (FOCS 1984)}, pages 229--240, 1984.

\bibitem{Ben-David+24}
Shalev Ben-David and Srijita Kundu.
\newblock Oracle separation of {QMA} and {QCMA} with bounded adaptivity.
\newblock In {\em Proceedings of the 51st International Colloquium on Automata, Languages, and Programming (ICALP 2024)}, pages 21:1--21:18, 2024.

\bibitem{BenOr+08}
Michael Ben{-}Or, Don Coppersmith, Michael Luby, and Ronitt Rubinfeld.
\newblock Non-abelian homomorphism testing, and distributions close to their self-convolutions.
\newblock {\em Random Structures \& Algorithms}, 32(1):49--70, 2008.

\bibitem{Cheung+01}
Kevin K.~H Cheung and Michele Mosca.
\newblock Decomposing finite abelian groups.
\newblock {\em Quantum Information and Computation}, 1(3):26--32, 2001.

\bibitem{Dixon+96}
John~D. Dixon and Brian Mortimer.
\newblock {\em Permutation Groups}, volume 163 of {\em Graduate Texts in Mathematics}.
\newblock Springer-Verlag, New York, 1996.

\bibitem{Dummit+04}
David~S. Dummit and Richard~M. Foote.
\newblock {\em Abstract Algebra, Third Edition}.
\newblock John Wiley and Sons, Inc., 2004.

\bibitem{Ettinger+04}
Mark Ettinger, Peter H{\o}yer, and Emanuel Knill.
\newblock The quantum query complexity of the hidden subgroup problem is polynomial.
\newblock {\em Information Processing Letters}, 91(1):43--48, 2004.

\bibitem{Fefferman+18}
Bill Fefferman and Shelby Kimmel.
\newblock {Quantum vs. Classical Proofs and Subset Verification}.
\newblock In {\em Proceedings of the 43rd International Symposium on Mathematical Foundations of Computer Science (MFCS 2018)}, pages 22:1--22:23, 2018.

\bibitem{Grilo+16}
Alex~B. Grilo, Iordanis Kerenidis, and Jamie Sikora.
\newblock {QMA} with subset state witnesses.
\newblock {\em Chicago Journal of Theoretical Computer Science}, pages 4:1--4:17, 2016.

\bibitem{GMPS+2015}
Simon Guest, Joy Morris, Cheryl~E. Praeger, and Pablo Spiga.
\newblock On the maximum orders of elements of finite almost simple groups and primitive permutation groups.
\newblock {\em Transactions of the American Mathematical Society}, 367(11):7665--7694, 2015.

\bibitem{guralnick+00}
Robert~M. Guralnick and William~M. Kantor.
\newblock Probabilistic generation of finite simple groups.
\newblock {\em Journal of Algebra}, 234(2):743--792, 2000.

\bibitem{Derek-Holt-MOF}
Derek Holt.
\newblock Composition factors of primitive components.
\newblock MathOverflow, answer.
\newblock \url{https://mathoverflow.net/q/265254} (version: 2017-03-22).

\bibitem{Hulpke+01}
Alexander Hulpke and {\'A}kos Seress.
\newblock Short presentations for three-dimensional unitary groups.
\newblock {\em Journal of Algebra}, 245(2):719--729, 2001.

\bibitem{Ivanyos+03}
G{\'{a}}bor Ivanyos, Fr{\'{e}}d{\'{e}}ric Magniez, and Miklos Santha.
\newblock Efficient quantum algorithms for some instances of the non-abelian hidden subgroup problem.
\newblock {\em International Journal of Foundations of Computer Science}, 14(5):723--740, 2003.

\bibitem{Jordan+12}
Stephen~P. Jordan, Hirotada Kobayashi, Daniel Nagaj, and Harumichi Nishimura.
\newblock Achieving perfect completeness in classical-witness quantum {Merlin}-{Arthur} proof systems.
\newblock {\em Quantum Information \& Computation}, 12(5-6):461--471, 2012.

\bibitem{Kemper+01}
Gregor Kemper, Frank Lübeck, and Kay Magaard.
\newblock Matrix generators for the {Ree} groups $^2{G}_2(q)$.
\newblock {\em Communications in Algebra}, 29(1):407--413, 2001.

\bibitem{Kitaev99}
Alexei~Yu. Kitaev.
\newblock Quantum {$\NP$}.
\newblock Talk at the Second Workshop on Algorithms in Quantum Information Processing, DePaul University, January 1999.

\bibitem{knill1996}
Emanuel Knill.
\newblock Quantum randomness and nondeterminism.
\newblock ArXiv: quant-ph/9610012, 1996.

\bibitem{LeGall+MFCS18}
Fran\c{c}ois Le~Gall, Tomoyuki Morimae, Harumichi Nishimura, and Yuki Takeuchi.
\newblock Interactive proofs with polynomial-time quantum prover for computing the order of solvable groups.
\newblock In {\em Proceedings of the 43rd International Symposium on Mathematical Foundations of Computer Science (MFCS 2018)}, pages 26:1--26:13, 2018.

\bibitem{LeGallSTACS10}
Fran{\c{c}}ois {Le Gall}.
\newblock An efficient quantum algorithm for some instances of the group isomorphism problem.
\newblock In {\em Proceedings of the 27th International Symposium on Theoretical Aspects of Computer Science ({STACS} 2010)}, pages 549--560, 2010.

\bibitem{Levet}
Michael Levet.
\newblock Personal communication. December 2024.

\bibitem{li+ITCS24}
Xingjian Li, Qipeng Liu, Angelos Pelecanos, and Takashi Yamakawa.
\newblock Classical vs quantum advice and proofs under classically-accessible oracle.
\newblock In {\em Proceedings of the 15th Innovations in Theoretical Computer Science Conference (ITCS 2024)}, pages 72:1--72:19, 2024.

\bibitem{Liu+24}
Jiahui Liu, Saachi Mutreja, and Henry Yuen.
\newblock {QMA} vs. {QCMA} and pseudorandomness.
\newblock In {\em Proceedings of the 57th Annual {ACM} Symposium on Theory of Computing (STOC 2025)}, pages 1520--1531, 2025.

\bibitem{Luks1997}
Eugene~M. Luks.
\newblock Computing the composition factors of a permutation group in polynomial time.
\newblock {\em Combinatorica}, 7(1):87--99, 1987.

\bibitem{Luks1993}
Eugene~M. Luks.
\newblock Permutation groups and polynomial-time computation.
\newblock {\em DIMACS Series in Discrete Mathematics and Theoretical Computer Science}, 11:139--175, 1993.

\bibitem{Natarajan+24}
Anand Natarajan and Chinmay Nirkhe.
\newblock A distribution testing oracle separation between {QMA} and {QCMA}.
\newblock {\em {Quantum}}, 8:1377, 2024.

\bibitem{ODonnel15}
Ryan O'Donnell.
\newblock Lectures notes on quantum computation and information.
\newblock \url{https://www.cs.cmu.edu/~odonnell/quantum15/}, 2015.

\bibitem{Rotman'02}
Joseph~J. Rotman.
\newblock {\em Advanced Modern Algebra}.
\newblock Prentice Hall, Inc., Upper Saddle River, NJ, 2002.

\bibitem{Shor97}
Peter~W. Shor.
\newblock Polynomial-time algorithms for prime factorization and discrete logarithms on a quantum computer.
\newblock {\em SIAM Journal on Computing}, 26(5):1484--1509, 1997.

\bibitem{TCS-068}
Thomas Vidick and John Watrous.
\newblock Quantum proofs.
\newblock {\em Foundations and Trends® in Theoretical Computer Science}, 11(1-2):1--215, 2016.

\bibitem{WatrousFOCS00}
John Watrous.
\newblock Succinct quantum proofs for properties of finite groups.
\newblock In {\em Proceedings of the 41st Annual Symposium on Foundations of Computer Science (FOCS 2000)}, pages 537--546, 2000.

\bibitem{WatrousSTOC01}
John Watrous.
\newblock Quantum algorithms for solvable groups.
\newblock In {\em Proceedings on 33rd Annual {ACM} Symposium on Theory of Computing (STOC 2001)}, pages 60--67, 2001.

\bibitem{Wilson}
James Wilson.
\newblock Personal communication. December 2024.

\bibitem{Wilson09}
Robert Wilson.
\newblock {\em The Finite Simple Groups}.
\newblock Springer, 2009.

\bibitem{deWolf19}
Ronald {\noopsort{Wolf}}{de Wolf}.
\newblock Quantum computing: Lecture notes.
\newblock ArXiv: 1907.09415, 2019.

\bibitem{Zhandry24}
Mark Zhandry.
\newblock Toward separating {QMA} from {QCMA} with a classical oracle.
\newblock In {\em Proceedings of the 16th Innovations in Theoretical Computer Science Conference (ITCS 2025)}, pages 95:1--95:19, 2025.

\end{thebibliography}
\end{document}